\crefname{algorithm}{algorithm}{algorithms}
\Crefname{algorithm}{Algorithm}{Algorithms}
\newcommand*{\algotitle}[2]{%
  \stepcounter{algocf}%
  \hypertarget{algocf.title.\theHalgocf}{}%
  \NR@gettitle{#1}%
  \label{#2}%
  \addtocounter{algocf}{-1}%
}
\pgfplotsset{compat=default}
\newtheorem{theorem}{Theorem}[section]
\newtheorem{corollary}[theorem]{Corollary}
\newtheorem{lemma}[theorem]{Lemma}
\newtheorem{definition}[theorem]{Definition}
\newtheorem{remark}[theorem]{Remark}
\newcolumntype{L}{>{$}l<{$}}
\newcommand{\0}{\mathbf{0}}
\newcommand{\av}{\mathbf{a}}
\newcommand{\bv}{\mathbf{b}}
\newcommand{\z}{\mathbf{z}}
\newcommand{\p}{\mathbf{p}}
\newcommand{\y}{\mathbf{y}}
\newcommand{\x}{\mathbf{x}}
\newcommand{\aj}{\mathbf{a}}
\newcommand{\g}{\mathbf{g}}
\newcommand{\vv}{\mathbf{v}}
\newcommand{\cj}{\mathbf{c}}
\newcommand{\dist}{\mathrm{dist}}
\newcommand{\vol}{\mathrm{Vol}}
\newcommand{\supp}{\mathrm{supp}}
\newcommand{\sign}{\mathrm{sign}}
\newcommand{\tube}{\mathrm{tube}}
\newcommand{\conv}{\mathrm{conv}}
\newcommand{\id}{\mathrm{Id}}
\newcommand{\Span}{\mathrm{span}}
\newcommand{\reach}{\mathrm{reach}}
\newcommand{\dd}{\mathrm{d}}
\newcommand{\diam}{\mathrm{diam}}
\newcommand{\B}{\mathcal{B}}
\newcommand{\C}{\mathcal{C}}
\newcommand{\Ch}{\widehat{\mathcal{C}}}
\newcommand{\N}{\mathcal{N}}
\newcommand{\M}{\mathcal{M}}
\newcommand{\Mh}{\widehat{\M}}
\newcommand{\Mr}{\mathcal{M}^{rel}}
\renewcommand{\O}{\mathcal{O}}
\renewcommand{\P}{\mathbb{P}}
\newcommand{\Ph}{\widehat{\mathbb{P}}}
\newcommand{\ch}{\widehat{\bf c}}
\newcommand{\Vh}{\widehat{V}}
\newcommand{\PP}{\mathcal{P}}
\newcommand{\St}{\mathbb{S}}
\newcommand{\R}{\mathbb{R}}
\newcommand{\tab}{\;\;\;\;}
\newcommand{\blue}[1]{#1}
\newcommand{\eps}{\varepsilon}
\newcommand{\eins}{\mathbbm{1}}
\newcommand{\X}{\mathcal{X}}
\newcommand{\Vol}{\mathrm{Vol}}
\DeclareMathOperator*{\argmin}{arg\,min}
\DeclareMathOperator*{\argmax}{arg\,max}
\newcommand{\E}[2][]{\textnormal{\;$\mathbb{E}$}_{#1}\!\left[#2\right]}
\newcommand{\Ejk}[3][]{\textnormal{\;$\mathbb{E}_{#1}$}_{#2}\!\left[#3\right]}
\renewcommand{\Pr}[2][]{\textnormal{\;$\mathrm{Pr}$}_{#1}\!\left[#2\right]}
\title{On Recovery Guarantees for \\ One-Bit Compressed Sensing on Manifolds}
\author[1]{M. A. Iwen\thanks{supported in part by NSF DMS-1416752.}} 
\author[2]{Felix Krahmer${}^\dagger$
}
\author[3]{Sara Krause-Solberg\thanks{supported by the Deutsche Forschungsgemeinschaft (DFG) under Grant SFB Transregio 109 and Emmy-Noether junior research group KR 4512/1-1}}
\author[4]{Johannes Maly\thanks{supported by the Deutsche Forschungsgemeinschaft (DFG) under Grant SPP 1798.}}
\affil[1]{Department of Mathematics, and Department of Computational Mathematics, Science, and Engineering (CMSE), Michigan State University, East Lansing, MI, 48824, USA }
\affil[2]{Department of Mathematics, Technical University of Munich, 85748 Garching, Germany}
\affil[3]{HIP Helmholtz Imaging Platform, DESY, 22607 Hamburg, Germany}
\affil[4]{Department of Mathematics, RWTH Aachen University, 52062 Aachen, Germany}
\date{\today}
\begin{document}

\maketitle
\begin{abstract}
	This paper studies the problem of recovering a signal from one-bit compressed sensing measurements under a manifold model; that is, assuming that the signal lies on or near a manifold of low intrinsic dimension. We provide a convex recovery method based on the Geometric Multi-Resolution Analysis and prove recovery guarantees with a near-optimal scaling in the intrinsic manifold dimension. Our method is the first tractable algorithm with such guarantees for this setting. The results are complemented by numerical experiments confirming the validity of our approach.
\end{abstract}

\section{Introduction}

Linear inverse problems are ubiquitous in many applications in science and engineering. Starting with the seminal works of Cand\`es, Romberg and Tao \cite{Candes2006} as well as Donoho \cite{Donoho2006}, a new paradigm in their analysis became an active area of research in the last decades. Namely, rather than considering the linear model as entirely given by the application, one seeks to actively choose remaining degrees of freedom, often using a randomized strategy, to make the problem less ill-posed. This approach gave rise to a number of recovery guarantees for random linear measurement models under structural data assumptions. The first works considered the recovery of sparse signals; subsequent works analyzed more general union-of-subspaces models \cite{Eldar2009} and the recovery of low rank matrices \cite{Recht2010}, a model that can also be employed when studying phaseless reconstruction problems \cite{Candes2013} or bilinear inverse problems \cite{Ahmed2014}.

Another line of works following this approach studies manifold models. That is, one assumes that the structural constraints are given by (unions of finitely many)  manifolds. While this model is considerably richer than say sparsity, its rather general formulation makes a unified study, at least in some cases, somewhat more involved. The first work to study random linear projections of smooth manifold was \cite{Baraniuk2006}, the authors show that Gaussian linear dimension reductions typically preserve the geometric structure. In \cite{Iwen2013}, these results are refined and complemented by a recovery algorithm, which is based on the concept of the Geometric Multi-Resolution Analysis as introduced in \cite{Allard2012} (cf.~Section~\ref{subsec:GMRA} below). These results were again substantially improved in \cite{Eftekhari2015}; these latest results no longer explicitly depend on the ambient dimension.

Arguably, working with manifold models is better adapted to real world data than sparsity and hence may allow one to work with smaller embedding dimensions. For that, however, other practical issues need to be considered as well. In particular, to our knowledge there are almost no works to date that study the effects of quantization, i.e., representing the measurements using only a finite number of bits (the only remotely connected work that we are aware of is \cite{Mondal2007}, but this paper does not consider dimension reduction and exclusively focuses on the special case of Grassmann manifolds). 

For sparse signal models, in contrast, quantization of subsampled random measurements is an active area of research. On the one hand, a number of works considered the scenario of memoryless scalar quantization, that is, each of the measurement is quantized independently. In particular, the special case of representing each measurement only by a single bit, its sign,  -- often referred to as {\em one-bit compressed sensing} -- has received considerable attention. In \cite{Jacques2013}, it was shown that one-bit compressed sensing with Gaussian measurements approximately preserves the geometry, and a heuristic recovery scheme was presented. In \cite{Plan2013LP}, recovery guarantees for a linear method, again with Gaussian measurements, were derived. Subsequently, these results were generalized to subgaussian measurements \cite{Ai2014}, and partial random circulant measurements \cite{Dirksen2017}. In \cite{Plan2013}, the authors provided a recovery procedure for noisy one-bit Gaussian measurements which provably works on more general signal sets (essentially arbitrary subsets of the euclidean ball). This procedure, however, becomes NP-hard as soon as the signal set is non-convex, a common property of manifolds.

Another line of works studied so-called feedback quantizers, that is, the bit sequence encoding the measurements is computed using a recursive procedure. These works adapt the Sigma-Delta modulation approach originally introduced in the context of bandlimited signals \cite{G87, NST96} and later generalized to frame expansions \cite{BPY06, BPY06b} to the sparse recovery framework. A first such approach was introduced and analyzed for Gaussian measurements in \cite{GLPSY13}; subsequent works generalize the results to subgaussian random measurements \cite{KSY13,FK14}. Recovery guarantees for a more stable reconstruction scheme based on convex optimization were proved for subgaussian measurements in \cite{SAAB2018123} and extended to partial random circulant matrices in \cite{FKS17}. For more details on the mathematical analysis available for different scenarios, we refer the reader to the overview chapter \cite{BJKS15}. \\
In this paper we focus on the MSQ approach and leave the study of Sigma-Delta quantizers under manifold model assumptions for future work.

\subsection{Contribution}

We provide the first tractable one-bit compressed sensing algorithm for signals which are well approximated by manifold models. It is simple to implement and comes with error bounds that basically match the state-of-the-art recovery guarantees in \cite{Plan2013}. In contrast to the minimization problem introduced in \cite{Plan2013} which does not come with a minimization algorithm, our approach always admits a convex formulation and hence allows for tractable recovery. Our approach is based on the Geometric Multi-Resolution Analysis (GMRA) introduced in \cite{Allard2012}, and hence combines the approaches of \cite{Iwen2013} with the general results for one-bit quantized linear measurements provided in \cite{Plan2013},\cite{Plan2014}.

\subsection{Outline}

\paragraph{} We begin by a detailed description of our problem in Section \ref{ProblemFormulation} and fix notation for the rest of the paper. The section also includes a complete axiomatic definition of GMRA. Section \ref{MainResults} states our main results. The proofs can be found in Section \ref{Proofs}. In Section \ref{Numerics} we present some numerical experiments testing the recovery in practice and conclude with Section \ref{Discussion}. Technical parts of the proofs as well as adaption of the results to GMRAs from random samples are deferred to the Appendix.

\section{Problem Formulation, Notation, and Setup} 
\label{ProblemFormulation}

\paragraph{} The problem we address is the following. We consider a \added[id=josa]{given union of} low-dimensional \replaced[id=josa]{manifolds}{ manifold} (i.e., signal class) $\M$ of intrinsic dimension $d$ \replaced[id=josa]{that is a subset of}{ which lives on} the unit sphere $\St^{D-1}$ of a higher dimensional space $\R^D$, $d \ll D$.  Furthermore, we image that we do not know $\M$ perfectly, and so instead we only have approximate information about $\M$ represented in terms of a structured dictionary model $\mathcal{D}$ for the manifold.  Our goal is now to recover an unknown signal $\x \in \M$ from $m$ one-bit measurements
\begin{align} \label{one-bit-measurements}
\y = \sign(A\x),
\end{align}
where $A \in \R^{m\times D}$ has Gaussian i.i.d.\ entries of variance $1/\sqrt{m}$, using as few measurements, $m$, as possible. \added[id=josa]{Each single measurement $\sign(\langle \aj_i,\x \rangle)$ can be interpreted as the random hyperplane $\{ \z \in \R^D \colon \langle \aj_i,\z \rangle = 0\}$ tessellating the sphere (cf.\ Figure \ref{Z6}).} In order to succeed using only $m \ll D$ such one-bit measurements we will use the fact that \deleted[id=josa]{$\x$ is approximately sparse in} our (highly coherent, but structured) dictionary $\mathcal{D}$ for $\M$ \added[id=josa]{provides structural constraints for the signal $\x$ to be recovered}.  \replaced[id=josa]{Thus the setup connects to recent generalizations of the quantized compressed sensing problem \cite{Plan2013} which we will exploit in our proof.}{ Note that this now becomes a quantized compressive sensing problem (see, e.g., \cite{SAAB2018123}) for signals which are nearly sparse in terms of the manifold dictionary $\mathcal{D}$ for $\M$.}

\begin{figure}
	\centering
	\begin{subfigure}{0.4\textwidth}
		\centering
		\def\svgwidth{0.5\linewidth}
		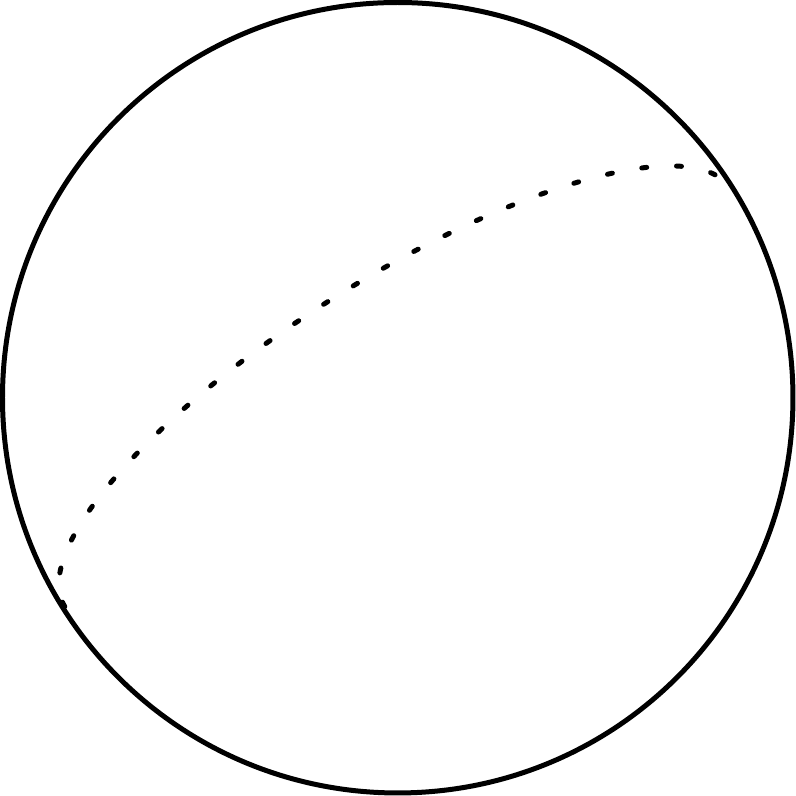
		\subcaption{Tessellation of the sphere by random hyperplanes.}\label{Z6}
	\end{subfigure}
	\begin{subfigure}{0.4\textwidth}
		\centering
		\def\svgwidth{0.5\linewidth}
		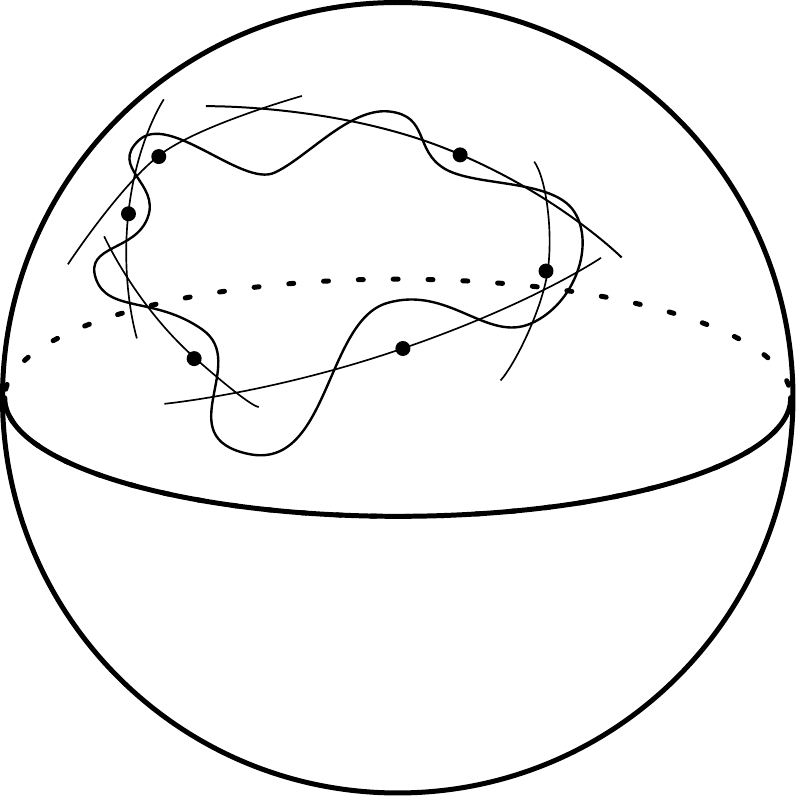
		\subcaption{Submanifold $\M$ of $\St^{D-1}$ and one level of GMRA.}\label{Z5}
	\end{subfigure}
	\caption{One-bit measurements and GMRA.}\label{Z5+Z6}
\end{figure}

\subsection{GMRA Approximations to $\M \subset \R^D$, and Two Notions of Complexity}\label{subsec:GMRA}

Clearly, the solution to this problem depends on what kind of representation, $\mathcal{D}$, of the manifold $\M$ we have access to. In this paper we consider the scenario where the dictionary for the manifold is provided by a Geometric Multi Resolution Analysis (GMRA) approximation to $\M$ \cite{Allard2012} (cf.\ Figure \ref{Z5}). \replaced[id=josa]{We will mainly work with}{ In particular, the type of} GMRA approximations of $\M$ \deleted[id=josa]{ we will utilize herein is } 
characterized by the axiomatic Definition \ref{GMRA} below, \added[id=josa]{but we also consider the case of a GMRA approximation based on random samples from $\M$ (see Section \ref{MainResults} and Appendices~\ref{EmpiricalGMRA}~and~\ref{EmpiricalGMRAproof} for more details).} 

As one might expect, the complexity and structure of the GMRA-based dictionary for $\M$ will depend on the complexity of $\M$ itself. In this paper we will \replaced[id=josa]{work with}{utilize} two different measures of a set's complexity:  $(i)$ the set's {\it Gaussian width}, and $(ii)$ the notion of the {\it reach} of the set \cite{Federer1959}.  The Gaussian width of a set $\M \subset \R^D$ is defined by 
$$w(\M) := \E {\sup_{\z \in \M} \langle \g,\z \rangle}$$ where $\g \sim \mathcal{N}(\0,I_D)$. Properties of this quantity are discussed in Section \ref{Toolbox}.  The notion of reach is, in contrast, more obviously linked to the geometry of $\M$ and requires a couple of additional definitions before it can be defined formally. 

The first of these definitions is the {\it tube of radius $r$} around a given subset $\M \subset \R^D$, which is the $D$-dimensional superset of $\M$ consisting of all the points in $\R^D$ that are within Euclidean distance $r \geq 0$ of $\M \subset \R^D$,
$$\tube_r(\M) := \left\{ \x \in \R^D ~\colon~ \inf_{\y \in \M} \| \x - \y \|_2 \leq r \right\}.$$
The domain of the nearest neighbor projection onto the closure of $\M$ is also needed, and is denoted by 
$$D(\M) := \left\{\x\in\mathbb \R^D ~\colon~\exists! \y \in \overline{\M} \text{ such that }\|\x-\y\|_2=\inf_{\z\in\M}\| \x - \z \|_2 \right\}.$$
Finally, the reach of the set $\M \subset \R^D$ is simply defined to be the \replaced[id=josa]{smallest}{ largest} distance $r$ around $\M$ for which the nearest neighbor projection onto the closure of $\M$ is \added[id=josa]{no longer} well defined.  \replaced[id=josa]{Equivalently,}{ That is,} 
$$\reach(\M):=\sup\{r\geq0 ~\colon~ \tube_r(\M) \subseteq D(\M)\}.$$ 
Given this definition one can see, e.g., that the reach of any $d < D$ dimensional sphere of radius $r$ in $\R^D$ is always $r$, and that the reach of any $d \leq D$ dimensional convex subset of $\R^D$ is always $\infty$.

\begin{definition}[GMRA  Approximation to $\M$, \cite{Iwen2013}]\label{GMRA}
	Let $J \in \mathbb{N}$ and $K_0,K_1,...,K_J \in \mathbb{N}$. \replaced[id=josa]{Then a \emph{Geometric Multi Resolution Analysis (GMRA) Approximation} of $\M$ is a collection $\{ (\C_j,\PP_j) \}$, $j \in [J] := \{0,...,J\}$, of sets }{ For each $j \in [J] := \{0,...,J\}$ we assume to have sets} $\C_j = \{ \cj_{j,k} \}_{k = 1}^{K_j} \subset \R^D$ of centers and
	\begin{align*}
	\PP_j = \left\{ \P_{j,k}: \R^D \rightarrow \R^D ~\big\vert~ k \in [K_j] \right\}
	\end{align*}
	of affine projectors which approximate $\M$ at scale $j$, \replaced[id=josa]{such that }{ These form a \emph{Geometric Multi Resolution Analysis (GMRA) Approximation} of $\M$ if} the following assumptions \eqref{first}-\eqref{third} hold.
	\begin{enumerate}[label=(\arabic*),ref={\arabic*}]
		\item\label{first}\textbf{Affine Projections:} Every $\P_{j,k} \in \PP_j$ has both an associated center $\cj_{j,k} \in \C_j$ and an orthogonal matrix $\Phi_{j,k} \in \R^{d \times D}$, such that
		\begin{align*}
		\P_{j,k}(\z) = \Phi_{j,k}^T \Phi_{j,k} (\z-\cj_{j,k}) + \cj_{j,k},
		\end{align*}
		i.e., $\P_{j,k}$ is the projector onto some affine $d$-dimensional linear subspace $P_{j,k}$ containing $\cj_{j,k}$.
		\item\label{second}\textbf{Dyadic Structure:} The number of centers at each level is bounded by $|\C_j| = K_j \leq C_{\C}2^{dj}$ for \added[id=josa]{an absolute constant} $C_{\C} \geq 1$.  
		There exist $C_1 > 0$ and $C_2 \in (0,1]$, such that following conditions are satisfied:
		\begin{enumerate}[label=(\alph*), ref={\theenumi\alph*}]
			\item \label{GMRA2a} $K_j \le K_{j+1}$, for all $j \in [J-1]$.
			\item \label{GMRA2b} $\Vert \cj_{j,k_1} - \cj_{j,k_2} \Vert_2 > C_1 \cdot 2^{-j}$, for all $j \in [J]$ and $k_1 \neq k_2 \in [K_j]$.
			\item \label{GMRA2c} For each $j \in [J]\backslash \{ 0 \}$ there exists a parent function $p_j : [K_j] \rightarrow [K_{j-1}]$ with
			\begin{align*}
			\Vert &\cj_{j,k} - \cj_{j-1,p_j(k)} \Vert_2 \leq C_2 \cdot \min_{k' \in [K_{j-1}] \backslash \{ p_j(k) \}} \Vert \cj_{j,k} - \cj_{j-1,k'} \Vert_2.
			\end{align*}
		\end{enumerate}
		\item\label{third} \textbf{Multiscale Approximation:} The projectors in $\PP_j$ approximate $\M$ at scale $j$, i.e.,  when $\M$ is sufficiently smooth the affine spaces $P_{j,k}$ locally approximate $\M$ pointwise with error $\O \left( 2^{-2j} \right)$. More precisely:
		\begin{enumerate}[label=(\alph*), ref={\theenumi\alph*}]
			\item \label{tube} There exists $j_0 \in [J-1]$, such that $\cj_{j,k} \in \tube_{C_1 \cdot 2^{-j-2}} (\M)$, for all $j > j_0 \geq 1$ and $k \in [K_j]$. 
			\item \label{3b} For each $j \in [J]$ and $\z \in \R^D$ let $\cj_{j, k_j(\z)}$ be one of the centers closest to $\z$, i.e.,
			\begin{align} \label{copt}
			k_j(\z) \in \argmin_{k \in [K_j]} \Vert \z - \cj_{j,k} \Vert_2.
			\end{align}
			Then, for each $\z \in \M$ there exists a constant $C_\z > 0$ such that
			\begin{align*}
			\Vert \z - \P_{j, k_j(\z)} (\z) \Vert_2 \le C_\z \cdot 2^{-2j},
			\end{align*}
			for all $j \in [J]$. Moreover, for each $\z \in \M$ there exists $\tilde{C}_\z > 0$ such that
			\begin{align*}
			\Vert \z - \P_{j,k'}(\z) \Vert_2 \le \tilde{C}_\z \cdot 2^{-j},
			\end{align*}
			for all $j \in [J]$ and $k' \in [K_j]$ satisfying
			\begin{align*}
			\Vert \z - \cj_{j,k'} \Vert_2 \le 16 \cdot \max \left\{ \Vert \z - \cj_{j,k_j(\z)} \Vert_2, C_1 \cdot 2^{-j-1} \right\}.
			\end{align*}
		\end{enumerate}
	\end{enumerate}
\end{definition}

\begin{remark} 
By property \eqref{first} GMRA approximation represents $\M$ as a combination of several anchor points (the centers $\cj_{j,k}$) and corresponding low dimensional affine spaces $P_{j,k}$. The levels $j$ control the accuracy of the approximation. The centers are organized in a tree-like structure as stated in property \eqref{second}. Property \eqref{third} then characterizes approximation criteria to be fulfilled on different refinement levels. Note that centers do not have to lie on $\M$ (compare Figure \ref{Z5}) but their distance to $\M$ is controlled by property \eqref{tube}. \\
\blue{If the centers form a maximal $2^{-j}$ packing of a smooth manifold $\M$ at each scale $j$ or if the GMRA is constructed from manifold samples as discussed in \cite{Maggioni2016} (cf.~Appendix \ref{EmpiricalGMRAproof}), the constants $C_1$ and $\tilde C_\z$ are in fact bounded by absolute constants which will become important later on, cf.\ Remark \ref{Rem:PlusFuncGone}.}
\end{remark}

\begin{figure}[ht]
	\centering
	\def\svgwidth{0.45\linewidth}
	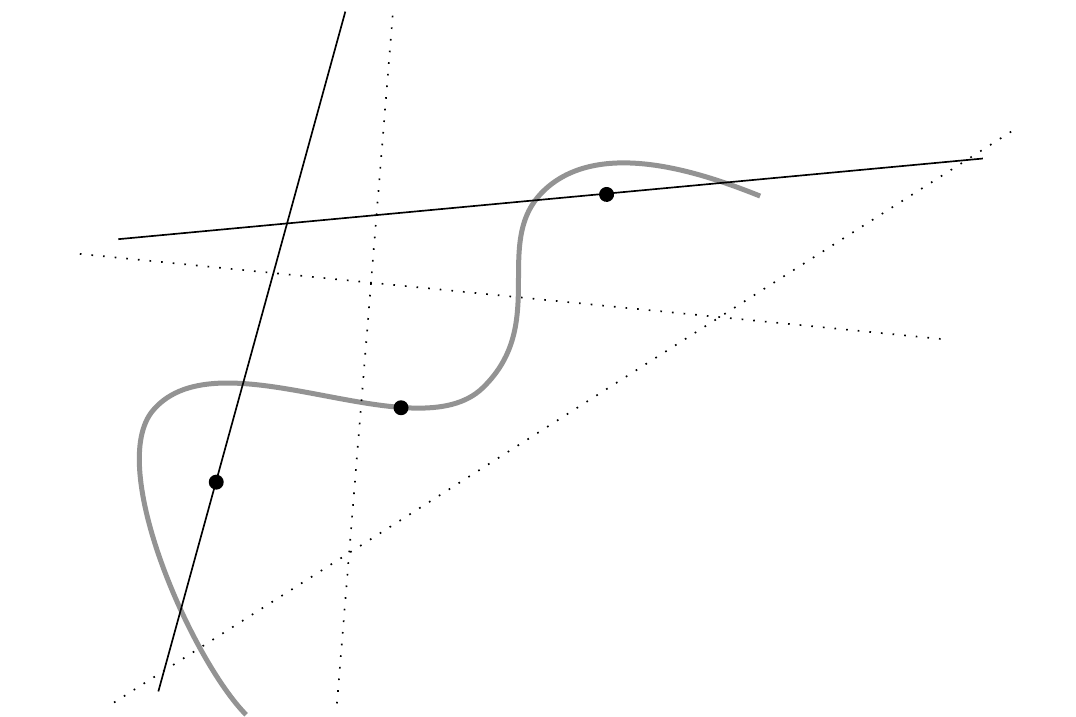
	\caption{The closest center $\cj_{j,k_j(\x)}$ is not identified by measurements. Dotted lines represent one-bit hyperplanes.}\label{Z7}
\end{figure}

\subsection{Additional Notation}

\paragraph{} Let us now fix some additional notation. Throughout the remainder of this paper we will work with several different metrics.  Perhaps most importantly, we will quantify the distance between two points $\z,\z' \in \R^D$ with respect to their one-bit measurements by 
$$d_A(\z,\z') := \frac{d_H(\sign(A\z),\sign(A\z'))}{m},$$ 
where $d_H$ counts the number of differing entries between the two sign patterns (i.e., $d_A(\z,\z')$ is the normalized Hamming distance between the signs of $A\z$ and $A\z'$). 
Furthermore, let $\P_\St$ denote orthogonal projection onto the unit sphere $\St^{D-1}$, and more generally let $\P_K$ denote orthogonal (i.e., nearest neighbor) projection onto the closure of an arbitrary set $K \subset \R^D$ wherever it is defined.  Then, for all $\z, \z' \in \R^D$ we will denote by $d_G(\z,\z') = d_G(\P_\St(\z),\P_\St(\z'))$ the geodesic distance between $\P_\St(\z)$ and $P_\St(\z')$ on $\St^{D-1}$ normalized to fulfill $d_G(\z'',-\z'') = 1$ for all $\z'' \in \R^D$.

Herein the Euclidian ball with center $\z$ and radius $r$ is denoted by $\B(\z,r)$. In addition, the {\it scale-$j$ GMRA approximation to $\M$},  
\begin{equation*}
\M_j := \{ \P_{j,k_j(\z)}(\z) \colon \z \in \B(\0,2) \} \cap \B(\0,2),
\end{equation*}
will refer to the portions of the affine subspaces introduced in Definition \ref{GMRA} for each fixed $j$ which are potentially relevant as approximations to some portion of $\M \subset \St^{D-1}$. 
To prevent the $\M_j$ above from being empty we will further assume in our results that we only use scales $j > j_0$ large enough to guarantee that $\tube_{C_1 2^{-j-2}} (\M) \subset \B(0,2)$.  
Hence we will have $\cj_{j,k} \in \B(\0,2)$ for all $k \in K_j$, and so $\C_j \subset \M_j$.  This further guarantees that no sets $P_{j,k} \cap \B(\0,2)$ are empty, and that $P_{j,k} \cap \B(\0,2) \subset \M_j$ for all $k \in K_j$.

Finally, we write $a \gtrsim b$ if $a \geq Cb$ for some constant $C > 0$. The diameter of a set $K \subset \R^D$ will be denoted by $\diam(K) := \sup_{\z,\z' \in K} \| \z - \z' \|_2$, where $\| \cdot \|_2$ is the Euclidian norm. We use $\dist(A,B) = \inf_{\av \in A, \bv \in B} \| \av - \bv \|_2$ for the distance of two sets $A,B \subset \R^D$ and by abuse of notation $\dist(\0, A) = \inf_{\av \in A} \| \av \|_2$. The operator norm of a matrix $A\in \R^{n_1\times n_2}$ is denoted by  $\| A \|=\sup_{\x\in\mathbb R^{n_2},\|\x\|_2\leq1}\|A\x\|_2$. We will write $\mathcal{N}(K,\eps)$ to denote the Euclidian covering number of a set $K \subset \R^D$ by Euclidean balls of radius $\eps$ (i.e., $\mathcal{N}(K,\eps)$ is the minimum number of $\eps$-balls that are required to cover $K$). And, the operators $\lfloor r \rfloor$ (resp. $\lceil r \rceil$) return the closest integer smaller (resp. larger) than $r \in \R$.

\subsection{The Proposed Computational Approach}

\paragraph{} Combining prior GMRA-based compressed sensing results \cite{Iwen2013} with the one-bit results of Plan and Vershynin in \cite{Plan2013} suggests the following strategy for recovering an unknown $\x \in \M$ from the measurements given in \eqref{one-bit-measurements}:  First, choose a center $\cj_{j,k'}$ whose one-bit measurements agree with as many one-bit measurements of $\x$ as possible. Due to the varying shape of the tessellation cells this is not an optimal choice in general (see Figure \ref{Z7}). Nevertheless, one can expect $P_{j,k'}$ to be a good approximation to $\M$ near $\x$.  Thus, in the second step a modified version of Plan and Vershynin's noisy one-bit recovery method using $P_{j,k'}$ should yield an approximation of $\P_{j,k'}(\x)$ which is close to $\x$.\footnote{Note that in this second step the given measurements $\y$ of $\x$ are interpreted as being noisy measurements of $\P_{j,k'}(\x)$.}  See \nameref{algorithm} for pseudocode.\\

\begin{algorithm}[H]
    \algotitle{OMS-simple}{algorithm}
	\SetAlgoRefName{OMS-simple}
	\begin{minipage}{15.5cm}
		\begin{enumerate}[label=\Roman*.]
			\item \label{I}\Indm Identify a center $\cj_{j,k'}$ close to $\x$ via
			\begin{align} \label{MinC}
			\cj_{j,k'} \in \argmin_{\cj_{j,k} \in \C_j} \; d_H( \sign(A \cj_{j,k}), \y ),
			\end{align}
			where $d_H$ is the Hamming distance, i.e., $d_H(\z,\z') := |\{l: z_l \neq z'_l \}|$. If $d_H( \sign(A \cj_{j,k'}), \y ) = 0$, directly choose $\x^\ast = \cj_{j,k'}$ and omit \ref{II}
			\item \label{II}If there is no center in the same cell as $\x$ (as in Figure \ref{Z7}), solve a noisy one-bit recovery problem as in \cite{Plan2013}, i.e.,
			\begin{align} \label{MinX}
			\begin{split}
			\x^\ast &= \argmin_{\z \in \R^D} \sum_{l=1}^{m} (-y_l) \langle \aj_l, \z \rangle , \quad
			\text{subject to } \z = \P_{j,k'}(\z) \text{ and } \Vert \z \Vert_2 \le R
			\end{split}
			\end{align}
			where $R$ is a suitable parameter.
		\end{enumerate}
	\end{minipage}
	\caption{OnebitManifoldSensing - Simple Version}
\end{algorithm}

\begin{remark} \label{+Remark}
The minimization in (\ref{MinC}) can be efficiently calculated by exploiting tree structures in $\C_j$. Numerical experiments (see Section \ref{Numerics}) suggest this strategy to yield adequate approximation for the center $\cj_{j,k_j(\x)}$ in \eqref{copt}, while being considerably faster (we observed differences in runtime up to a factor of 10).
\end{remark}

Though simple to understand, the constraints in \eqref{MinX} have two issues that we need to address:  First, in some cases the minimization problem \eqref{MinX} empirically exhibits suboptimal recovery performance (see Section \ref{SIMPLEvsCONVEX} for details). Second, the parameter $R$ in \eqref{MinX} is unknown a priori (i.e., \nameref{algorithm} requires parameter tuning, making it less practical than one might like).  
Indeed, our analysis shows that making an optimal choice for $R$ in \nameref{algorithm} requires a priori knowledge about $\| \P_{j,k'}(\x) \|_2$ which is only approximately known in advance.  \deleted[id=josa]{Thankfully, however, the noisy one-bit results of Plan and Vershynin apply to arbitrary subsets of the unit ball $\B(\0,1) \subset \R^D$ which will allow us to adapt our recovery approach.}

To address this issue, we will modify the constraints in \eqref{MinX} and instead minimize over the convex hull of the nearest neighbor projection of $P_{j,k'} \cap \B(\0,2)$ onto \replaced[id=josa]{$\St^{D-1}$}{$\B(\0,1)$}, 
\begin{equation*}
\conv \left( \P_\St (P_{j,k'} \cap \B(\0,2)) \right),
\end{equation*}
to remove the $R$ dependence. 
If $\0 \in P_{j,k'}$ one has $\conv \left( \P_\St (P_{j,k'} \cap \B(\0,2)) \right) = P_{j,k'} \cap \B(\0,1)$. If $\0 \notin P_{j,k'}$ the set $\conv \left( \P_\St (P_{j,k'} \cap \B(\0,2)) \right)$ is described by the following set of convex constraints which are straightforward to implement in practice. Denote by $\P_\cj$ the projection onto the vector $\cj = \P_{j,k'}(\0)$.  Then,
\begin{align} \label{ConvexSet}
\z \in \conv \left( \P_\St (P_{j,k'} \cap \B(\0,2)) \right) \Leftrightarrow \begin{cases}
\| \z \|_2 \le 1, \\
\Phi_{j,k'}^T \Phi_{j,k'} \z + \P_\cj(\z) = \z, \\
\langle \z,\cj \rangle \ge \frac{1}{2} \| \cj \|_2^2,
\end{cases}.
\end{align}

The first two conditions above restrict $\z$ to \deleted[id=josa]{the intersection of} $\B(\0,1)$ and $\Span (P_{j,k'})$\added[id=josa]{, respectively}. The third condition then removes all points that are too close to the origin (see Figure \ref{Z10}). \added[id=josa]{ A rigorous proof of equivalence can be found in Appendix \ref{ConvexHull}.}
\begin{figure}[ht]
	\centering
	\def\svgwidth{0.8\linewidth}
	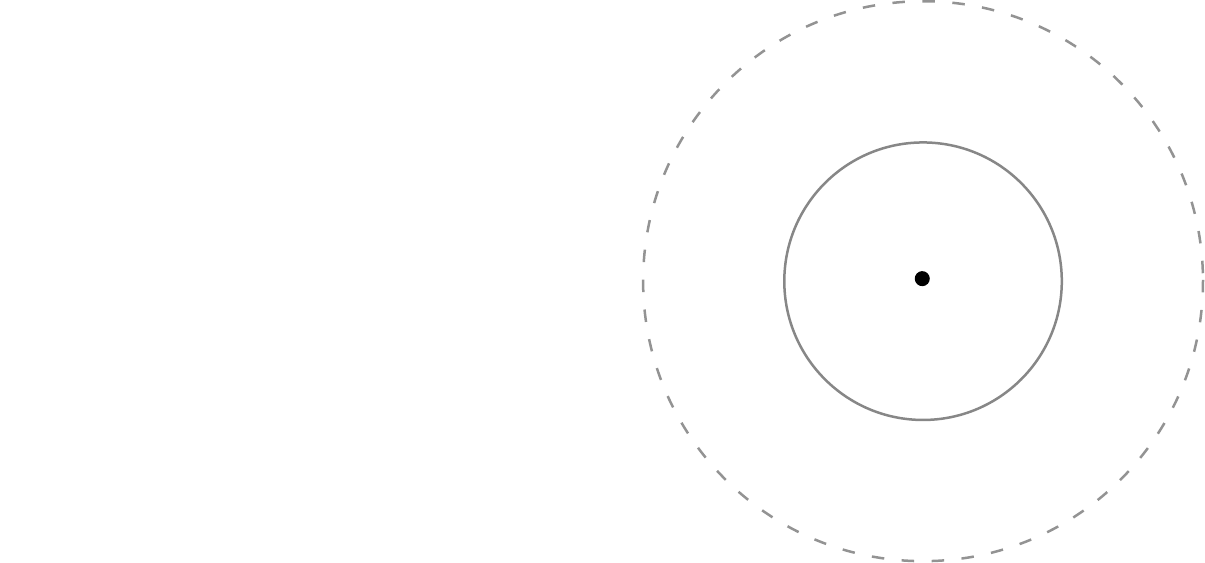
	\caption{Two views of an admissible set $\conv(\P_\St(P_{j,k'}\cap\mathcal B(0,2)))$ from \eqref{ConvexSet} for a case with $\| \cj \|_2 = \| \P_{j,k'}(\0) \|_2 < 1$.} 
	\label{Z10}
\end{figure}
\added[id=josa]{Our analysis uses that the noisy one-bit recovery results of Plan and Vershynin apply to arbitrary subsets of the unit ball $\B(\0,1) \subset \R^D$ which will allow us to adapt our recovery approach.} Replacing the constraints in \eqref{MinX} with those in \eqref{ConvexSet} we obtain the following modified recovery approach, \nameref{algorithm2}.

\begin{algorithm}[H]
    \algotitle{OMS}{algorithm2}
	\SetAlgoRefName{OMS}
	\begin{minipage}{15.5cm}
		\begin{enumerate}[label=\Roman*.]
			\item \label{I2}\Indm Identify a center $\cj_{j,k'}$ close to $\x$ via
			\begin{align} \label{MinCnew}
			\cj_{j,k'} \in \argmin_{\cj_{j,k} \in \C_j} \; d_H( \sign(A\cj_{j,k}),\y ).
			\end{align}
			where $d_H$ is the Hamming distance, i.e., $d_H(\z,\z') := |\{l: z_l \neq z'_l \}|$. If $d_H( \sign(A\cj_{j,k'}),\y ) = 0$, directly choose $\x^\ast = \cj_{j,k'}$ and omit \ref{II2}
			\item \label{II2}If there is no center lying in the same cell as $\x$ (see Figure \ref{Z7}), recover the projection of $\x$ onto $P_{j,k'}$, i.e., $\P_{j,k'}(\x)$. To do so solve the convex optimization
			\begin{align} \label{MinXnew}
			\begin{split}
			\x^\ast &= \argmin_{\z \in \R^D} \sum_{l=1}^{m} (-y_l) \langle \aj_l,\z \rangle, \quad
			\text{subject to } \z \in \conv \left( \P_\St (P_{j,k'} \cap \B(\0,2)) \right).
			\end{split}
			\end{align}
		\end{enumerate}
	\end{minipage}
	\caption{OnebitManifoldSensing}
\end{algorithm}

As we shall see, theoretical error bounds for both \nameref{algorithm} and \nameref{algorithm2} can be obtained by nearly the same analysis despite their differences.

\section{Main Results} 
\label{MainResults}

\paragraph{} In this section we present the main results of our work, namely that both \nameref{algorithm} and \nameref{algorithm2} approximate a signal on $\M$ to arbitrary precision with a near-optimal number of measurements. More precisely, we obtain the following theorem.


%

\blue{
\begin{theorem}[(Uniform) Recovery] \label{GeneralApproximation}
	There exist absolute constants $E,E',c > 0$ such that the following holds. Let $\epsilon \in (0,1)$ and assume the GMRA's maximum refinement level $J \geq j := \lceil \log(1/\eps) \rceil$.  
	Further suppose that one has $\dist (\0,\M_j) \ge 1/2$, $0<C_1 < 2^{j}$, and $\sup_{\x \in \M} \tilde{C}_\x < 2^{j - 2}$.  If     
	\begin{align} \label{mBoundUniformRecoveryCorollary}
	m \ge E C_1^{-6} \eps^{-6} \max \left\{w(\M), \sqrt{d\log(e/\eps)} \right\}^2,
	\end{align}
	then with probability at least $1-12\exp(-c C_1^2 \eps^2 m)$ for all $\x \in \M \subset \St^{D-1}$ the approximations $\x^\ast$ obtained by \nameref{algorithm2} satisfy
	\begin{align}
	\label{equ:GeneralApprox}
	\| \x - \x^\ast \|_2^2 \le E' \left( 1 + \tilde{C}_\x + C_1 \max \left\{ 1,\log(C_1^{-1}) \right\} \right)^2  \eps \log\left( \frac{1}{2\eps} \right). 
	\end{align}
\end{theorem}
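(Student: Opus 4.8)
The plan is to decompose the analysis according to the two phases of \nameref{algorithm2} and to control each source of error separately, choosing the scale $j \sim \log(1/\sqrt{\eps})$ so that the GMRA approximation error of order $2^{-2j}$ is of order $\eps$. If the first phase already produces $d_H(\sign(A\cj_{j,k'}),\y)=0$, then $\x^\ast = \cj_{j,k'}$ lies in the same tessellation cell as $\x$ and the bound follows directly from relating the (vanishing) measurement distance $d_A(\cj_{j,k'},\x)$ to the geodesic distance $d_G$ and then to the Euclidean distance. Otherwise I would split, with $\ol{\x} := \P_\St(\P_{j,k'}(\x))$,
\begin{align*}
\|\x - \x^\ast\|_2 \le \|\x - \ol{\x}\|_2 + \|\ol{\x} - \x^\ast\|_2,
\end{align*}
and bound the GMRA approximation term $\|\x - \ol{\x}\|_2$ (Step~\ref{I2}) and the convex-recovery term $\|\ol{\x}-\x^\ast\|_2$ (Step~\ref{II2}) individually.

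For Step~\ref{I2}, the key point is that minimizing the Hamming distance in \eqref{MinCnew} approximately minimizes the measurement distance $d_A(\cj_{j,k},\x)$ over all centers. I would invoke a uniform tessellation estimate in the spirit of \cite{Plan2014} guaranteeing that, on a high-probability event, $d_A$ and $d_G$ agree up to an additive error $\delta \sim \eps$ simultaneously for every $\x \in \M$ and every $\cj_{j,k} \in \C_j$. The probabilistic cost of this uniform statement is governed by the Gaussian width $w(\M)$ of the manifold together with the cardinality $K_j \le C_\C 2^{dj}$ of the finite center set, which explains the term $\max\{w(\M),\sqrt{d\log(1/\sqrt{\eps})}\}$ in \eqref{mBoundUniformRecoveryCorollary}. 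The selected center $\cj_{j,k'}$ need not be the geometrically closest one (cf.\ \Cref{Z7}), but small $d_A(\cj_{j,k'},\x)$ forces small $d_G(\cj_{j,k'},\x)$, and hence—using $\dist(\0,\M_j)\ge 1/2$ to pass from geodesic to Euclidean distances on the normalized sphere—places $\x$ inside the $16\cdot\max\{\dots\}$ neighborhood of $\cj_{j,k'}$ required in property \eqref{3b}. Consequently $\|\x - \P_{j,k'}(\x)\|_2 \le \tilde{C}_\x 2^{-j} \lesssim \tilde{C}_\x \sqrt{\eps}$, and projecting onto the sphere changes this only by a controlled factor, yielding $\|\x - \ol{\x}\|_2 \lesssim (1+\tilde{C}_\x)\sqrt{\eps}$.

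For Step~\ref{II2}, I would apply the noisy one-bit recovery guarantee of Plan and Vershynin \cite{Plan2013}, which applies to the \emph{convex} admissible set $K := \conv(\P_\St(P_{j,k'}\cap\B(\0,2))) \subset \B(\0,1)$ described by \eqref{ConvexSet}. Interpreting $\y=\sign(A\x)$ as noisy one-bit measurements of $\ol{\x}\in K$, their bound controls $\|\x^\ast-\ol{\x}\|_2$ in terms of $w(K)/\sqrt{m}$ and an effective noise level measuring how far $\x$ is from $\ol{\x}$ (equivalently, the angle between $\x$ and $\P_{j,k'}(\x)$), which is again of order $\tilde{C}_\x 2^{-j}$. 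Since $K$ is $d$-dimensional, $w(K) \lesssim \sqrt{d}$; to make the bound hold simultaneously for the subspace $P_{j,k'}$ associated with \emph{every} admissible $k'$, I would replace $w(K)$ by the Gaussian width of the full scale-$j$ approximation $\M_j$, which satisfies $w(\M_j)\lesssim \sqrt{d}+\sqrt{\log K_j}\lesssim\sqrt{d\log(1/\sqrt{\eps})}$ and once more matches \eqref{mBoundUniformRecoveryCorollary}. The stated measurement count, including the $\eps^{-7}$ and $C_1^{-6}$ factors, is exactly what is needed to drive both the tessellation accuracy $\delta$ and the Plan–Vershynin error below $\sqrt{\eps}$; the $C_1$-dependence enters through the center-separation bound \eqref{GMRA2b} and the tube radius $C_1 2^{-j-2}$ in \eqref{tube}, and must be tracked carefully through both the geometric and the probabilistic estimates.

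Combining the two bounds by the triangle inequality and squaring gives $\|\x-\x^\ast\|_2^2 \lesssim (1+\tilde{C}_\x + C_1\max\{1,\log(C_1^{-1})\})^2\eps$, as claimed. The main obstacle I anticipate is \emph{uniformity}: a single random matrix $A$ must succeed for all $\x\in\M$ at once, so the concentration statements in both steps cannot be applied pointwise but require covering arguments controlled by Gaussian width, and the joint failure probability must be kept at $12\exp(-cC_1^2\eps^2 m)$. A secondary difficulty is the appearance of the logarithmic factor $\log(C_1^{-1})$ in \eqref{equ:GeneralApprox}, which suggests that the worst case arises from cells whose geometry degenerates as $C_1$ shrinks; this must be handled with care precisely in the passage from the measurement distance $d_A$ to the Euclidean approximation guarantee, and is the step where the geometric hypotheses $0<C_1<2^j$ and $\sup_{\x\in\M}\tilde{C}_\x<2^{j-2}$ are used to keep all estimates in the admissible regime.
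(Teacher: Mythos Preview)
Your proposal is correct and follows essentially the same approach as the paper: the paper also splits the error via the intermediate point $\tilde{\x} = \P_{j,k'}(\x)/\|\P_{j,k'}(\x)\|_2$ (your $\ol{\x}$), controls Step~\ref{I2} by the uniform tessellation theorem of \cite{Plan2014} applied to $\M \cup \P_\St(\C_j)$ to land in the $16\cdot\max\{\dots\}$ neighborhood needed for property~\eqref{3b}, and controls Step~\ref{II2} by the noisy one-bit theorem of \cite{Plan2013} on $K = \conv(\P_\St(P_{j,k'}\cap\B(\0,2)))$ with $w(K)$ bounded via $w(\M_j)$. Your concern about uniformity is handled automatically because both Plan--Vershynin results are already uniform over their input set $K$ once $m$ is large enough relative to $w(K)^2$; no separate covering argument is needed beyond the width bounds $w(\M\cup\M_j)\lesssim w(\M)+\sqrt{dj}$, which the paper establishes in a dedicated lemma.
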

\begin{proof}
The proof can be found below Theorem \ref{ApproximationTheorem} in Section \ref{Proofs}.
\end{proof}
}

\begin{remark}   \label{Rem:PlusFuncGone}
	   \blue{Let us briefly comment on the assumptions of Theorem \ref{GeneralApproximation}. Since $\M \subset \St^{D-1}$, requiring $\dist (\0,\M_j) \ge 1/2$ is a mild assumption. Any GMRA not fulfilling it would imply a worst-case reconstruction error of $1/2$ in \eqref{equ:GeneralApprox}. The constant $1/2$ was chosen for simplicity and can be replaced by an arbitrary number in $(0,1)$. This, however, influences the constants $E,E',c$. }\\
	   The restrictions on $C_1$ and $\tilde C_\x$ are easily satisfied, e.g., if the centers form a maximal $2^{-j}$ packing of $\M$ at each scale $j$ or if the GMRA is constructed from manifold samples as discussed in \cite{Maggioni2016} (cf.~Appendix \ref{EmpiricalGMRAproof}). In both these cases $C_1$ and $\tilde C_\x$ are in fact bounded by absolute constants.
\end{remark}

\paragraph{} Note that Theorem \ref{GeneralApproximation} depends on the Gaussian width of $\M$. For general sets this quantity provides a useful measure of the set's complexity. \blue{In the case of compact Riemannian submanifolds of $\R^D$ it might be more convenient to have a dependence on the geometric properties of $\M$ instead (e.g., its volume and reach).}  Indeed, one can show by means of \cite{Eftekhari2015} that $w(\M)$ can be upper bounded in terms of the manifold's intrinsic dimension $d$, its $d$-dimensional volume $\Vol(\M)$, and the inverse of its reach. Intuitively, these dependencies are to be expected as a manifold with fixed intrinsic dimension $d$ can become more complex as either its volume or curvature (which can be bounded by the inverse of its reach) grows.  The following theorem, which is a combination of different results in \cite{Eftekhari2015}, formalizes this intuition by bounding the Gaussian width of a manifold in terms of its geometric properties.

\begin{theorem}
         \label{Riemann}
	Assume $\M \subset \R^D$ is a compact $d$-dimensional Riemannian manifold with $d$-dimensional volume $\Vol(\M)$ where $d \ge 1$. Then one can replace $w(\M)$ in above theorem by
	\begin{equation*}
	w(\M) \leq C\cdot\diam(\M) \cdot \sqrt{d \cdot \max \left\{\log\left( c\frac{\sqrt{d}}{\min\{1,\reach(\M)\}}\right), 1 \right\} + \log(\max\{1,\Vol(\M)\})}.
	\end{equation*}
	where $C,c > 0$ are absolute constants.
\end{theorem}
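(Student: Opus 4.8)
The plan is to reduce the Gaussian width to a metric--entropy quantity via a chaining argument and then insert the covering--number estimate for Riemannian submanifolds from \cite{Eftekhari2015}; the stated bound is then a matter of assembling these two ingredients. Since $w(\M)$ is the supremum of the centered Gaussian process $\z \mapsto \langle \g,\z\rangle$ indexed by $\M$, whose canonical metric is exactly the ambient Euclidean one, Dudley's inequality gives
$$w(\M) \lesssim \int_0^{\diam(\M)} \sqrt{\log \mathcal{N}(\M,u)}\, \dd u,$$
where the upper limit is $\diam(\M)$ because $\mathcal{N}(\M,u)=1$, and hence the integrand vanishes, once $u$ exceeds the diameter. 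Using the full chaining bound rather than a single-scale net is essential here, as a one--scale union bound would introduce a spurious $\sqrt{D}$ term and destroy the ambient--dimension independence we are after.

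Next I would invoke the manifold covering--number estimate of \cite{Eftekhari2015}: for $u$ below a fixed multiple of $\reach(\M)$, the manifold is essentially flat at scale $u$, geodesic and Euclidean balls are comparable, and a volume--packing count yields
$$\log \mathcal{N}(\M,u) \lesssim \log\!\left(\max\{1,\Vol(\M)\}\right) + d\,\log\!\left(\frac{c\sqrt{d}}{\min\{1,\reach(\M)\}\,u}\right).$$
The $\sqrt{d}$ factor arises as the $d$-th root of the reciprocal volume of the Euclidean unit $d$-ball (since $\omega_d^{-1/d}\asymp\sqrt{d}$), the reach sets the scale below which the flat approximation is valid, and the volume enters as the additive $\log\!\left(\max\{1,\Vol(\M)\}\right)$ packing contribution, with the $\max\{1,\cdot\}$ guarding the regime $\Vol(\M)<1$.

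Then I would substitute this estimate into the entropy integral and evaluate. The elementary fact driving the computation is that $\int_0^a \sqrt{\log(a/u)}\,\dd u = a\int_0^1\sqrt{-\log t}\,\dd t$ converges (to $a\sqrt{\pi}/2$), so splitting $\sqrt{A+B}\le\sqrt{A}+\sqrt{B}$ into the volume piece and the $d$-dependent piece and pulling $\diam(\M)$ out front reproduces the claimed product form $\diam(\M)\cdot\sqrt{d\cdot\max\{\log(c\sqrt d/\min\{1,\reach(\M)\}),1\}+\log(\max\{1,\Vol(\M)\})}$. The outer $\max\{\cdot,1\}$ absorbs the regime where the logarithm would otherwise be small or negative (in particular when $\reach(\M)>1$, so that $\min\{1,\reach(\M)\}=1$), ensuring the exponent stays at least of order $d$.

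The main obstacle is the bookkeeping needed to land on exactly this truncated form. First, one must handle the intermediate range $\reach(\M)\lesssim u\le\diam(\M)$, where the flat covering bound no longer applies and one must fall back on a coarser diameter--based count; here the integrand is already controlled by the order captured at $u\sim\diam(\M)$, so this range contributes nothing new, but this must be argued. Second, the precise placement of the $\diam(\M)$ factor relative to the scale inside the logarithm requires normalizing the covering scale by the diameter (equivalently, tracking the $\diam(\M)/u$ dependence through the integral), which is where the constants $C,c$ are fixed. Neither step is deep, but together they are what turn the two cited estimates of \cite{Eftekhari2015} into the stated closed form.
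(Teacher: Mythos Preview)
Your proposal is correct and follows essentially the same route as the paper: Dudley's entropy integral, the covering--number bound of \cite{Eftekhari2015} valid below the reach scale, and a split of the integral at $\eps\sim\reach(\M)$ with the outer range handled by monotonicity of $\mathcal{N}(\M,\cdot)$. The only cosmetic difference is that the paper passes from $\int\sqrt{\log\mathcal{N}}$ to $\sqrt{\int\log\mathcal{N}}$ via Cauchy--Schwarz before integrating, whereas you propose to split $\sqrt{A+B}$ pointwise and integrate each piece; both yield the stated bound up to absolute constants.
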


\begin{proof}
See Appendix~\ref{ProofRieman}.
\end{proof}

\begin{remark}
Note that in our setting $\M \subset \St^{D-1}$ implies that $\diam(\M) \le 2$ and $\reach(\M) \le 1$. As we will see the Gaussian width of the GMRA approximation to $\M$ is also bounded in terms of $w(\M)$.  This additional width bound is crucial to the proof of Theorem \ref{GeneralApproximation} as the complexity of the GMRA approximation to $\M$ also matters whenever one attempts to approximate an $\x \in \M$ using only the available GMRA approximation to $\M$.  See, e.g., Lemmas~\ref{BoundOfGaussianWidth},~\ref{BoundOfGaussianWidthFine}~and~\ref{RiemannianGWidthBound} below for upper bounds on the Gaussian widths of GMRA approximations to manifolds $\M \subset \St^{D-1}$ in various settings.
\end{remark}

\paragraph{} Finally, we point out that Theorem \ref{GeneralApproximation} assumes access to a GMRA approximation to $\M \subset \St^{D-1}$ which satisfies all of the axioms listed in Definition \ref{GMRA}.  Following the work of Maggioni, Minsker, and Strawn \cite{Maggioni2016}, however, one can also ask whether a similar result will still hold if the GMRA approximation one has access to has been learned by randomly sampling points from $\M$ without the assumptions of Definition \ref{GMRA} being guaranteed a priori.  Indeed, such a setting is generally more realistic.  In fact it turns out that a version of Theorem \ref{GeneralApproximation} still holds for such empirical GMRA approximations under suitable conditions; see Theorem \ref{ApproximationTheorem2}. We refer the interested reader to Appendix \ref{EmpiricalGMRA} and Appendix \ref{EmpiricalGMRAproof} for additional details and discussion regarding the use of such empirically learned GMRA approximations.

\section{Proofs} 
\label{Proofs}

This section provides proofs of the main result in both settings described above and establishes several technical lemmas. First, properties of the Gaussian width and the geodesic distance are collected and shown. Then, the main results are proven for a given GMRA approximation fulfilling the axioms. 

\subsection{Toolbox} 
\label{Toolbox}

We start by connecting slightly different definitions of dimensionality measures similar to the Gaussian width and clarify how they relate to each other. This is necessary as the tools we make use of appear in their original versions referring to different definitions of Gaussian width.
\begin{definition}[Gaussian (mean) width]
	Let $g\sim\mathcal N(\0,\id_D)$. For a subset $K \subset \R^D$ define
	\begin{itemize}
		\item[(i)] the \emph{Gaussian width}: $w(K) := \E{\sup_{\x\in K}\langle \g,\x\rangle}$
		\item[(ii)] the \emph{Gaussian mean width} to be the Gaussian width of $K-K$ and
		\item[(iii)] the \emph{Gaussian complexity}: $\gamma(K)=\E{\sup_{\x\in K}|\langle \g,\x\rangle|}$.
	\end{itemize}
	By combining Properties 5. and 6. of  Proposition 2.1  in \cite{Plan2013} on has
	\begin{align} \label{WidthInequality}
	    w(K-K)\leq 2w(K)\leq 2\gamma(K)\leq 2\left(w(K-K)+\sqrt{\frac 2 \pi}\mathrm{dist}(\0,K)\right).
	\end{align}
\end{definition}

\begin{remark} \label{GaussianWidthRemark} One can easily verify that $w(K) \geq 0$ for all $K \subset \R^D$ since $w(K) := \E{\sup_{\x\in K}\langle \g,\x\rangle} \geq \sup_{\x\in K} \E{\langle \g,\x\rangle} = 0$.  The square of the Gaussian width $w(K \cap \B(\0,1))^2$ of $K\subset \R^D$ is also a good measure of intrinsic dimension. For example, if $K$ is a linear subspace with $\mathrm{dim}(K)=d$ then $w(K \cap \B(\0,1)) \leq \sqrt{d}$. In this sense, the Gaussian width extends the concept of dimension to general sets $K$.  Furthermore, for a finite set $K$ the Gaussian width is bounded by $w(K) \le C_f \; \diam(K \cup \{ \0 \}) \sqrt{\log|K|}$. This can be deduced directly from the definition (see, e.g., \S2 of \cite{Plan2013}).  
\end{remark}


\paragraph{} Now that we have introduced the notion of Gaussian width, we can use it to characterize the \replaced[id=josa]{union }{size} of the given manifold and \replaced[id=josa]{a single }{one} level of \added[id=josa]{its} GMRA approximation $\M \cup \M_j$ (recall the definition of $\M_j$ in Section \ref{ProblemFormulation}).

\begin{lemma}[A Bound of the Gaussian Width for Coarse Scales] \label{BoundOfGaussianWidth}
	For $\M_j$, the subspace approximation in the GMRA of level $j > j_0$ (cf.\ end of Section \ref{ProblemFormulation}) for $\M$ of dimension $d \geq 1$, the Gaussian width of $\M\cup\M_j$ can be bounded from above and below by
	\begin{align*}
		\max \{ w(\M),w(\M_j) \} \le w(\M\cup\M_j) \le 2w(\M) + 2w(\M_j) + 3 \le 2w(\M)+ C\sqrt{dj}.
	\end{align*}
\end{lemma}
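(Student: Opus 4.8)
The plan is to treat the three inequalities in turn. The \textbf{lower bound} is immediate from monotonicity of the Gaussian width under set inclusion: since $\M \subseteq \M\cup\M_j$ and $\M_j \subseteq \M\cup\M_j$, for every realization of $\g$ we have $\sup_{\z\in\M}\langle\g,\z\rangle \le \sup_{\z\in\M\cup\M_j}\langle\g,\z\rangle$ and likewise with $\M$ replaced by $\M_j$; taking expectations and then the maximum over the two gives $\max\{w(\M),w(\M_j)\}\le w(\M\cup\M_j)$. For the \textbf{first upper bound} I would exploit that the supremum over a union splits as a pointwise maximum,
\[
\sup_{\z\in\M\cup\M_j}\langle\g,\z\rangle = \max\Big\{\sup_{\z\in\M}\langle\g,\z\rangle,\ \sup_{\z\in\M_j}\langle\g,\z\rangle\Big\},
\]
and then bound $\max\{a,b\}\le|a|+|b|$. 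The elementary observation driving this step is that for any $K$ one has $\big|\sup_{\z\in K}\langle\g,\z\rangle\big|\le\sup_{\z\in K}|\langle\g,\z\rangle|$ pointwise in $\g$, so taking expectations yields $\E{|\sup_{\z\in K}\langle\g,\z\rangle|}\le\gamma(K)$, the Gaussian complexity. Hence $w(\M\cup\M_j)\le\gamma(\M)+\gamma(\M_j)$.

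Next I would invoke \eqref{WidthInequality}, whose last two inequalities combine to give $\gamma(K)\le w(K-K)+\sqrt{2/\pi}\,\dist(\0,K)\le 2w(K)+\sqrt{2/\pi}\,\dist(\0,K)$. It then remains to control the two distance terms geometrically: since $\M\subset\St^{D-1}$ every point of $\M$ has unit norm, so $\dist(\0,\M)=1$; and since every point of $\M_j$ lies in $\B(\0,2)$ by construction (cf.\ the end of \Cref{ProblemFormulation}), and $\M_j\neq\emptyset$, one has $\dist(\0,\M_j)\le2$. Therefore $\sqrt{2/\pi}\,(\dist(\0,\M)+\dist(\0,\M_j))\le 3\sqrt{2/\pi}<3$, and assembling the pieces yields $w(\M\cup\M_j)\le 2w(\M)+2w(\M_j)+3$.

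For the \textbf{second upper bound} it suffices to show $w(\M_j)\le C'\sqrt{dj}$, after which the additive $3$ is absorbed via $\sqrt{dj}\ge1$ (valid since $d\ge1$, $j\ge1$), giving $2w(\M_j)+3\le C\sqrt{dj}$. Here $\M_j=\bigcup_{k=1}^{K_j}\big(P_{j,k}\cap\B(\0,2)\big)$ is a union of at most $K_j\le C_\C 2^{dj}$ pieces, each contained in a $d$-dimensional affine subspace intersected with $\B(\0,2)$. I would bound the Euclidean covering number slab by slab, $\mathcal{N}(\M_j,\eps)\le K_j\,(1+4/\eps)^d$, and then apply Dudley's inequality $w(\M_j)\lesssim\int_0^{4}\sqrt{\log\mathcal{N}(\M_j,\eps)}\,\dd\eps$. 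Using $\log\mathcal{N}(\M_j,\eps)\lesssim dj+d\log(1+4/\eps)$ together with $\sqrt{a+b}\le\sqrt a+\sqrt b$, the integral evaluates to $O(\sqrt{dj}+\sqrt d)=O(\sqrt{dj})$.

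The main obstacle is precisely this last step. Although each individual affine slab $P_{j,k}\cap\B(\0,2)$ contributes only Gaussian width $O(\sqrt d)$, the union ranges over exponentially many ($\sim 2^{dj}$) of them, and the whole point of the estimate is that this union inflates the width only by the mild additive factor $\sqrt{\log(2^{dj})}=\sqrt{dj\log 2}$ rather than multiplicatively. Making this rigorous via the covering-number/chaining estimate — in particular verifying that the continuous $d\log(1/\eps)$ contribution of each slab integrates down to a lower-order $O(\sqrt d)$ term and does not spoil the $\sqrt{dj}$ scaling — is the only non-routine part; everything else reduces to elementary manipulations of $w$ and $\gamma$ together with \eqref{WidthInequality}.
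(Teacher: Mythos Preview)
Your proposal is correct and follows essentially the same route as the paper: monotonicity for the lower bound, passing through the Gaussian complexity $\gamma$ together with \eqref{WidthInequality} and the bounds $\dist(\0,\M)\le 1$, $\dist(\0,\M_j)\le 2$ for the middle inequality, and Dudley's inequality applied to the union of $K_j\le C_\C 2^{dj}$ affine $d$-dimensional slabs for the last. One minor slip: you write $\M_j=\bigcup_k(P_{j,k}\cap\B(\0,2))$, but by definition $\M_j$ is only \emph{contained} in that union; this is harmless since you only need the inclusion for the covering-number bound, and the paper uses exactly the same inclusion.
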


\begin{remark} \label{rem:BoundOfGaussianWidth}
	Note that the first inequality holds for general sets, not only $\M$ and $\M_j$. Moreover, one only uses $\M_j \subset \B(\0,2)$ to prove the second inequality. It thus holds for $\M_j$ replaced with arbitrary subsets of $\B(\0,2)$. We might use both variations referring to Lemma \ref{BoundOfGaussianWidth}.
\end{remark}

\begin{proof}
    The first inequality follows by noting that
    \begin{align*}
    	\max \{ w(\M),w(\M_j) \} = \max \left\{ \E[]{\sup_{\vv \in \M} \langle \vv, \g \rangle}, \E[]{\sup_{\vv \in \M_j} \langle \vv,\g \rangle} \right\} \le \E[]{\sup_{\vv \in \M \cup \M_j} \langle \vv,\g \rangle} = w(\M \cup \M_j).
    \end{align*}
    To obtain the second inequality observe that
    \begin{align}
    \begin{split} \label{equ:GenWidthUnionBound}
        w(\M \cup \M_j) &\le \gamma (\M \cup \M_j) \le \E{\sup_{\vv\in\M} |\langle \vv,\g\rangle| + \sup_{\vv\in\M_j} |\langle \vv,\g\rangle|} = \gamma (\M)+\gamma (\M_j)  \\
        &\le 2w(\M) + 2w(\M_j) + \sqrt{\frac 2 \pi}\mathrm{dist}(\0,\M) + \sqrt{\frac 2 \pi}\mathrm{dist}(\0,\M_j) \\
        &\le 2\left( w(\M) + w(\M_j) + 1.5\sqrt{\frac 2 \pi} \right) 
    \end{split}
    \end{align}
    where we used \eqref{WidthInequality}, the fact that $\M \subset \St^{D-1}$, and that $\M_j \subset \B(\0,2)$. 

    For the last inequality we bound $w(\M_j)$. First, note that
	\begin{align*}
    	w (\M_j)&= \E{\sup_{\vv\in\M_j}\langle \vv,\g\rangle} = \E{\sup_{\vv\in \{ \P_{j,k_j(\x)}(\x) \colon \x \in \B(\0,2) \} \cap \B(\0,2) } \langle \vv,\g\rangle}\\
    	&\le \E{\sup_{\x\in \bigcup_{k \in [K_j]} P_{j,k} \cap \B(\0,2)}\langle \x,\g\rangle}.
    \end{align*}
    For all $k \in [K_j]$ there exist $d$-dimensional Euclidean balls \replaced[id=josa]{$L_{j,k}\subset P_{j,k}$ }{$L_{j,k}$} of radius $2$ such that $P_{j,k} \cap \B(\0,2) \subset L_{j,k}$. Hence, $\bigcup_{k \in [K_j]} (P_{j,k} \cap \B(\0,2)) \subset L_j := \bigcup_{k \in [K_j]} L_{j,k}$. By definition the $\eps$-covering number of $L_j$ (a union of $K_j$ $d$-dimensional balls) can be bounded by $\mathcal{N}(L_j,\eps) \le K_j (6/\eps)^d$ which implies $\log \mathcal{N}(L_j,\eps) \le dj \log ( 12C_\C/\eps )$ by GMRA property \eqref{second}. By Dudley's inequality (see, e.g., \replaced[id=josa]{\cite{Dudley2006} }{Chapter 8 of \cite{foucart2013mathematical}}) we conclude \added[id=josa]{via Jensen's inequality} that
    \begin{align*}
    w(\M_j) &\le w(L_j) \leq C_\text{Dudley} \int_0^2 \sqrt{\log \mathcal{N}(L_j,\eps)} ~\dd\eps \leq C_\text{Dudley} \sqrt{dj} \int_0^2 \sqrt{\log (12C_\C) - \log(\eps)} ~\dd\eps\\
    &\leq C_\text{Dudley} \sqrt{dj} \sqrt{ 2 \log (12C_\C) - \int_0^2 \log(\eps) ~\dd\eps} \\
    &\leq C' \sqrt{dj}
    \end{align*}
    where $C'$ is a constant depending on $C_\text{Dudley}$ and $C_\C$. Choosing $C = 2C' + 3$ yields the claim as $3\sqrt{2/ \pi} \le 3 \sqrt{dj}$.\\

\end{proof}

The following two lemmas concerning width bounds for fine scales will also be useful.  Their proofs (see Appendix \ref{ProofWidthBounds}), though more technical, use similar ideas to the proof of Lemma~\ref{BoundOfGaussianWidth}.  The first lemma improves on Lemma~\ref{BoundOfGaussianWidth} for large values of $j$ by considering a more geometrically precise approximation to $\M$, $\Mr_j \subset \M_j$.

\begin{lemma}[A Bound of the Gaussian Width for Fine Scales] \label{BoundOfGaussianWidthFine}
If $j \geq \log_2(D)$, $\max \{ 1, \sup_{\z\in\M}C_\z \} =: C_{\M}<\infty$, and $\Mr_j := \{\P_{j,k_j(\z)}(\z)\colon \z\in \M \} \cap B(\0,2)$ we obtain
	\begin{align*}
		\max \{ w(\M),w(\Mr_j) \} \le w(\M\cup\Mr_j) \le 2w(\M) + 2w(\Mr_j) + 3 \le C (w(\M) + 1)\log(D).
	\end{align*}
\end{lemma}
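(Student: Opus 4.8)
\emph{Reduction to a bound on $w(\Mr_j)$.} The first two inequalities are obtained exactly as in the proof of \Cref{BoundOfGaussianWidth}. The lower bound $\max\{w(\M),w(\Mr_j)\}\le w(\M\cup\Mr_j)$ holds for arbitrary sets, and the estimate $w(\M\cup\Mr_j)\le\gamma(\M)+\gamma(\Mr_j)\le 2w(\M)+2w(\Mr_j)+3$ uses only \eqref{WidthInequality} together with $\M\subset\St^{D-1}$ and $\Mr_j\subset\B(\0,2)$, so that $\dist(\0,\M)=1$ and $\dist(\0,\Mr_j)\le 2$ yield $\sqrt{2/\pi}\,(1+2)\le 3$. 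Thus everything reduces to bounding $w(\Mr_j)$, and the point of the lemma is that $\Mr_j$, in contrast to $\M_j$, is a \emph{uniformly small perturbation} of $\M$ and therefore inherits its Gaussian width up to an additive constant, rather than paying the $\sqrt{dj}$ term of \Cref{BoundOfGaussianWidth} (which would blow up at fine scales).

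\emph{The main estimate.} For each $\vv\in\Mr_j$ fix a point $\z=\z(\vv)\in\M$ with $\vv=\P_{j,k_j(\z)}(\z)$. By the first pointwise bound in GMRA property \eqref{3b} we have $\|\vv-\z(\vv)\|_2=\|\z-\P_{j,k_j(\z)}(\z)\|_2\le C_\z\,2^{-2j}\le C_\M\,2^{-2j}$. Splitting the linear functional and applying Cauchy--Schwarz gives, for every realization of $\g$, $\langle\vv,\g\rangle=\langle\z(\vv),\g\rangle+\langle\vv-\z(\vv),\g\rangle\le\sup_{\z\in\M}\langle\z,\g\rangle+C_\M\,2^{-2j}\|\g\|_2$. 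Taking the supremum over $\vv\in\Mr_j$, then the expectation, and using $\E{\|\g\|_2}\le\sqrt{\E{\|\g\|_2^2}}=\sqrt D$ by Jensen, we obtain
\begin{align*}
w(\Mr_j)\le w(\M)+C_\M\,2^{-2j}\sqrt D.
\end{align*}

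\emph{Using the scale hypothesis.} Since $j\ge\log_2 D$ we have $2^{-2j}\le D^{-2}$, hence $2^{-2j}\sqrt D\le D^{-3/2}\le 1$ and therefore $w(\Mr_j)\le w(\M)+C_\M$. Substituting this into the second inequality yields $2w(\M)+2w(\Mr_j)+3\le 4w(\M)+2C_\M+3\le C\,(w(\M)+1)$ with $C:=2C_\M+3$, which is in particular at most $C\,(w(\M)+1)\log D$ for $D\ge 3$. This proves the claim; note that the perturbation argument in fact produces a bound with no logarithmic factor, so the stated $\log D$ is comfortable slack.

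\emph{Alternative route and main obstacle.} One may instead mimic the covering-number proof of \Cref{BoundOfGaussianWidth} literally, splitting the Dudley integral at $\eps_0=C_\M 2^{-2j}$: for $\eps<\eps_0$ one reuses $\Mr_j\subset\M_j\subset L_j$ and checks that $\int_0^{\eps_0}\sqrt{dj\,\log(12C_\C/\eps)}\,\dd\eps=O(1)$, which holds \emph{precisely} because $j\ge\log_2 D$ forces $2^{-2j}$ to dominate the $\sqrt{dj}$ growth (indeed $2^{-2j}\,j\sqrt d\le 2^{-j/2}\le 1$); for $\eps\ge\eps_0$ one covers $\Mr_j$ by a shifted $(\eps-\eps_0)$-cover of $\M$ and controls the resulting entropy integral of $\M$. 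The only genuinely delicate point is this last step, where the entropy integral is bounded in terms of $w(\M)$ at the cost of a logarithmic factor in $1/\eps_0$; the hypothesis $j\ge\log_2 D$ is exactly what pins this factor at the level $\log D$ instead of letting it grow with $j$. Since the direct perturbation argument above avoids the entropy integral altogether, I would present it as the primary proof and relegate the covering variant to a remark.
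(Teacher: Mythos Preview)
Your proof is correct and takes a genuinely different, and in fact cleaner, route than the paper's.

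The paper proceeds via the two-sided Sudakov inequality: it bounds $w(\Mr_j)$ by $C\log(D)\sup_{\eps\ge0}\eps\sqrt{\log\N(\Mr_j,\eps)}$, then splits the supremum at $\eps=2C_\M 2^{-2j}$. For small $\eps$ it uses the crude inclusion $\Mr_j\subset\tube_{C_\M 2^{-2j}}(\M)\subset\B(\0,1+C_\M)$ and the volumetric covering bound for a Euclidean ball, while for large $\eps$ it invokes the covering comparison $\N(\Mr_j,\eps)\le\N(\M,\eps/2)$ (Lemma~\ref{lem:Tech2}) and then Sudakov again to return to $w(\M)$. The $\log(D)$ factor in the final bound originates entirely from the two-sided Sudakov inequality, not from the geometry.

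Your perturbation argument sidesteps all of this: decomposing $\langle\vv,\g\rangle$ and bounding the difference term by $C_\M 2^{-2j}\|\g\|_2$ directly gives $w(\Mr_j)\le w(\M)+C_\M 2^{-2j}\sqrt{D}$, and the hypothesis $j\ge\log_2 D$ kills the second term. This is both shorter and strictly stronger, as you note---the $\log(D)$ is pure slack. The only cost is that your constant $C=2C_\M+3$ depends on $C_\M$, but so does the paper's (its ``absolute constant'' $C'$ in fact absorbs a factor $2C_\M\sqrt{\log(4+4/C_\M)}$), so there is no loss. One small caveat: you invoke $\log D\ge 1$ for $D\ge 3$; for $D=2$ one simply enlarges the constant. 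The covering-number alternative you sketch at the end is essentially the paper's argument rephrased through Dudley rather than Sudakov, and your diagnosis of where the logarithmic loss enters is accurate.
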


It is not surprising that for general $\M \in \St^{D-1}$ the width bound for $w(\M_j)$ (resp. $w(\Mr_j)$) depends on either $j$ or $\log(D)$. When using the proximity of $\Mr_j$ to $\M$ in Lemma \ref{BoundOfGaussianWidthFine} we only use the information that $\Mr_j \subset \tube_{C_{\M} 2^{-2j}}$ and a large ambient dimension $D$ will lead to a higher complexity of the tube. In the case of Lemma \ref{BoundOfGaussianWidth} we omit the proximity argument by using the maximal number of affine $d$-dimensional spaces in $\M_j$ and hence do not depend on $D$ but on the refinement level $j$. 

The next lemma just below utilizes even more geometric structure by assuming that $\M$ is a Riemannian Manifold.  It improves on both Lemma~\ref{BoundOfGaussianWidth} and~\ref{BoundOfGaussianWidthFine} for such $\M$ by yielding a width bound which is independent of both $j$ and $D$ for all $j$ sufficiently large.

\begin{lemma}[A Bound of the Gaussian Width for Approximations to Riemannian Manifolds] \label{RiemannianGWidthBound}
Let $\M \subset \St^{D-1}$ be a compact $d$-dimensional Riemannian manifold with $d$-dimensional volume $\Vol(\M)$ where $d \ge 1$.  Furthermore, suppose that for $\max \{ 1, \sup_{\z\in\M}C_\z \} =: C_{\M}$, $j > \max \{ j_0, \log_2(8 C_{\M} / C_1) \}$, and set $\Mr_j := \{\P_{j,k_j(\z)}(\z)\colon \z\in \M \} \cap B(\0,2)$.  Then, there exist absolute constants $C,c > 0$ such that
$$\max \{ w(\M),w(\Mr_j) \} \le w(\M\cup\Mr_j) \le C \sqrt{ d\left(1+\log\left( c \frac{\sqrt{d}}{\reach(M)}\right)\right) + \log(\max\{1,\Vol(\M)\})}.$$
Here the constants $C_z$ and $C_1$ are from properties \eqref{3b} and \eqref{tube}, respectively.
\end{lemma}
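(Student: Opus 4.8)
The plan is to establish the (trivial) lower bound exactly as in the proof of \Cref{BoundOfGaussianWidth}, and to obtain the upper bound by reducing, via the union estimate \eqref{equ:GenWidthUnionBound}, to separate bounds on $w(\M)$ and $w(\Mr_j)$. Since $\M \subset \St^{D-1}$ and $\Mr_j \subset \B(\0,2)$, the computation in \eqref{equ:GenWidthUnionBound} (see also \Cref{rem:BoundOfGaussianWidth}) applies verbatim and gives $w(\M\cup\Mr_j) \le 2w(\M) + 2w(\Mr_j) + 3$. The term $w(\M)$ is then immediately controlled by \Cref{Riemann} together with $\diam(\M) \le 2$ and $\reach(\M)\le 1$, producing the desired geometric expression. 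The whole difficulty therefore concentrates in bounding $w(\Mr_j)$ by the same geometric quantity, \emph{uniformly} in both the scale $j$ and the ambient dimension $D$.

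For $w(\Mr_j)$ I would use Dudley's inequality, $w(\Mr_j) \le C_\text{Dudley}\int_0^{4}\sqrt{\log\mathcal N(\Mr_j,\eps)}\,\dd\eps$ (using $\diam(\Mr_j)\le 4$), and split the integral at the scale $2\delta$, where $\delta := C_\M 2^{-2j}$ is the tube radius guaranteed by GMRA property \eqref{3b}: since every point of $\Mr_j$ is of the form $\P_{j,k_j(\z)}(\z)$ with $\|\z - \P_{j,k_j(\z)}(\z)\|_2 \le C_\z 2^{-2j} \le \delta$, one has $\Mr_j \subset \tube_\delta(\M)$. On the coarse part $[2\delta,4]$ this containment transfers covers from $\M$ to $\Mr_j$: any $\eps/2$-cover of $\M$ becomes, after enlarging radii by $\delta \le \eps/2$, an $\eps$-cover of $\Mr_j$, so $\mathcal N(\Mr_j,\eps) \le \mathcal N(\M,\eps/2)$ for $\eps \ge 2\delta$. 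Consequently $\int_{2\delta}^{4}\sqrt{\log\mathcal N(\Mr_j,\eps)}\,\dd\eps \le 2\int_0^{2}\sqrt{\log\mathcal N(\M,u)}\,\dd u$, and this last integral is bounded by the geometric quantity by exactly the Riemannian covering-number estimates used to prove \Cref{Riemann} (see \Cref{ProofRieman}).

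On the fine part $[0,2\delta]$ I would abandon the manifold and use only the affine structure, as in \Cref{BoundOfGaussianWidth}: covering each of the $K_j \le C_\C 2^{dj}$ pieces $P_{j,k}\cap\B(\0,2)$ (a $d$-dimensional ball of radius $2$) gives $\mathcal N(\Mr_j,\eps) \le K_j(6/\eps)^d$, hence $\sqrt{\log\mathcal N(\Mr_j,\eps)} \le \sqrt{\log C_\C + dj\log 2} + \sqrt{d\log(6/\eps)}$ by subadditivity of the square root and GMRA property \eqref{second}. Integrating over the exponentially short interval $[0,2\delta]$ yields a first contribution $\lesssim \delta\sqrt{dj} = C_\M 2^{-2j}\sqrt{dj}$ and a second contribution $\lesssim \sqrt d\,\delta\sqrt{\log(1/\delta)}$; both are bounded by an absolute multiple of $\sqrt d$ once $j > \log_2(8C_\M/C_1)$, since then $C_\M 2^{-2j} \le \tfrac{C_1}{8}2^{-j}$ and $\sup_{j\ge1}2^{-j}\sqrt j$ is an absolute constant. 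This is precisely where the hypothesis on $j$ is used: it forces the (potentially $j$-growing) fine-scale cost to collapse.

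Putting the two regimes together gives $w(\Mr_j) \lesssim (\text{geometric bound}) + (C_1+1)\sqrt d$, and since the geometric right-hand side is itself at least a constant multiple of $\sqrt d$, all stray $\sqrt d$-terms are absorbed. The one delicate point — and the main obstacle — is making the fine-scale estimate genuinely independent of $j$ and $D$: the crude affine cover carries the factor $2^{dj}$, and the whole argument hinges on confining its use to the interval $[0,2\delta]$ of length $\sim 2^{-2j}$, so that the product $2^{-2j}\sqrt{dj}$ stays bounded. This is exactly what the assumption $j > \max\{j_0,\log_2(8C_\M/C_1)\}$ buys us.
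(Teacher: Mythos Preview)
Your argument is correct, but the fine-scale step differs from the paper's. You split Dudley's integral at $2\delta = 2C_\M 2^{-2j}$ and on $[0,2\delta]$ use the crude bound $\mathcal N(\Mr_j,\eps)\le K_j(6/\eps)^d\le C_\C 2^{dj}(6/\eps)^d$, relying on the exponentially short interval to kill the $2^{dj}$ factor; this forces you to invoke $C_\M 2^{-j}\le C_1/8$ and leaves a $C_1$-dependence in the final constant. The paper instead splits at some $\tilde\eps\in[2C_\M 2^{-2j},\tfrac14 C_1 2^{-j}]$ and, on the fine range, replaces the crude count $K_j$ by $\mathcal N(\M,\eps)$ via a packing argument: since GMRA property~\eqref{GMRA2b} makes the centers a $C_1 2^{-j-1}$-packing of (a tube of) $\M$, one has $K_j\le\mathcal N(\M,C_1 2^{-j-2})\le\mathcal N(\M,\eps)$ whenever $\eps\le\tfrac14 C_1 2^{-j}$, giving $\mathcal N(\Mr_j,\eps)\le(6/\eps)^d\mathcal N(\M,\eps)$ (this is \Cref{lem:Tech3}). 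The two pieces then merge into the single integral $\int_0^2\sqrt{d\log(6/\eps)+\log\mathcal N(\M,\eps/2)}\,\dd\eps$, which Jensen and the covering estimates of \Cref{ProofRieman} bound by the geometric quantity --- with constants genuinely independent of $C_1$, $C_\M$, and $j$. So the paper's route buys truly absolute constants and a tidier single integral; yours is more elementary (it skips the packing lemma) but pays with a $C_1$-factor in the implicit constant.
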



Finally, the following lemma quantifies the equivalence between Euclidean and normalized geodesic distance on the sphere.

\begin{lemma}\label{Lem:NormalizedGeo}
	For $\z, \z'\in \St^{D-1}$ one has
	\begin{equation*}
	d_G( \z, \z')\leq\| \z- \z'\|_2\leq \pi d_G( \z, \z').
	\end{equation*}
\end{lemma}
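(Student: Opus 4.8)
The plan is to reduce everything to the angle $\theta \in [0,\pi]$ between the two unit vectors, defined by $\cos\theta = \langle \z, \z' \rangle$. First I would record the two explicit expressions in terms of $\theta$. The unnormalized geodesic (great-circle) distance on $\St^{D-1}$ is exactly the angle $\theta$; since antipodal points satisfy $\theta = \pi$, the normalization $d_G(\z'',-\z'')=1$ forces the constant to be $1/\pi$, so that $d_G(\z,\z') = \theta/\pi$. On the Euclidean side, expanding the norm and using $\|\z\|_2 = \|\z'\|_2 = 1$ together with the half-angle identity gives $\|\z-\z'\|_2 = \sqrt{2 - 2\cos\theta} = 2\sin(\theta/2)$.

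With these substitutions the claimed chain $d_G(\z,\z') \le \|\z-\z'\|_2 \le \pi\, d_G(\z,\z')$ becomes the purely scalar statement $\theta/\pi \le 2\sin(\theta/2) \le \theta$ for all $\theta \in [0,\pi]$. Setting $x = \theta/2 \in [0,\pi/2]$, the right-hand inequality is just $\sin x \le x$, valid for every $x \ge 0$. For the left-hand inequality I would use concavity: the map $\theta \mapsto 2\sin(\theta/2)$ has second derivative $-\tfrac12\sin(\theta/2) \le 0$ on $[0,\pi]$, vanishes at $\theta = 0$, and takes the value $2$ at $\theta = \pi$, so it lies above the secant line through $(0,0)$ and $(\pi,2)$; that is, $2\sin(\theta/2) \ge \tfrac{2}{\pi}\theta \ge \tfrac{1}{\pi}\theta$, which is in fact slightly stronger than needed. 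Equivalently, one may simply cite Jordan's inequality $\sin x \ge \tfrac{2}{\pi}x$ on $[0,\pi/2]$.

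There is no genuine obstacle here; the statement is the standard comparison of the chordal and geodesic metrics on the sphere. The only points requiring a little care are fixing the normalization constant correctly from the stated convention $d_G(\z'',-\z'')=1$ (this is precisely what produces the factor $\pi$ in the upper bound), and checking that $\theta$ ranges over $[0,\pi]$ so that the half-angle $\theta/2$ stays in $[0,\pi/2]$, the regime in which both elementary sine inequalities apply.
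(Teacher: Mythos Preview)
Your proof is correct and follows essentially the same approach as the paper: both reduce to the angle $\theta$ (equivalently, to the variable $x = d_G(\z,\z') = \theta/\pi \in [0,1]$) and verify the resulting scalar inequalities. Your use of the half-angle identity and Jordan's inequality makes the verification slightly more explicit than the paper's, which simply asserts that $f(x)=\sqrt{2-2\cos(\pi x)}-x \ge 0$ on $[0,1]$ and that chord length is bounded by arc length.
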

\begin{proof}
	First observe that $\langle \z, \z'\rangle =\cos\measuredangle(\z,\z')=\cos(\pi d_G(\z,\z'))$. This yields
	\begin{equation*}
	\| \z- \z'\|_2-d_G( \z, \z')=\sqrt{2-2\cos(\pi d_G(\z,\z'))}-d_G(\z,\z')\geq 0
	\end{equation*}
	as the function $f(x)=\sqrt{2-2\cos(\pi x)}-x$ is non-negative on $\left[0,1\right]$.\\
	For the upper bound note the relation between the geodesic distance $\tilde d_G$ and the normalized geodesic distance $d_G$
	\begin{equation*}
	    \tilde d_G(\z,\z')=\pi d_G(\z,\z')
	\end{equation*}
	which yields
	\begin{equation*}
	    \| \z- \z'\|_2\leq\tilde d_G(\z,\z')=\pi d_G(\z,\z').
	\end{equation*}
\end{proof}

We now have the preliminary results necessary in order to prove Theorem \ref{GeneralApproximation}.

\subsection{Proof of Theorem \ref{GeneralApproximation} with Axiomatic GMRA} \label{ProofAxiomatic}

\paragraph{} Recall that our theoretical result concerns \nameref{algorithm} with recovery performed using \eqref{MinC} and \eqref{MinX}.  The proof is based on following idea. We first control the error $\| \cj_{j,k'} - \x \|_2$ made by \eqref{MinC} in approximating a GMRA center closest to $\x$. To do so we make use of Plan and Vershynin's result on $\delta$-uniform tessellations in \cite{Plan2014}. Recall the equivalence between one-bit measurements and random hyperplanes.

\begin{definition}[Uniform tessellation, {\cite[Definition 1.1]{Plan2014}}]
	Let $K \subset \St^{D-1}$ and an arrangement of $m$ hyperplanes in $\R^D$ be given via a matrix $A$ (i.e., the $j$-th row of $A$ is the normal to the $j$-th hyperplane). Let $d_A(\x,\y) \in [0,1]$ denote the fraction of hyperplanes separating $\x$ and $\y$ in $K$ and let $d_G$ be the normalized geodesic distance on the sphere, i.e.\ opposite poles have distance one. Given $\delta > 0$, the hyperplanes provide a $\delta$-uniform tessellation of $K$ if
	\begin{align*}
	| d_A(\x,\y) - d_G(\x,\y)| \le \delta
	\end{align*}
	holds for all $\x, \y \in K$.
\end{definition}

\begin{theorem}[Random Uniform Tessellation, {\cite[Theorem 3.1]{Plan2014}}] \label{RUT}
	Consider a subset $K\subseteq \St^{D-1}$ and let $\delta>0$. Let
	\begin{equation*}
	m\geq \bar{C}\delta^{-6}\max \{w(K)^2, 2/\pi \}
	\end{equation*}
	and consider an arrangement of $m$ independent random hyperplanes in $\R^D$ uniformly distributed according to the Haar measure. Then with probability at least $1-2\exp(-c\delta^2m)$, these hyperplanes provide a $\delta$-uniform tessellation of $K$. Here and later $\bar C,c$ denote positive absolute constants.
\end{theorem}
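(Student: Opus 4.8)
The plan is to reduce the statement to a uniform concentration estimate for the empirical process $(\x,\y)\mapsto d_A(\x,\y)$ around its mean. The starting point is a pointwise identity: for fixed $\x,\y\in K$ a single random hyperplane with Haar-distributed normal $\av$ separates $\x$ and $\y$ exactly when $\sign\langle\av,\x\rangle\neq\sign\langle\av,\y\rangle$, and the probability of this event equals the normalized angle between $\x$ and $\y$, i.e.\ $d_G(\x,\y)$ (recall that the normalization makes antipodal points have distance $1$). Since $d_A(\x,\y)$ is the average of $m$ i.i.d.\ copies of this indicator, $\E{d_A(\x,\y)}=d_G(\x,\y)$, so it suffices to control $Z:=\sup_{\x,\y\in K}|d_A(\x,\y)-d_G(\x,\y)|$ and show $Z\le\delta$ with the claimed probability.

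Second, I would dispatch the concentration of $Z$ about its mean by bounded differences. Resampling a single hyperplane changes each $d_A(\x,\y)$, and hence the supremum $Z$, by at most $1/m$; McDiarmid's inequality then yields $\Pr{Z\ge\E{Z}+t}\le\exp(-2mt^2)$. Choosing $t\asymp\delta$ produces a tail of the form $2\exp(-c\delta^2 m)$, exactly as stated. The whole problem therefore collapses to the single deterministic-looking bound $\E{Z}\lesssim\delta$, and this is where the hypothesis $m\gtrsim\delta^{-6}\max\{w(K)^2,2/\pi\}$ must enter.

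Third, and this is the crux, I would estimate $\E{Z}$ by an $\eps$-net argument. Fix an $\eps$-net $N$ of $K$ in the Euclidean metric; by Sudakov minoration $\log|N|=\log\mathcal{N}(K,\eps)\lesssim w(K)^2/\eps^2$. For pairs drawn from $N$, a union bound over the $|N|^2$ pairs together with the pointwise Hoeffding estimate gives $\sup_{\x',\y'\in N}|d_A(\x',\y')-d_G(\x',\y')|\le\delta/2$ as soon as $m\gtrsim\delta^{-2}\log|N|\asymp\delta^{-2}\eps^{-2}w(K)^2$. The difficulty is passing from the net to all of $K$: since $d_G$ is Lipschitz with respect to Euclidean distance (Lemma~\ref{Lem:NormalizedGeo}), the geodesic part transfers for free, but $d_A$ is \emph{discontinuous}, so I cannot simply invoke continuity. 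Instead I would use that $d_H$, and hence $d_A$, is a metric and bound, for net points $\x',\y'$ within $\eps$ of $\x,\y$, $|d_A(\x,\y)-d_A(\x',\y')|\le d_A(\x,\x')+d_A(\y,\y')$, which reduces matters to a uniform bound on $\sup_{\x\in K,\ \|\x-\x'\|_2\le\eps} d_A(\x,\x')$, i.e.\ the largest fraction of hyperplanes that can separate two $\eps$-close points of $K$.

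The main obstacle is precisely this last quantity. Although its expectation $d_G(\x,\x')\lesssim\eps$ is tiny, controlling it \emph{uniformly} over the continuum of nearby pairs requires a bound on the maximal number of hyperplanes that cut an $\eps$-cap around any point of $K$, which I would establish through a second, finer net combined with the fact that the arrangement induced on $K$ by $m$ hyperplanes has complexity again controlled by $w(K)$. Balancing the two competing demands — the net term wants $\eps$ small while the cap-cutting term wants $\eps$ large — forces a polynomial loss and is exactly what degrades the naive $\delta^{-2}$ scaling to the stated $\delta^{-6}$. As this is the statement of Plan and Vershynin's theorem, one may alternatively invoke their result verbatim; the sketch above merely indicates the route their proof takes.
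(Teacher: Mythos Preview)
The paper does not prove this theorem at all: it is quoted verbatim as \cite[Theorem 3.1]{Plan2014} and used as a black box. The only argument the paper supplies is the short Remark immediately following the statement, which observes that the original version is phrased in terms of $\gamma(K)$ rather than $w(K)$ and that \eqref{WidthInequality} gives $\gamma(K)\le 3w(K)$ for $K\subseteq\St^{D-1}$ with $w(K)\ge\sqrt{2/\pi}$, so the stated form holds after enlarging $\bar C$ by a factor of $9$. Your own final sentence (``one may alternatively invoke their result verbatim'') is therefore exactly what the paper does, and is all that is required here.

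Your longer sketch of the Plan--Vershynin argument is a reasonable outline of the ideas behind their proof, but it is neither needed nor what the paper presents. If you wish to keep a justification, the clean route is simply to cite \cite[Theorem 3.1]{Plan2014} and add the one-line $\gamma(K)\le 3w(K)$ conversion, as the paper does; the heuristic chaining/net discussion and the McDiarmid step are extraneous in this context.
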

\begin{remark}
	In words Theorem \ref{RUT} states that if the number of one-bit measurements scale at least linearly in intrinsic dimension of a set $K \subset \St^{D-1}$ then with high probability the percentage of different measurements of two points $x,y \in K$ is closely related to their distance on the sphere. Implicitly the diameter of all tessellation cells is bounded by $\delta$.
	
	The original version of Theorem \ref{RUT} uses $\gamma(K)$ instead of $w(K)$. However, note that by \eqref{WidthInequality} we get for $K\subseteq \St^{D-1}$ that $\gamma (K) \leq w(K-K)+\sqrt{2/\pi} \leq 3w(K)$ as long as the $w(K) \ge \sqrt{2/\pi}$ which is reasonable to assume. Hence, if $\bar{C}$ is changed by a factor of $9$, Theorem \ref{RUT} can be stated as above.
\end{remark}

\paragraph{} Using these results we will show in Lemma \ref{cjk'Bound} that the center $\cj_{j,k'}$ identified in step \ref{I} of the algorithm \nameref{algorithm} satisfies $\| \x - \cj_{j,k'} \|_2 \le 16 \max \{ \| \x - \cj_{j,k_j(\x)} \|_2, C_1 2^{-j-1} \}$ in Lemma \ref{cjk'Bound}. Therefore, the GMRA property \eqref{3b} provides an upper bound on $\| \x-\P_{j,k'}(\x) \|_2$. What remains is to then bound the gap between $\P_{j,k'}(\x)$ and the approximation $\x^\ast$. This happens in two steps. First, Plan and Vershynin's result on noisy one-bit sensing (see Theorem \ref{NoisyOneBit}) is applied to a scaled version of \eqref{MinX} bounding the distance between $\P_{j,k'}(\x)$ and $\bar{\x}$ (the minimizer of the scaled version). This argument works by interpreting the true measurements $\y$ as a noisy version of the non-accessible one-bit measurements of $\P_{j,k'}(\x)$. The rescaling becomes necessary as Theorem \ref{NoisyOneBit} is restricted to the unit ball in Euclidean norm. Lastly, a geometric argument is used to bound the distance between the minimum points $\bar{\x}$ and $\x^\ast$ in order to conclude the proof.
\begin{theorem}[Noisy One-Bit, {\cite[Theorem 1.3]{Plan2013}}] \label{NoisyOneBit}
	Let $\av_1,...,\av_m$ be i.i.d\ standard Gaussian random vectors in $\R^D$ and let $K$ be a subset of the Euclidean unit ball in $\R^D$. Let $\delta > 0$ and suppose that
	\begin{align*}
	m \ge C'\delta^{-6} w(K)^2.
	\end{align*}
	Then with probability at least $1-8\exp (-c\delta^2 m)$, the following event occurs. Consider a signal $\tilde{\x} \in K$ satisfying $\| \tilde{\x} \|_2 = 1$ and its (unknown) uncorrupted one-bit measurements $\tilde{\y} = (\tilde{y}_1,\dots,\tilde{y}_m)$ given as
	\begin{align*}
	\tilde{y}_i = \sign(\langle \av_i,\tilde{\x} \rangle ), \tab i = 1,2,...,m.
	\end{align*}
	Let $\y=(y_1,...,y_m) \in \{ -1,1 \}^m$ be any (corrupted) measurements satisfying $d_H(\tilde{\y},\y) \le \tau m$. Then the solution $\bar{\x}$ to the optimization problem
	\begin{align*}
	\bar{\x} = \argmax_\z \sum_{i = 1}^{m} y_i \langle \av_i,\z \rangle \tab \text{ subject to } \z \in K
	\end{align*}
	with input $\y$ satisfies
	\begin{align*}
	\| \bar{\x} - \tilde{\x} \|_2^2 \le \delta \sqrt{\log\left( \frac{e}{\delta} \right)} + 11\tau \sqrt{\log \left( \frac{e}{\tau} \right)}.
	\end{align*}
\end{theorem}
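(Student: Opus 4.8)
The plan is to follow the linear-programming strategy underlying this result. The crucial observation is that the objective is \emph{linear} in $\z$: writing $\bv := \frac1m \sum_{i=1}^m y_i \av_i$, the program becomes $\bar\x = \argmax_{\z \in K} \langle \z, \bv\rangle$, so everything reduces to understanding the single random vector $\bv$. First I would analyze the \emph{uncorrupted} case $y_i = \tilde y_i = \sign(\langle \av_i, \tilde\x\rangle)$ in expectation. A standard Gaussian identity gives $\E{\tilde y_i \av_i} = \lambda\, \tilde\x$ with $\lambda = \sqrt{2/\pi}$ (the cross term $\E{\sign(g)g} = \E{|g|} = \sqrt{2/\pi}$ for $g \sim \mathcal N(0,1)$, and orthogonal components average to zero), so in expectation $\bv$ points in the signal direction and $\tilde\x$ nearly maximizes the population objective.

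The core is then a comparison inequality. Since $\bar\x$ is optimal and $\tilde\x$ is feasible, $\langle \bar\x, \bv\rangle \ge \langle \tilde\x, \bv\rangle$. Subtracting $\lambda\langle\,\cdot\,,\tilde\x\rangle$ from both sides and using $\|\tilde\x\|_2 = 1$ gives
\[
\lambda\left(1 - \langle \bar\x,\tilde\x\rangle\right) \;\le\; \langle \bar\x - \tilde\x,\; \bv - \lambda\tilde\x\rangle \;\le\; \sup_{\z \in K-K} \langle \z,\; \bv - \lambda\tilde\x\rangle .
\]
Because $K \subset \B(\0,1)$ while $\|\tilde\x\|_2 = 1$, one has $\|\bar\x - \tilde\x\|_2^2 \le 2\left(1 - \langle\bar\x,\tilde\x\rangle\right)$, so the entire theorem reduces to a uniform bound on the linear process $\z \mapsto \langle \z, \bv - \lambda\tilde\x\rangle$ over the difference set $K - K$.

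To control that supremum I would split $\bv - \lambda\tilde\x = (\tilde{\bv} - \lambda\tilde\x) + (\bv - \tilde{\bv})$, where $\tilde{\bv} := \frac1m\sum_i \tilde y_i \av_i$ is the uncorrupted empirical mean. The first term is a centered empirical average, and I expect $\E{\sup_{\z\in K-K}\langle\z,\tilde{\bv} - \lambda\tilde\x\rangle}$ to be of order $w(K)/\sqrt m$; combined with the hypothesis $m \gtrsim \delta^{-6} w(K)^2$ and a Gaussian-concentration tail this yields the $\delta\sqrt{\log(e/\delta)}$ contribution. The second term captures the corruption: since $\y$ and $\tilde\y$ differ in at most $\tau m$ coordinates and $|y_i - \tilde y_i| \le 2$, one has $\bv - \tilde{\bv} = \frac2m \sum_{i \in T} \pm\av_i$ for some $|T| \le \tau m$, and bounding $\sup_{\z\in K-K}\langle\z,\bv-\tilde{\bv}\rangle$ uniformly over all admissible sign-flip sets $T$ produces the $\tau\sqrt{\log(e/\tau)}$ term.

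The main obstacle is the uniform control of these two processes with the correct logarithmic factors. For the noise term this requires chaining, i.e.\ a Gaussian deviation inequality phrased through $w(K)$, plus a careful passage from the in-expectation estimate to a high-probability statement (the origin of the $8\exp(-c\delta^2 m)$ failure probability). For the corruption term the difficulty is that $T$ is adversarial, so one must take a supremum over the $\binom{m}{\tau m}$ possible flip patterns; the entropy of this family is precisely what upgrades a clean $\tau$ rate to $\tau\sqrt{\log(e/\tau)}$, and arranging that both the noise bound and the corruption bound hold on a \emph{single} high-probability event is the delicate step that ties the argument together.
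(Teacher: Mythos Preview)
The paper does not prove this statement: it is quoted verbatim as \cite[Theorem~1.3]{Plan2013} and used as a black box (the only addition is the remark that $w(K-K)$ in the original can be replaced by $w(K)$ up to constants). So there is no ``paper's own proof'' to compare against.

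That said, your sketch is the right high-level architecture of the Plan--Vershynin argument: linearize via $\bv=\frac1m\sum_i y_i\av_i$, use the identity $\E{\sign(\langle\av,\tilde\x\rangle)\,\av}=\sqrt{2/\pi}\,\tilde\x$, combine optimality with $\|\bar\x-\tilde\x\|_2^2\le 2(1-\langle\bar\x,\tilde\x\rangle)$, and split the deviation into a centered empirical-process piece and an adversarial bit-flip piece whose entropy $\log\binom{m}{\tau m}\approx \tau m\log(e/\tau)$ produces the $\tau\sqrt{\log(e/\tau)}$ factor. One point you gloss over: the event must hold \emph{uniformly} over all $\tilde\x\in K$ with $\|\tilde\x\|_2=1$, so the empirical process you need to control is indexed by \emph{pairs} $(\z,\tilde\x)$, not just $\z\in K-K$; this is where the actual work in \cite{Plan2013} lies and where the $\delta^{-6}$ dependence and the precise $\delta\sqrt{\log(e/\delta)}$ rate come from, rather than the naive $w(K)/\sqrt m$ heuristic you state.
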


\begin{remark}
    Theorem \ref{NoisyOneBit} yields guaranteed recovery of unknown signals $x \in K \subset \B(0,1)$ up to a certain error by the formulation we use in \eqref{MinX} from one-bit measurements if the number of measurements scales linearly with the intrinsic dimension of $K$. The recovery is robust to noise on the measurements. Note that the original version of Theorem \ref{NoisyOneBit} uses $w(K-K)$ instead of $w(K)$. As $w(K-K)\leq 2w(K)$ by \eqref{WidthInequality} the result stated above also holds for a slightly modified constant $C'$.
\end{remark}

\paragraph{} We begin by proving Lemma \ref{cjk'Bound}.

\begin{lemma}\label{cjk'Bound}
	If $m \ge \bar{C} C_1^{-6} 2^{6(j+1)} \max \{ w(\M \cup \P_\St (\C_j))^2, 2/ \pi \}$ the center $\cj_{j,k'}$ chosen in step \ref{I} of Algorithm \nameref{algorithm} fulfills
	\begin{align*}
	\| \x - \cj_{j,k'} \|_2 \le 16 \max \{ \| \x - \cj_{j,k_j(\x)} \|_2, C_1 2^{-j-1} \}.
	\end{align*}
	for all $\x \in \M \subset \St^{D-1}$ with probability at least $1-2\exp(-c (C_1 2^{-j-1})^2 m)$.
\end{lemma}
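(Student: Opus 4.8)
The plan is to use the random uniform tessellation result \Cref{RUT} to turn the Hamming-distance optimality of $\cj_{j,k'}$ into geodesic, and hence Euclidean, near-optimality, and then to correct for the fact that the GMRA centers do not lie exactly on $\St^{D-1}$ by invoking the tube property \eqref{tube}. The crucial preliminary observation is that one-bit measurements are invariant under positive rescaling: since $\sign(A\cj_{j,k}) = \sign\!\left(A\,\P_\St(\cj_{j,k})\right)$ for every center with $\cj_{j,k}\neq\0$, the selection rule \eqref{MinC} in step~\ref{I} is governed entirely by the projected set $\P_\St(\C_j) \subset \St^{D-1}$. We may therefore carry out the whole argument on the sphere for the set $K := \M \cup \P_\St(\C_j) \subset \St^{D-1}$, which is exactly the set whose Gaussian width appears in the measurement hypothesis.

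First I would set $\delta := C_1 2^{-j-1}$ and apply \Cref{RUT} to $K$. The assumption $m \ge \bar C C_1^{-6} 2^{6(j+1)} \max\{w(K)^2, 2/\pi\}$ is precisely $m \ge \bar C \delta^{-6}\max\{w(K)^2,2/\pi\}$, so on an event of probability at least $1 - 2\exp(-c\delta^2 m) = 1 - 2\exp\!\big(-c(C_1 2^{-j-1})^2 m\big)$ the hyperplanes provide a $\delta$-uniform tessellation of $K$, i.e.\ $|d_A(\z,\z') - d_G(\z,\z')| \le \delta$ for all $\z,\z' \in K$. Because this holds simultaneously for all pairs, the final bound will be uniform over $\x\in\M$.

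Next I would exploit optimality. Writing $\hat\cj^\ast := \P_\St(\cj_{j,k'})$ and $\hat\cj^{opt} := \P_\St(\cj_{j,k_j(\x)})$, the choice \eqref{MinC} gives $d_A(\x,\cj_{j,k'}) \le d_A(\x,\cj_{j,k_j(\x)})$, and by scaling invariance these coincide with $d_A(\x,\hat\cj^\ast)$ and $d_A(\x,\hat\cj^{opt})$. Applying the tessellation bound on each side yields
\begin{align*}
d_G(\x,\hat\cj^\ast) \le d_A(\x,\hat\cj^\ast) + \delta \le d_A(\x,\hat\cj^{opt}) + \delta \le d_G(\x,\hat\cj^{opt}) + 2\delta.
\end{align*}
Converting to Euclidean distance through \Cref{Lem:NormalizedGeo} (its upper bound $\|\cdot\|_2 \le \pi d_G$ on the left, its lower bound $d_G \le \|\cdot\|_2$ on the right) gives $\|\x - \hat\cj^\ast\|_2 \le \pi \|\x - \hat\cj^{opt}\|_2 + 2\pi\delta$. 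To remove the projections, note that by \eqref{tube} every center at level $j > j_0$ lies in $\tube_{C_1 2^{-j-2}}(\M) \subset \tube_{C_1 2^{-j-2}}(\St^{D-1})$, so $\|\cj_{j,k} - \P_\St(\cj_{j,k})\|_2 = \dist(\cj_{j,k},\St^{D-1}) \le C_1 2^{-j-2}=:r$. Two triangle inequalities (one per center) then produce $\|\x - \cj_{j,k'}\|_2 \le \pi\|\x - \cj_{j,k_j(\x)}\|_2 + (5\pi+1)r$, using $\delta = 2r$. Since $r = \tfrac12 C_1 2^{-j-1}$, bounding each summand by $\max\{\|\x-\cj_{j,k_j(\x)}\|_2,\, C_1 2^{-j-1}\}$ yields a prefactor $\pi + \tfrac{5\pi+1}{2} = \tfrac{7\pi+1}{2} \approx 11.5 \le 16$, which is the claim.

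The main obstacle is the on/off-sphere bookkeeping: the tessellation machinery lives on $\St^{D-1}$, whereas the GMRA centers sit in a thin tube around $\M$. One must therefore cleanly combine (a) the scaling invariance of $\sign$, which legitimizes replacing $\C_j$ by $\P_\St(\C_j)$ and working with $K\subset\St^{D-1}$, and (b) the tube radius $C_1 2^{-j-2}$, which controls $\|\cj_{j,k}-\P_\St(\cj_{j,k})\|_2$. Once one recognizes that both the tessellation resolution $\delta$ and the tube radius are comparable to $C_1 2^{-j-1}$, assembling the slack terms into the clean constant $16$ is routine.
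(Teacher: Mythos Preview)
Your proposal is correct and follows essentially the same route as the paper's proof: both set $\delta=C_1 2^{-j-1}$, invoke the scaling invariance $\sign(A\cj_{j,k})=\sign(A\,\P_\St(\cj_{j,k}))$ to reduce to $K=\M\cup\P_\St(\C_j)\subset\St^{D-1}$, apply \Cref{RUT} to pass from Hamming to geodesic near-optimality, convert via \Cref{Lem:NormalizedGeo}, and then correct for the off-sphere error using the tube bound $\|\cj_{j,k}-\P_\St(\cj_{j,k})\|_2\le C_1 2^{-j-2}$ from property~\eqref{tube}. Your final constant $(7\pi+1)/2\le 16$ is exactly the one the paper obtains (written there as $\pi+\tfrac{\pi}{2}+2\pi+\tfrac12$).
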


\begin{proof}
	By definition of $\cj_{j,k'}$ in \eqref{MinC} we have that
	\begin{align*}
	d_H ( \sign(A \cj_{j,k'}),\y ) \le d_H ( \sign(A\cj_{j,k_j(\x)}),\y ).
	\end{align*}
	As, for all $\z,\z' \in \R^D$, $d_H(\sign(A\z),\sign(A\z')) = m \cdot d_A(\z,\z') = m \cdot d_A(\P_\St(\z),\P_\St(\z'))$, this is equivalent to
	\begin{align*}
	d_A ( \P_\St (\cj_{j,k'}),\x ) \le d_A ( \P_\St (\cj_{j,k_j(\x)}),\x ).
	\end{align*}
	Noting that Gaussian random vectors and Haar random vectors yield identically distributed hyperplanes, 
	Theorem \ref{RUT} now transfers this bound to the normalized geodesic distance, namely
	\begin{align*}
	d_G ( \P_\St (\cj_{j,k'}),\x ) \le d_G ( \P_\St (\cj_{j,k_j(\x)}),\x ) + 2\delta
	\end{align*}
	with probability at least $1-2\exp(-c\delta^2m)$ where $\delta = C_1 2^{-j-1}$. Observe $d_G(\z,\z') \le \| \z - \z' \|_2 \le \pi d_G(\z,\z')$ for all $\z,\z' \in \St^{D-1}$ (recall Lemma \ref{Lem:NormalizedGeo}) which leads to
	\begin{align*}
	\| \P_\St (\cj_{j,k'}) - \x \|_2 &\le \pi d_G( \P_\St ( \cj_{j,k_j(\x)}),\x) + 2\pi \delta \\
	&\le \pi \| \P_\St (\cj_{j,k_j(\x)}) - \x \|_2 + 2\pi \delta.
	\end{align*}
	As by property \eqref{tube} the centers are close to the manifold, they are also close to the sphere and we have $\| \P_\St (\cj_{j,k}) - \cj_{j,k} \|_2 < C_1 2^{-j-2}$, for all $\cj_{j,k} \in \C_j$. Hence, we conclude
	\begin{align*}
	\| \cj_{j,k'} - \x \|_2 &\le\|\cj_{j,k'}-\P_\St(\cj_{j,k'})\|_2+\|\P_\St(\cj_{j,k'})-x\|_2\\
	&\le \pi ( \| \cj_{j,k_j(\x)} - \x \|_2 + C_1 2^{-j-2} ) + 2\pi \delta + C_1 2^{-j-2} \\
	&\le \left( \pi + \frac{\pi}{2} + 2\pi + \frac{1}{2} \right) \max \{ \| \cj_{j,k_j(\x)} - \x \|_2, C_1 2^{-j-1} \} \\
	&\le 16 \max \{ \| \cj_{j,k_j(\x)} - \x \|_2, C_1 2^{-j-1} \}.
	\end{align*}
\end{proof}

\paragraph{} We can now prove a detailed version of Theorem \ref{GeneralApproximation} for the given axiomatic GMRA and deduce Theorem \ref{GeneralApproximation} as a corollary.

\begin{theorem}[Uniform Recovery - Axiomatic Case] \label{ApproximationTheorem}

Let $\M \subset \St^{D-1}$ be given by its GMRA for some levels $j_0 < j \le J$, such that $C_1 < 2^{j_0 + 1}$ 
	where $C_1$ is the constant from GMRA properties \eqref{GMRA2b} and \eqref{tube}. 
	Fix $j$ and assume that $\dist (\0,\M_j) \ge 1/2$.  Further, let $d \geq 1$ and
	\begin{align} \label{mBoundUniformRecoveryAxiomaticCase}
		m \ge 16\max\{C',\bar{C}\} C_1^{-6} 2^{6(j+1)} (w(\M) + C\sqrt{dj})^2,
	\end{align}
	where $C'$ is the constant from Theorem \ref{NoisyOneBit}, $\bar{C}$ from Theorem \ref{RUT}, and $C > 3$ from Lemma \ref{BoundOfGaussianWidth}. Then, with probability at least $1 - 12\exp(-c (C_1 2^{-j-1})^2 m)$ the following holds for all $\x \in \M$ with one-bit measurements $\y = \sign(A\x)$ and GMRA constants $\tilde{C}_\x$ from property \eqref{3b} satisfying $\tilde{C}_\x < 2^{j - 1}$:  The approximations $\x^\ast$ obtained by \nameref{algorithm2} fulfill
	\begin{align*}
		\| \x - \x^\ast \|_2 \le   
		\left( 2\tilde{C}_\x 2^{-\frac{j}{2}} + \sqrt{\frac{C_1}{2}} \sqrt[4]{\log \left( \frac{4e}{\min \{C_1,1\}} \right)} + \sqrt{11 C'_\x} \sqrt[4]{\log \left( \frac{2e}{\min \{C'_\x, 1\}} \right)} \right) \sqrt[4]{j} 2^{-\frac{j}{2}}.
	\end{align*}
	Here $C'_\x := 2\tilde{C}_\x + C_1$.
\end{theorem}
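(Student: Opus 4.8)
The plan is to bound $\|\x - \x^*\|_2$ by a triangle inequality through the auxiliary point $\tilde\x := \P_\St(\P_{j,k'}(\x))$, the sphere-normalization of the projection of $\x$ onto the cell $P_{j,k'}$ selected in step \ref{I2}. Writing $\|\x - \x^*\|_2 \le \|\x - \tilde\x\|_2 + \|\tilde\x - \x^*\|_2$, the first summand measures how well the identified cell approximates $\x$, while the second is precisely the noisy one-bit recovery error of the convex program \eqref{MinXnew}. The observation that makes \nameref{algorithm2} clean is that its feasible set $K' := \conv(\P_\St(P_{j,k'}\cap\B(\0,2)))$ already lies in $\B(\0,1)$, so no rescaling is necessary and the minimizer $\x^*$ of \eqref{MinXnew} is literally the maximizer $\bar\x$ of \Cref{NoisyOneBit} applied with $K=K'$.

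First I would handle the cell-identification error. \Cref{cjk'Bound} gives $\|\x - \cj_{j,k'}\|_2 \le 16\max\{\|\x-\cj_{j,k_j(\x)}\|_2, C_1 2^{-j-1}\}$, which is exactly the hypothesis of GMRA property \eqref{3b}, so $\|\x - \P_{j,k'}(\x)\|_2 \le \tilde C_\x 2^{-j}$. As $\tilde C_\x < 2^{j-1}$, this forces $\|\P_{j,k'}(\x)\|_2 \in (1/2,3/2)$; hence $\tilde\x$ is well-defined, has unit norm, and $\P_{j,k'}(\x)\in P_{j,k'}\cap\B(\0,2)$, so $\tilde\x\in K'$. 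For $\|\x-\tilde\x\|_2$ I would use the elementary spherical identity, valid for $\|\x\|_2=1$ and $p := \P_{j,k'}(\x)$, that $\|\P_\St(p) - \x\|_2^2 = (\|p-\x\|_2^2 - (\|p\|_2-1)^2)/\|p\|_2 \le \|p-\x\|_2^2/\|p\|_2$; with $\|p\|_2 \ge 1/2$ this yields $\|\x-\tilde\x\|_2 \le \sqrt{2}\,\tilde C_\x 2^{-j} \le 2\tilde C_\x 2^{-j}$, the first term of the claimed bound (after inserting the harmless factor $\sqrt[4]{j} \ge 1$).

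Next comes the noisy one-bit step. Interpreting $\y = \sign(A\x)$ as corrupted versions of the inaccessible measurements $\tilde\y = \sign(A\tilde\x) = \sign(A\P_{j,k'}(\x))$, the noise level is $\tau := d_H(\tilde\y,\y)/m = d_A(\tilde\x,\x)$. Applying \Cref{RUT} with $\delta = C_1 2^{-j-1}$ to the spherical set $\M \cup \P_\St(\M_j)$ (which contains both $\x$ and $\tilde\x$, and whose width is $\le 2w(\M)+C\sqrt{dj}$ by \Cref{BoundOfGaussianWidth} and Remark \ref{rem:BoundOfGaussianWidth}, once one notes that $\P_\St$ is $2$-Lipschitz on the annulus $\{1/2\le\|p\|_2\le 2\}\supseteq\M_j$ so the Dudley estimate transfers), then \Cref{Lem:NormalizedGeo} gives $\tau \le \|\tilde\x-\x\|_2 + \delta \le 2\tilde C_\x 2^{-j} + C_1 2^{-j} = C'_\x 2^{-j}$. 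Now \Cref{NoisyOneBit}, applied to the union $K = \bigcup_k K'_k$ (whose uniform concentration event controls every subset $K'_{k'}$ and is uniform over admissible unit-norm signals), yields $\|\tilde\x - \x^*\|_2^2 \le \delta\sqrt{\log(e/\delta)} + 11\tau\sqrt{\log(e/\tau)}$; the width bookkeeping together with $\delta^{-6}=C_1^{-6}2^{6(j+1)}$ shows that the count \eqref{mBoundUniformRecoveryAxiomaticCase} dominates the hypotheses of both \Cref{RUT} and \Cref{NoisyOneBit}. The three good events (two tessellations, one recovery) union-bound to failure probability $\le 12\exp(-c(C_1 2^{-j-1})^2 m)$.

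Finally I would massage the squared error into the stated form. Using $\sqrt{a+b}\le\sqrt{a}+\sqrt{b}$ the bound splits into a $\delta$-part and a $\tau$-part. Since $\log(e/\delta) \le j\log(4e/\min\{C_1,1\})$ for $j\ge1$, the $\delta$-part becomes $\sqrt{C_1/2}\,\sqrt[4]{\log(4e/\min\{C_1,1\})}\,\sqrt[4]{j}\,2^{-j/2}$; and because $t\mapsto t\sqrt{\log(e/t)}$ is increasing on $(0,1]$ while $\tau\le\min\{1,C'_\x 2^{-j}\}$, the analogous estimate $\log(e/\tau)\le j\log(2e/\min\{C'_\x,1\})$ turns the $\tau$-part into $\sqrt{11 C'_\x}\,\sqrt[4]{\log(2e/\min\{C'_\x,1\})}\,\sqrt[4]{j}\,2^{-j/2}$. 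Factoring out $\sqrt[4]{j}\,2^{-j/2}$ and recombining with the cell-identification term $2\tilde C_\x 2^{-j}$ reproduces exactly the bound of the theorem. I expect the main obstacle to be the uniformity bookkeeping — ensuring the single random $A$ simultaneously tessellates for cell selection, controls $\tau$ for every $\x$, and drives the convex recovery for every data-dependent cell $k'$ — together with the tight geometric conversion of the projection and tessellation errors into the precise noise level $C'_\x 2^{-j}$.
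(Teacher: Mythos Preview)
Your proposal is correct and follows essentially the same three-step structure as the paper's proof: cell identification via \Cref{cjk'Bound} and GMRA property \eqref{3b}, noise-level control via \Cref{RUT}, and recovery via \Cref{NoisyOneBit}, all combined through the same auxiliary point $\tilde\x = \P_\St(\P_{j,k'}(\x))$. The only minor deviations are your spherical identity for $\|\x-\tilde\x\|_2$ (the paper instead uses $|1-\alpha| \le \|\x - \P_{j,k'}(\x)\|_2$ plus a triangle inequality to reach the same $2\tilde C_\x 2^{-j}$) and your more explicit handling of uniformity by applying \Cref{RUT} and \Cref{NoisyOneBit} to $\x$-independent unions; the paper applies them to sets depending on $\alpha$ and $k'$ and is slightly less explicit about why the resulting events are uniform over $\x$.
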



\blue{
\begin{proof}[of Theorem \ref{GeneralApproximation}]
	As $j = \lceil \log(1/\eps) \rceil$, we know that $2^{-j} \le \eps \le 2^{-j+1}$. This implies
	\begin{align*}
		m 
		\ge E C_1^{-6} \eps^{-6} \max\left\{w(\M), \sqrt{d \left( \log \left( \frac{1}{\eps}\right) + 1 \right) } \right\}^2 
		\ge 16\max\{C',\bar{C}\} C_1^{-6} 2^{6(j+1)} (w(\M) + C\sqrt{dj})^2
	\end{align*}
	for $E > 0$ chosen appropriately. The result follows by applying Theorem \ref{ApproximationTheorem}.
\end{proof}
}

\begin{proof}[of Theorem \ref{ApproximationTheorem}]
	Recall that $k'$ is the index chosen by \nameref{algorithm2} in \eqref{MinCnew}. The proof consists of three steps. First, we apply Lemma \ref{cjk'Bound} in \textbf{(I)}. By the GMRA axioms this supplies an estimate for $\| \x - \P_{j,k'}(\x) \|_2$ with high probability. In \textbf{(II)} we use Theorem \ref{NoisyOneBit} to bound the distance between $\P_{j,k'}(\x)/\| \P_{j,k'}(\x) \|_2$ and the minimizer $\x^*$ given by
	\begin{align} \label{xbar}
	x^* = \argmin_\z \sum_{l = 1}^m (-y_l) \langle \av_l,\z \rangle, \quad \text{subject to } \z \in K := \conv ( \P_\St ( P_{j,k'} \cap \B(\0,2) ))
	\end{align}
	with high probability. By a union bound over all events Part \textbf{(III)} then concludes with an estimate of the distance $\| \x - \x^\ast \|_2$ combining \textbf{(I)} and \textbf{(II)}.\\
	
	\blue{\textbf{(I)}} Set $\delta := C_1 2^{-j-1}$. Observing that $C_1 2^{-j-2} < 1/2$ by assumption, GMRA property \eqref{tube} yields that all centers in $\C_j$ are closer to $\St^{D-1}$ than $1/2$, i.e., $1/2 \le \| \cj_{j,k} \|_2 \le 3/2$. Hence, by \eqref{WidthInequality}
	\begin{align} \label{ProjectedMeanWidth}
	0 \le w(\P_\St(\C_j)) \le \gamma(\P_\St(\C_j)) \le 2\gamma(\C_j) \le 4w(\C_j) + 2\sqrt{\frac{2}{\pi}} \dist(\0, \C_j) \le 4w(\C_j) + 4.
	\end{align}
	As $\C_j \subset \M_j$ we know by Lemma \ref{BoundOfGaussianWidth},  \eqref{ProjectedMeanWidth}, and Remark \ref{rem:BoundOfGaussianWidth} that
	\begin{align} \label{eq:mbound}
	\begin{split}
	m &\ge 4 \bar{C} \delta^{-6} (2w(\M) + 2C\sqrt{dj})^2 \ge 4\bar{C} \delta^{-6} (2w(\M) + 4w(\C_j) + 6 )^2 \\
	&\ge 4\bar{C} \delta^{-6} (2w(\M) + w(\P_\St (\C_j)) + 2 )^2 = \bar{C} \delta^{-6} (4w(\M) + 2w(\P_\St (\C_j)) + 4 )^2 \\
	&\ge \bar{C} \delta^{-6} (w(\M \cup \P_\St (\C_j)) + 1 )^2 \ge \bar{C} \delta^{-6} \max\{w(\M \cup \P_\St(\C_j))^2,2/\pi\}.
	\end{split}
	\end{align} 
	Hence, Lemma \ref{cjk'Bound} implies that
	\begin{align*}
	\| \x - \cj_{j,k'} \|_2 \le 16 \max \{ \| \x-\cj_{j,k_j(\x)} \|_2, C_1 2^{-j-1} \}.
	\end{align*}
	with probability at least $1-2\exp(-c\delta^2 m)$. By GMRA property \eqref{3b} we now get that
	\begin{align} \label{Approx}
	\| \x - \P_{j,k'}(\x) \|_2 \le \tilde{C}_\x 2^{-j}
	\end{align}
	for some constant $\tilde{C}_\x$.\\
	
	\blue{\textbf{(II)}} Define $\alpha := \| \P_{j,k'}(\x) \|_2$ and note that one has $1/2 \le \alpha \le 3/2$ as $\x \in \St^{D-1}$ and $\| \x - \P_{j,k'}(\x) \|_2 \le \tilde{C}_\x 2^{-j} \le 1/2$ by \eqref{Approx} and assumption. We now create the setting of Theorem \ref{NoisyOneBit}.
	Define $\tilde{\x} := \P_{j,k'}(\x)/\alpha \in \St^{D-1}$, $\tilde{\y} := \sign(A\tilde{\x}) = \sign(A\P_{j,k'}(\x))$, $K = \conv ( \P_\St ( P_{j,k'} \cap \B(\0,2) ))$, and $\tau := (2\tilde{C}_\x + C_1)2^{-j}$. If successfully applied with these quantities Theorem \ref{NoisyOneBit} will bound $\| \tilde{\x} - \x^* \|_2$ by
	\begin{align}
	\| \tilde{\x} - \x^* \|_2 &\leq \sqrt{\delta \sqrt{\log\left( \frac{e}{\delta} \right)} + 11\tau \sqrt{\log \left( \frac{e}{\tau} \right)}} \label{equ:PhaseIIbound}\\
	& \leq \left( \sqrt{\frac{C_1}{2}} \sqrt[4]{\log \left( \frac{4e}{\min \{C_1,1\}} \right)} + \sqrt{11 (2\tilde{C}_x + C_1)} \sqrt[4]{\log \left( \frac{2e}{\min \{(2\tilde{C}_x + C_1), 1\}} \right)} \right) \sqrt[4]{j} 2^{-\frac{j}{2}}. \nonumber
	\end{align}	
	All that remains is to verify that the conditions of Theorem \ref{NoisyOneBit} are met so that \eqref{equ:PhaseIIbound} is guaranteed with high probability.
	
	We first have to check $d_H(\tilde{\y},\y) \le \tau m$. Recall that $\frac 1 \alpha\leq 2$ and for $\alpha>0$ one has $\alpha w(K)=w(\alpha K)$. Applying Lemma \ref{BoundOfGaussianWidth} and \eqref{WidthInequality} we have, in analogy to \eqref{eq:mbound}, that
	\begin{align*}
	m &\ge \bar{C} \delta^{-6} (4w(\M) + 4w(\M_j) + 12)^2 \ge \bar{C} \delta^{-6} \left( 2w(\M) + 2w\left( \frac{\M_j}{\alpha} \right) + 12 \right)^2 \\ 
	&\ge \bar{C} \delta^{-6} \left( w \left(\M \cup \frac{\M_j}{\alpha} \right) + 7 \right)^2 \ge \bar{C} \delta^{-6} \left( w \left( \left( \M \cup \frac{\M_j}{\alpha} \right) \cap \B(\0,1) \right) + 7 \right)^2.
	\end{align*}
	Note that in the third inequality a slight modification of the second inequality in Lemma \ref{BoundOfGaussianWidth} is used. As $\M_j/\alpha \subset \B(0,4)$ one has $w(\M \cup \M_j/\alpha) \le 2w(\M) + 2w(\M_j/\alpha) + 5$ by adapting \eqref{equ:GenWidthUnionBound}. We can now use Theorem \ref{RUT}, Lemma \ref{Lem:NormalizedGeo}, and the fact that $|1 - \alpha| = |\| \x \|_2 - \| \P_{j,k'}(\x) \|_2 | \le \| \x - \P_{j,k'}(\x) \|_2$ to obtain
	\begin{align*}
	\frac{d_H(\tilde{\y},\y)}{m} &= d_A(\tilde{\x},\x) \le d_G(\tilde{\x},\x) + \delta \le \| \tilde{\x} - \x \|_2 + \delta \le \| \tilde{\x} - \P_{j,k'}(\x) \|_2 + \| \P_{j,k'}(\x) - \x \|_2 + \delta \\
	&= | 1 - \alpha | + \| \P_{j,k'}(\x) - \x \|_2 + \delta \le 2 \| \P_{j,k'}(\x) - \x \|_2 + \delta \\
	&\le (2\tilde{C}_\x + C_1) 2^{-j} = \tau
	\end{align*}
	with probability at least $1-2\exp(-c \delta^2 m)$. Furthermore, by a similar argumentation as in \eqref{ProjectedMeanWidth} one gets
	\begin{align} \label{eq:Kwidth}
	w(K) = w(\P_\St ( P_{j,k'} \cap \B(\0,2) )) \le 4w(\M_j) + 4
	\end{align}
	where one uses invariance of the Gaussian width under taking the convex hull (see \cite[Proposition 2.1]{Plan2013}), the fact that $P_{j,k'} \cap \B(\0,2) \subset \M_j$, and the assumption that $1/2 \le \dist(\M_j, \0) \le 2$.  In combination with Lemma \ref{BoundOfGaussianWidth} we have, in analogy to \eqref{eq:mbound}, that
	\begin{align*}
	m \ge 4C'\delta^{-6} (2w(\M) + 4w(\M_j) + 6 )^2 \ge 4C'\delta^{-6} (w(K) + 2)^2 \ge C' \delta^{-6} w(K)^2.
	\end{align*} 
	Hence, we can apply Theorem \ref{NoisyOneBit} to obtain with probability at least $1-8\exp(-c \delta^2 m)$ that
	\begin{equation*} 
	\| \bar{\x} - \tilde{\x} \|_2^2 \le \delta \sqrt{\log \left( \frac{e}{\delta} \right)} + 11 \tau \sqrt{\log \left( \frac{e}{\tau} \right)},
	\end{equation*}
	the estimate \eqref{equ:PhaseIIbound} now follows.\\
	
	\blue{\textbf{(III)}} To conclude the proof we apply a union bound and obtain with probability at least $1 - 12\exp(-c \delta^2 m)$ that
	\begin{align*}
	\| \x - \x^\ast \|_2 &\le \| \x - \P_{j,k'}(\x) \|_2 + \| \P_{j,k'}(\x) - \tilde{\x} \|_2 + \| \tilde{\x} - \x^* \|_2 \\
	&= \| \x - \P_{j,k'}(\x) \|_2 + |1-\alpha| + \| \tilde{\x} - \x^* \|_2 \\
	&\le  2\| \x - \P_{j,k'}(\x) \|_2 + \| \tilde{\x} - \x^* \|_2.
	\end{align*}
	GMRA property \eqref{3b} combined with \eqref{equ:PhaseIIbound} now yields the final desired error bound.
	
\end{proof}

\blue{
	\begin{remark}
		For obtaining the lower bounds on $m$ in \eqref{mBoundUniformRecoveryAxiomaticCase} and \eqref{mBoundUniformRecoveryCorollary} we made use of Lemma \ref{BoundOfGaussianWidth} leading to the influence of $j$ which is suboptimal for fine scales (i.e., $j$ large). To improve on this for large $j$ one can exploit the alternative versions of the lemma, namely, Lemma \ref{BoundOfGaussianWidthFine} and Lemma \ref{RiemannianGWidthBound}. Then, however, some minor modifications become necessary in the proof of Theorem \ref{ApproximationTheorem} as the lemmas only apply to $\Mr_j$:\\ 
		In proof step \textbf{(I)}, e.g., one has to guarantee that $\C_j \subset \Mr_j$, i.e., that each center $\cj_{j,k}$ is a best approximation for some part of the manifold. This is a reasonable assumption especially if the centers are constructed as means of small manifold patches which is a common approach in empirical applications (cf. Appendix \ref{EmpiricalGMRA}).  \\
		Also, when working with $\Mr_j$ it is essential in proof step \textbf{(II)} to have a near-best approximation subspace of $\x$, i.e., the $k'$ obtained in proof step \textbf{(I)} has to fulfill $k' \approx k_j(\x)$ as $\Mr_j$ does not include many near-optimal centers for each point on $\M$.  Here, one can exploit the minimal distance of centers $\cj_{j,k}$ to each other as described in GMRA property \eqref{GMRA2b} and choose $\delta$ slightly smaller (in combination with a correspondingly strengthened upper bound in Lemma \ref{cjk'Bound}) to obtain the necessary guarantees for proof step \textbf{(I)}.  As we are principally concerned with the case where $j = \mathcal{O}({\log(D)})$ in this paper, however, we will leave such variants to future work.
	\end{remark}
}

We are now prepared to explore the numerical performance of the proposed methods.

\section{Numerical Simulation} 
\label{Numerics}

\paragraph{} \blue{In this section we present various numerical experiments to benchmark OMS. The GMRAs we work with are constructed using the GMRA code provided by Maggioni\footnote{The code is available at \url{http://www.math.jhu.edu/\~mauro/\#tab\_code}.}. We compared the performance of OMS for three exemplary choices of $\M$, namely, a simple $2$-dim sphere embedded in $\R^{20}$ ($20000$ data points sampled from the $2$-dimensional sphere $\M$ embedded in $\St^{20-1}$), the MNIST data set \cite{MNIST2017} of handwritten digits "$1$" ($3000$ data points in $\R^{784}$), and the Fashion-MNIST data set \cite{xiao2017/online} of shirt images ($2000$ data points in $\R^{784}$). Both MNIST data sets have been projected to the unit sphere before taking measurements and computing the GMRA. In each of the experiments \ref{SIMPLEvsCONVEX}-\ref{TREEvsNOTREE} we first computed a GMRA up to refinement level $j_\text{max} = 10$ and then recovered $100$ randomly chosen $\x \in \M$ from their one-bit measurements by applying \nameref{algorithm2}. Depicted is the averaged relative error between $\x$ and its approximation $\x^*$, i.e., $\| \x - \x^\ast \|_2 / \| \x \|_2$ which is equal to the absolute error $\| \x - \x^\ast \|_2$ for $\M \subset \St^{D-1}$. Note the different approximation error ranges of the sphere and the MNIST experiments when comparing both settings. As a benchmark, the average error caused by the best GMRA approximation, i.e., projection of $\x$ onto the GMRA, is provided in all plots in the form of a horizontal dashed black line\footnote{\blue{To be precise for each $\x$ we picked the GMRA subspace minimizing the projection distance $\| \x - \P_{j,k}(\x) \|_2$ and averaged this error over all realizations of $\x$.}}. Let us mention that the error caused by best GMRA approximation -- though being a benchmark for the recovery performance one could expect -- is not a strict lower bound to OMS since the algorithm is not restricted to the GMRA but uses convex relaxations of the GMRA subspaces, cf. Figure \ref{Fig:SIMPLEvsCONVEXa}.
}

\subsection{OMS-simple vs. OMS} \label{SIMPLEvsCONVEX}

\paragraph{} \blue{The first test compares recovery performance of the two algorithms presented above, namely \nameref{algorithm} for $R \in \{0.5, 1, 1.5 \}$ and \nameref{algorithm2}. The results are depicted in Figure \ref{Fig:SIMPLEvsCONVEX}. Note that only $R = 1.5$ and, in the case of the $2$-sphere, $R = 1$ are depicted as in the respective other cases for each number of measurements most of the trials did not yield a feasible solution in \eqref{MinX} so the average was not well-defined. One can observe that for all data sets \nameref{algorithm2} outperforms \nameref{algorithm} which is not surprising as \nameref{algorithm2} does not rely on a suitable parameter choice. This observation is also the reason for us to restrict the theoretical analysis to \nameref{algorithm2}. The more detailed approximation of the toy example ($2$-dimensional sphere) is due to its simpler structure and lower dimensional setting and can also be observed in \ref{PLUSvsNOPLUS}-\ref{TREEvsNOTREE}. Figures \ref{fig:Image1} and \ref{fig:Image2} depict one specific reconstructed image of Fashion-MNIST, for four different numbers of measurements $m$. Obviously,  OMS shows a better performance having less quantization artifacts. Moreover, the good visual quality of the OMS reconstruction in Figure \ref{fig:Image2} for only $m=100$ suggests that the $\ell_2$-error used in Figure \ref{Fig:SIMPLEvsCONVEX} is a rather pessimistic performance measure. Considering that all MNIST-images could be fully coded in $8\cdot 28 \cdot 28 = 6272$ bits (gray-scale images contain only $8$-bit of information per pixel) it is important to point out the potentially overly pessimistic nature of the $\ell_2$-errors reported for $m = 10000$, as well as to note that the visual quality is already much better than one might expect for OMS at compressions of ratios of size $100/6272 < 0.02$.}\\


\begin{figure}[!ht]
	\centering
	\begin{subfigure}[b]{0.48\textwidth}
		\psfragfig*[width=0.50\textwidth]{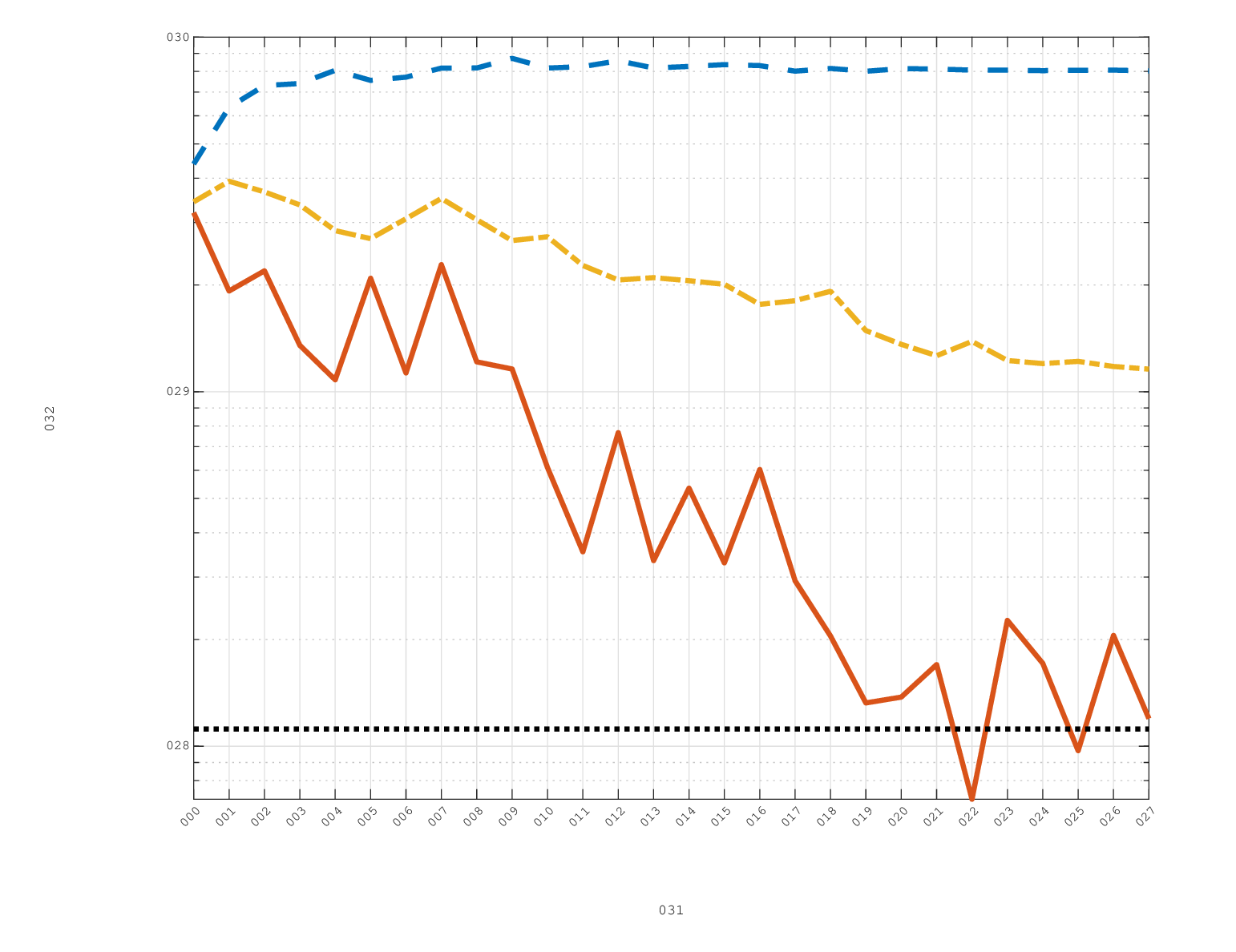}{}
		\subcaption{$2$-Sphere}
		\label{Fig:SIMPLEvsCONVEXa}
	\end{subfigure} \\
	\begin{subfigure}[b]{0.48\textwidth}
		\def\svgwidth{\linewidth}
		\psfragfig*[width=0.50\textwidth]{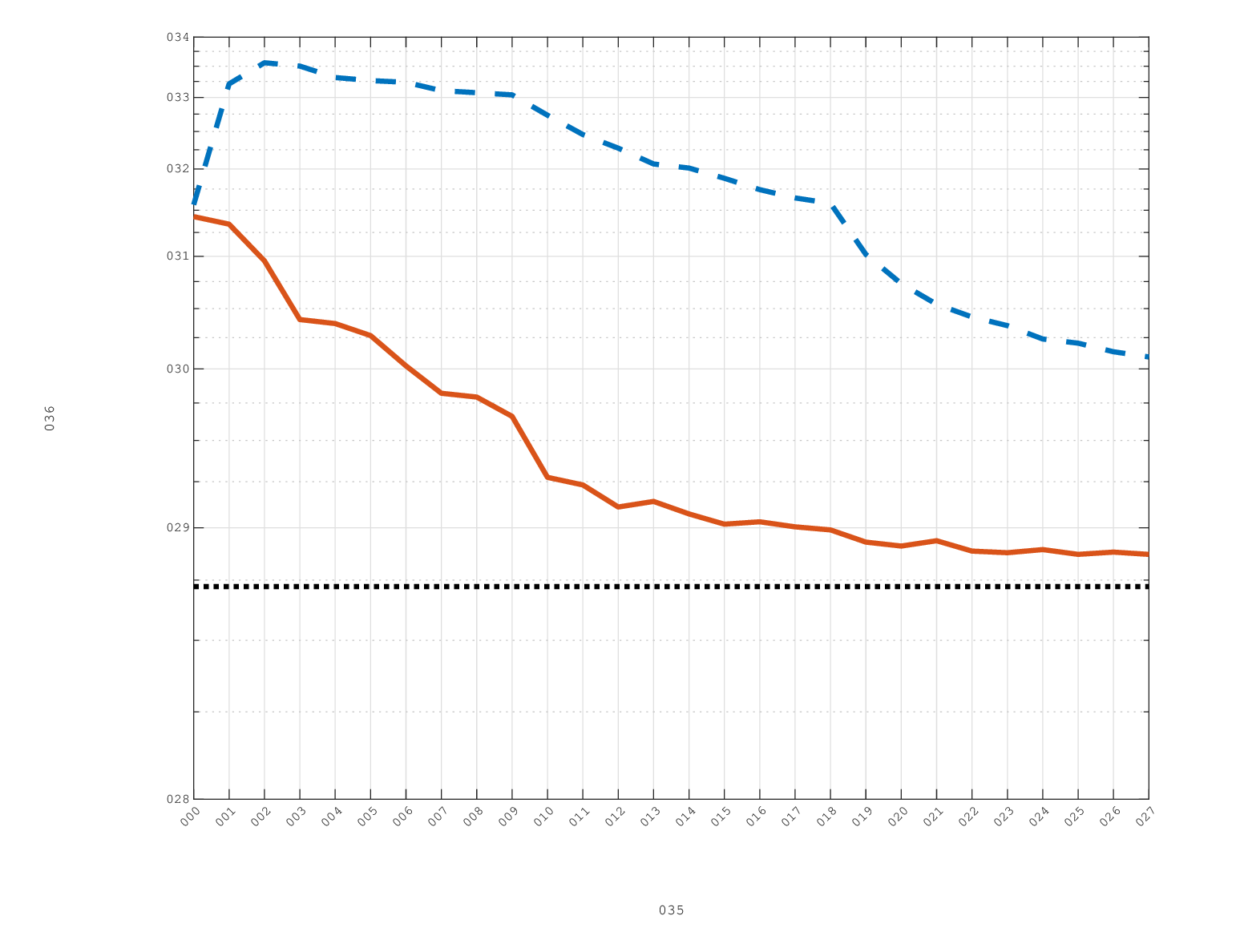}{}
		\subcaption{MNIST}
		\label{Fig:SIMPLEvsCONVEXb}
	\end{subfigure}
	\begin{subfigure}[b]{0.48\textwidth}
		\psfragfig*[width=0.50\textwidth]{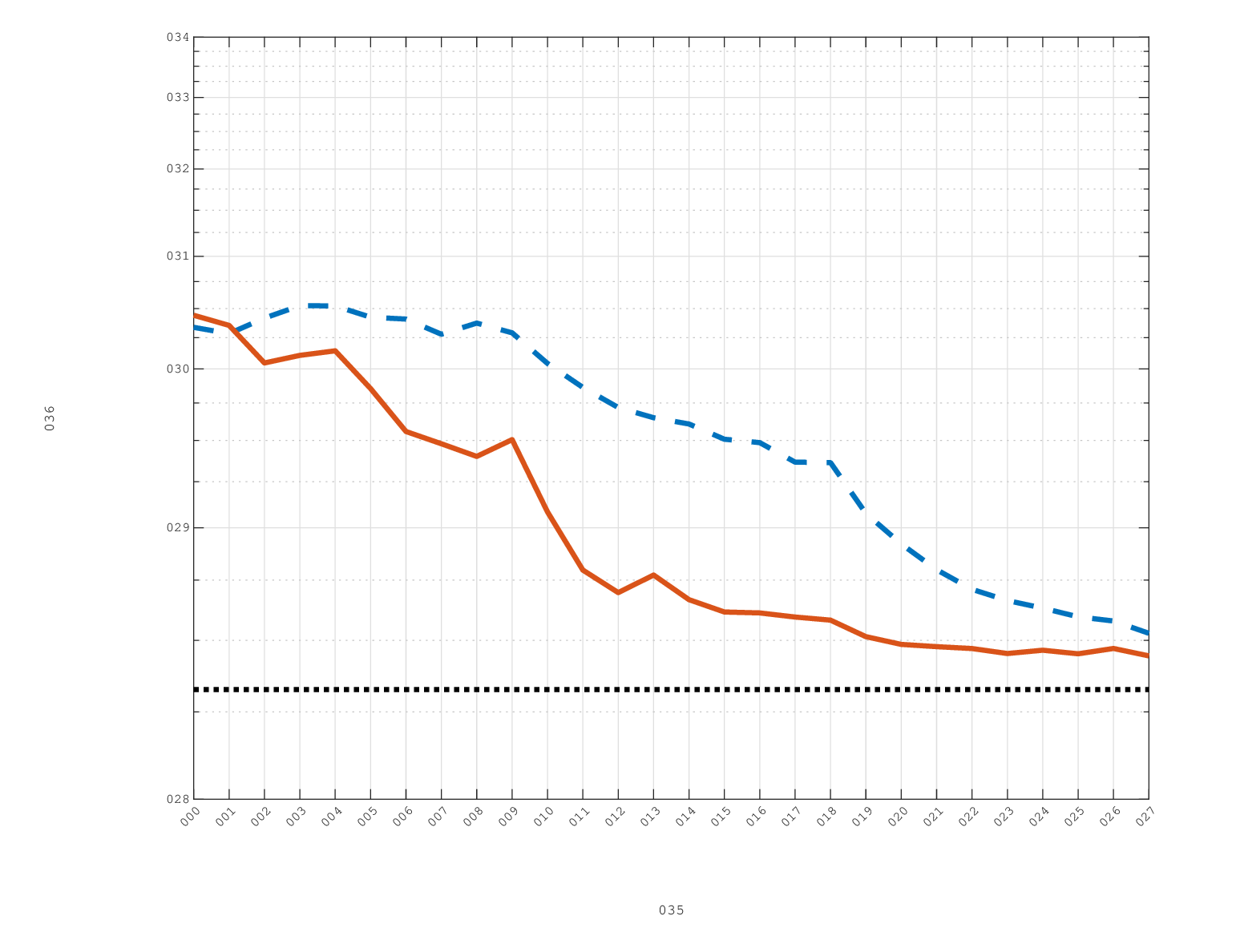}{}
		\subcaption{Fashion-MNIST}
		\label{Fig:SIMPLEvsCONVEXc}
	\end{subfigure}
	\caption{Comparison of OMS-Simple for $R = 1$ (dotted-dashed, yellow), $R = 1.5$ (dashed, blue) and OMS (solid, red). Recall from Section \ref{SIMPLEvsCONVEX} that OMS-Simple for $R = 1$ does not recover in Figures \ref{Fig:SIMPLEvsCONVEXb} and \ref{Fig:SIMPLEvsCONVEXc}. The black dotted line highlights average error caused by direct GMRA projection.} \label{Fig:SIMPLEvsCONVEX}
\end{figure}

\begin{figure}[!ht]
	\centering
	\begin{subfigure}{0.14\textwidth}
		\includegraphics[width=\linewidth]{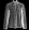}
		\caption{Ground truth.}
	\end{subfigure} \quad
    \begin{subfigure}{0.14\textwidth}
    	\includegraphics[width=\linewidth]{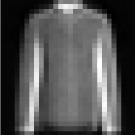}
    	\caption{GMRA.}
    \end{subfigure} \quad
    \begin{subfigure}{0.14\textwidth}
    	\includegraphics[width=\linewidth]{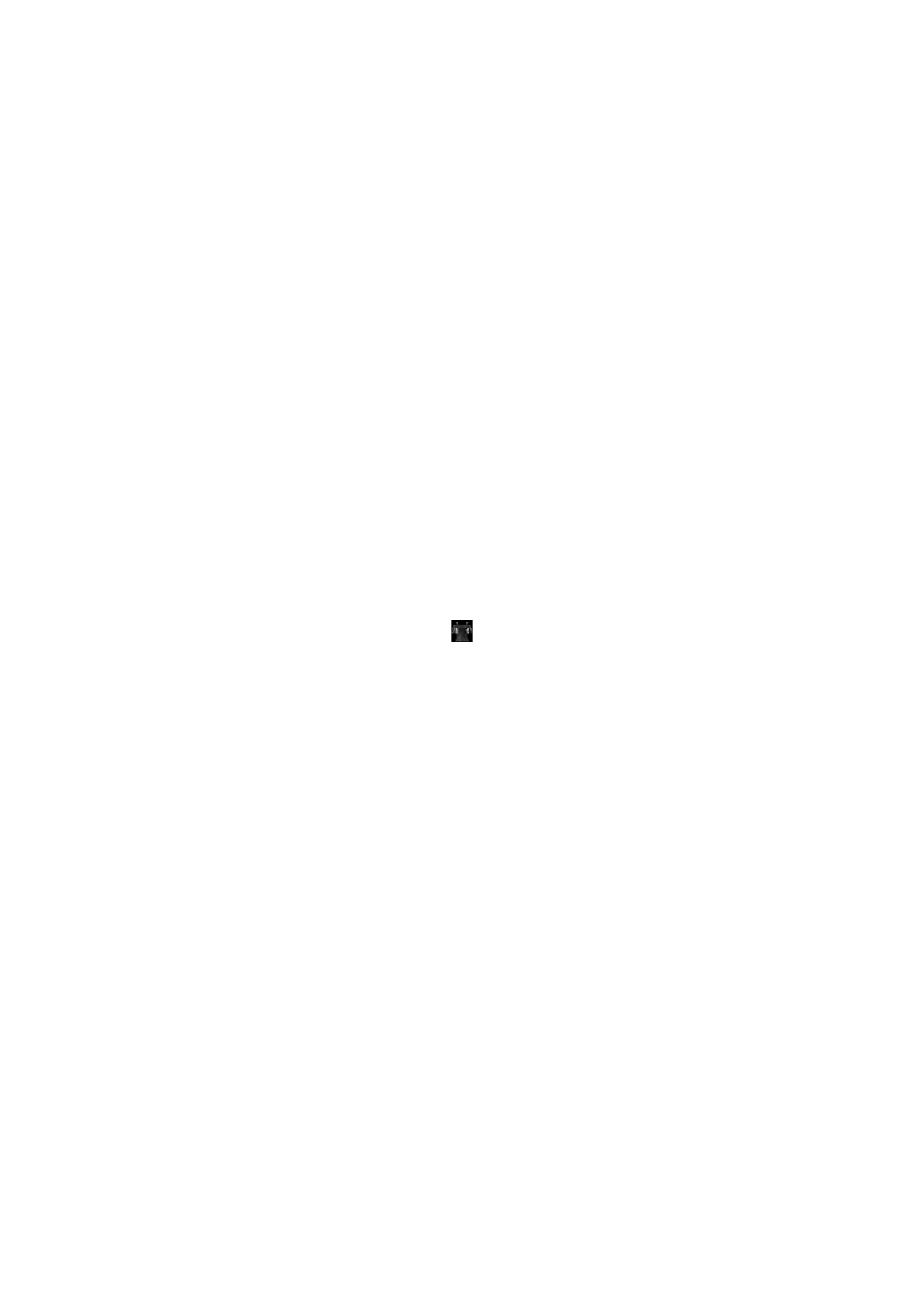}
    	\caption{$m=10$.}
    \end{subfigure} \quad
    \begin{subfigure}{0.14\textwidth}
    	\includegraphics[width=\linewidth]{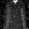}
    	\caption{$m=100$.}
    \end{subfigure} \quad
    \begin{subfigure}{0.14\textwidth}
    	\includegraphics[width=\linewidth]{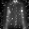}
    	\caption{$m=1000$.}
    \end{subfigure} \quad
    \begin{subfigure}{0.14\textwidth}
    	\includegraphics[width=\linewidth]{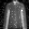}
    	\caption{$m=10000$.}
    \end{subfigure} 
    \caption{One Fashion-MNIST signal with its OMS-Simple $(R = 1.5)$ reconstructions. The best GMRA approximation is given as a benchmark. Note that the GMRA uses an $11$-dimensional subspace at this part of the manifold.}
    \label{fig:Image1}
\end{figure}

\begin{figure}[!ht]
	\centering
	\begin{subfigure}{0.14\textwidth}
		\includegraphics[width=\linewidth]{GroundTruth1.pdf}
		\caption{Ground truth.}
	\end{subfigure} \quad
	\begin{subfigure}{0.14\textwidth}
		\includegraphics[width=\linewidth]{GMRAProj1.pdf}
		\caption{GMRA.}
	\end{subfigure} \quad
	\begin{subfigure}{0.14\textwidth}
		\includegraphics[width=\linewidth]{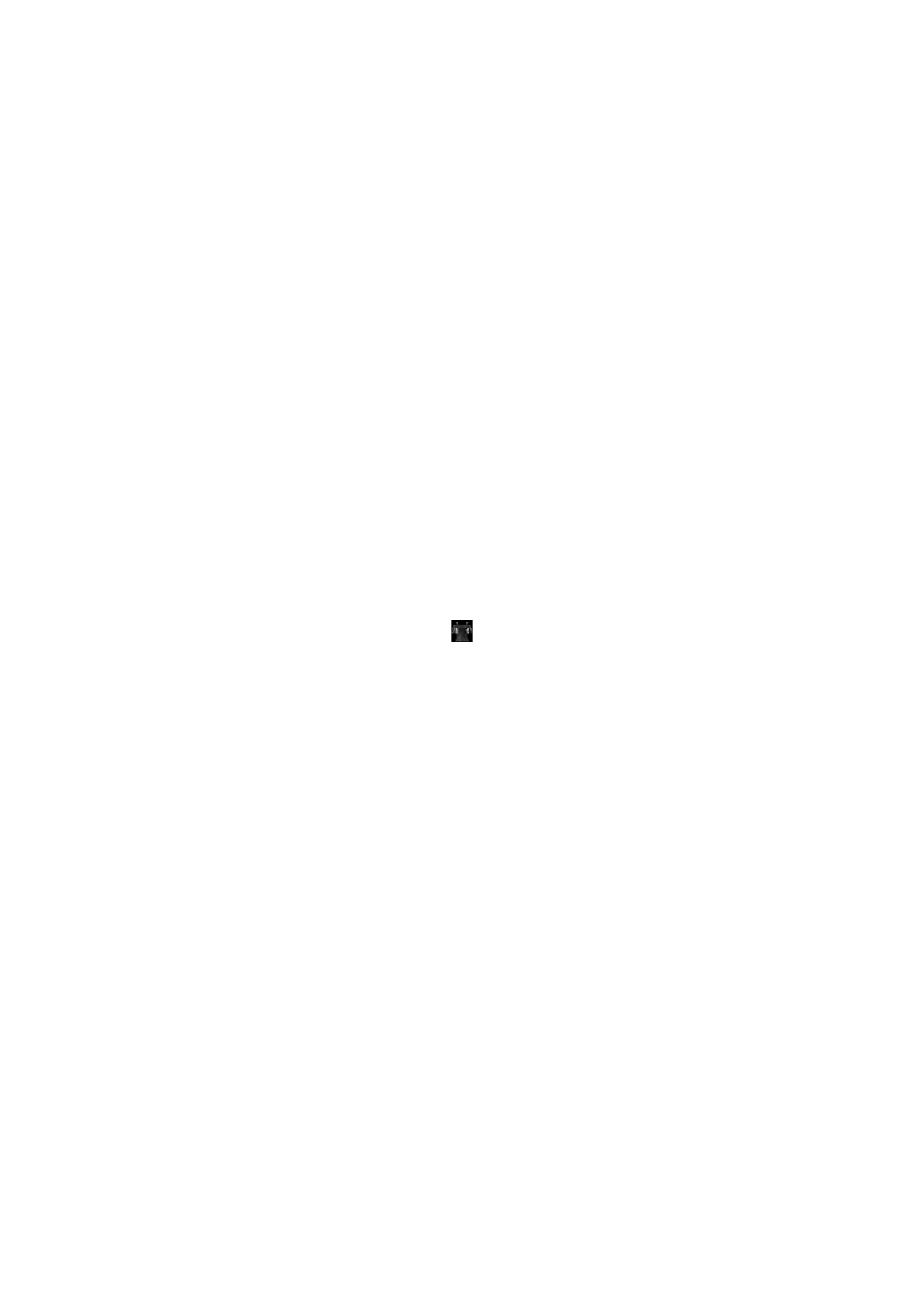}
		\caption{$m=10$.}
	\end{subfigure} \quad
	\begin{subfigure}{0.14\textwidth}
		\includegraphics[width=\linewidth]{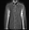}
		\caption{$m=100$.}
	\end{subfigure} \quad
	\begin{subfigure}{0.14\textwidth}
		\includegraphics[width=\linewidth]{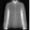}
		\caption{$m=1000$.}
	\end{subfigure} \quad
	\begin{subfigure}{0.14\textwidth}
		\includegraphics[width=\linewidth]{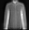}
		\caption{$m=10000$.}
	\end{subfigure} 
	\caption{One Fashion-MNIST signal with its OMS reconstructions. The best GMRA approximation is given as a benchmark. Note that the GMRA uses an $11$-dimensional subspace at this part of the manifold.}
	\label{fig:Image2}
\end{figure}


\subsection{Modifying \nameref{algorithm2}} \label{PLUSvsNOPLUS}

\paragraph{} \blue{Observations in \cite{krausesolberg2017} motivate to consider a modification of \nameref{algorithm2} in which
	 \eqref{MinXnew} is replaced by
\begin{align*}
	\x^\ast &= \argmin_{\z \in \R^D} \sum_{l=1}^{m} \left[(-y_l) \langle \av_l,\z \rangle \right]_+, \quad
	\text{subject to } \z \in \conv \left( \P_\St (P_{j,k'} \cap \B(0,2)) \right)
\end{align*} 
where $[t]_+ = \max \{0,t\}$ denotes the positive part of $t \in \R$.} 

\blue{We tested this approach in some initial numerical experiments, but found that the modification produced ambiguous results with no clear improvement. Let us mention that after the original version of this paper had been finished, the $[\cdot]_+$-formulation has been thoroughly analyzed for robust one- and multi-bit quantization in a dithered measurement setup in \cite{jung2019quantized}, but not in the context of GMRA and with dithering (this explains the ambiguous experimental outcomes in our case, as we do not use dithering in the one-bit measurements). Certainly one could transfer results from \cite{jung2019quantized} to our setting by changing the measurement model to include dithering, cf. \cite{dirksen2019}. }

\subsection{Are Two Steps Necessary?} \label{STEPS}

\paragraph{} One might wonder if the two steps in \nameref{algorithm} and \nameref{algorithm2} are necessary at all. Wouldn't it be sufficient to use the center $\cj_{j,k'}$ determined in step \ref{I} as an approximation for $\x$? If the GMRA is fine enough, this indeed is the case. If one only has access to a rather rough GMRA, the simulations in Figure \ref{Fig:TREEvsNOTREE} show that the second step makes a notable difference in approximation quality. \blue{This behavior is not surprising in view of Lemma \ref{cjk'Bound}.} The lemma guarantees a good approximation of $\x$ by $\cj_{j,k'}$ as long as $\x$ is well approximated by an optimal center. For \blue{both} MNIST data sets, one can observe that the second step only improves performance if the number of one-bit measurements is sufficiently high. For a small set of measurements the centers might yield better approximation as they lie close to $\M$ by GMRA property \eqref{tube}. On the other hand, only parts of the affine spaces are practical for approximation and a certain number of measurements is necessary to restrict \ref{II2} to the relevant parts.\\

%

\begin{figure}[!ht]
	\centering
	\begin{subfigure}[b]{0.48\textwidth}
		\centering
		\def\svgwidth{\linewidth}
		\psfragfig[width=0.50\textwidth]{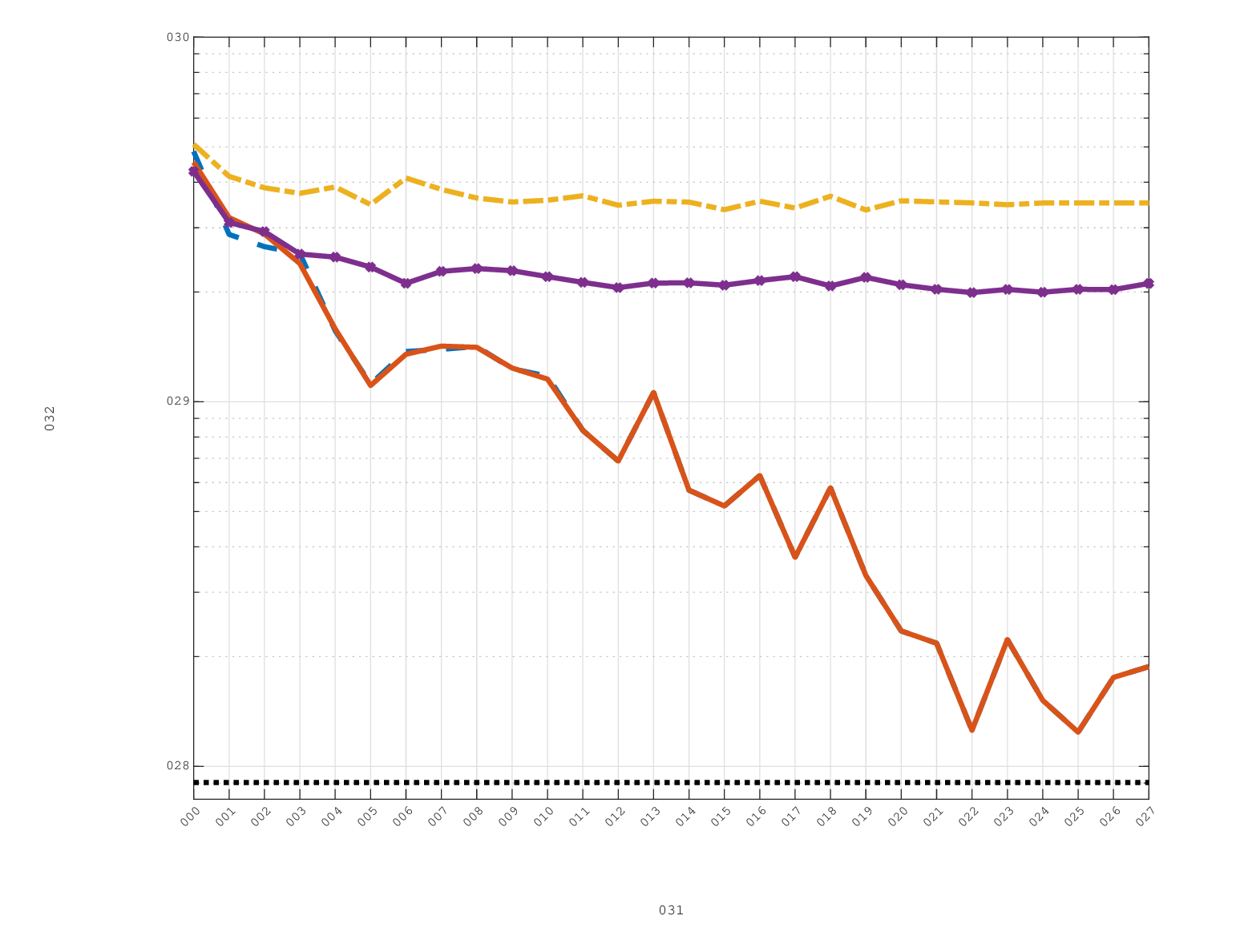}{}
		\subcaption{$2$-Sphere}
		\label{Fig:TREEvsNOTREEa}
	\end{subfigure}
	
	\begin{subfigure}[b]{0.48\textwidth}
		\centering
		\def\svgwidth{\linewidth}
		\psfragfig[width=0.50\textwidth]{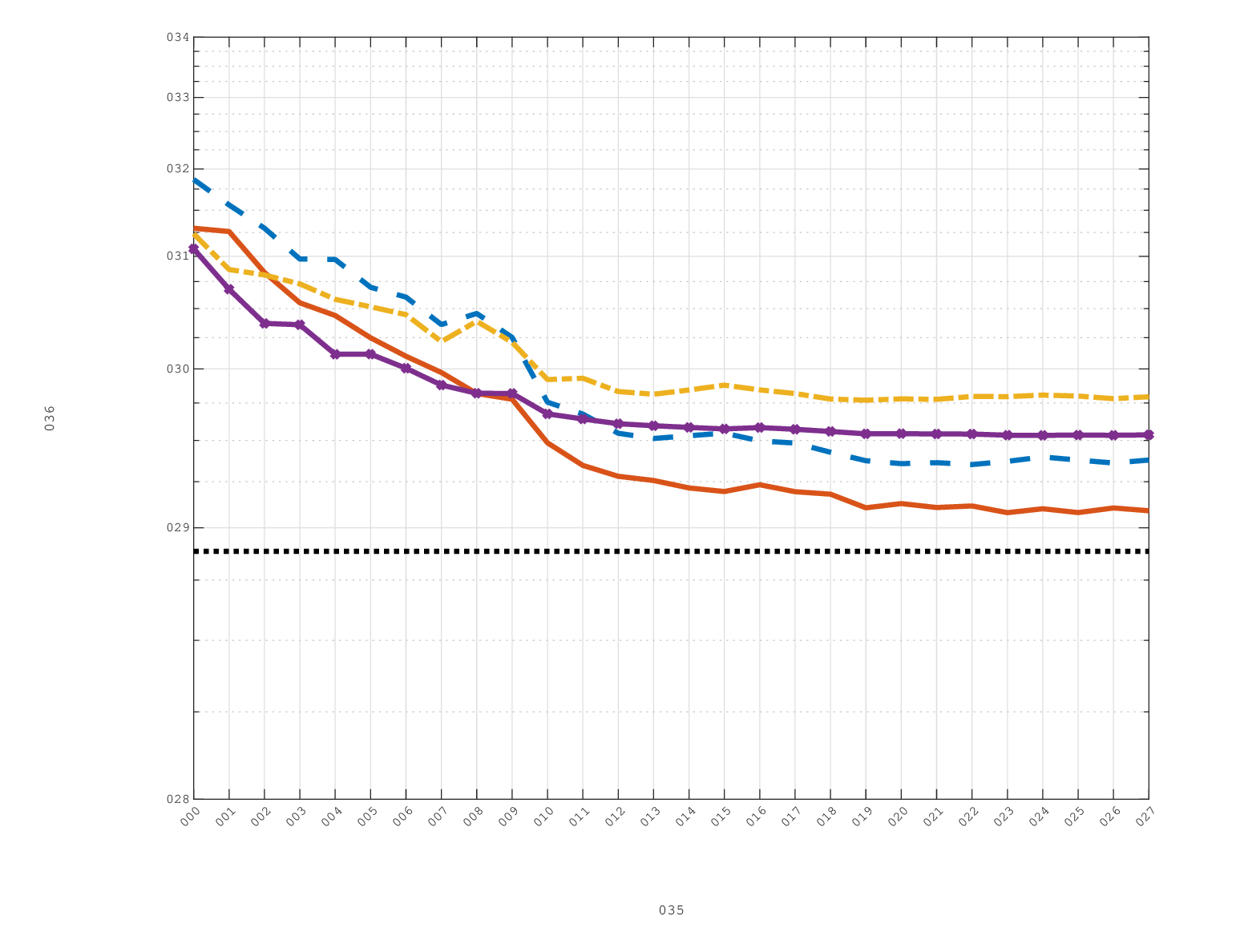}{}
		\subcaption{MNIST}
	\end{subfigure}
	\begin{subfigure}[b]{0.48\textwidth}
		\centering
		\def\svgwidth{\linewidth}
		\psfragfig[width=0.50\textwidth]{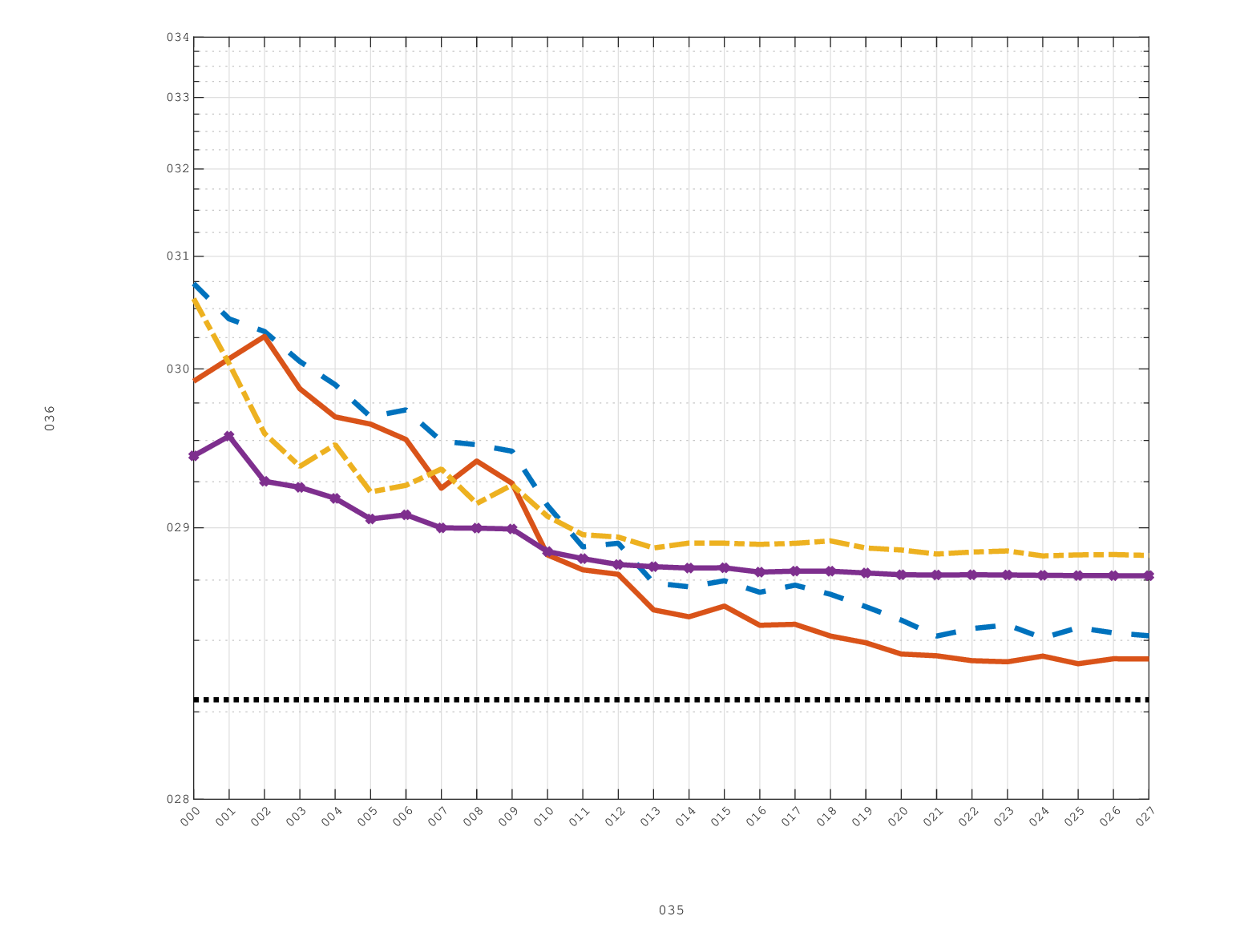}{}
		\subcaption{Fashion-MNIST}
	\end{subfigure}
	\caption{Comparison of the following: Approximation by step \ref{I2} of OMS when using tree structure (dashed with points, yellow) and when comparing all centers (solid with points, purple); approximation by step \ref{I2}+\ref{II2} of OMS when using tree structure (dashed, blue; \blue{ this line is mostly hidden behind the solid red curve in the first plot}) and when comparing all centers (solid, red). The black dotted line highlights average error caused by direct GMRA projection.} \label{Fig:TREEvsNOTREE}
\end{figure}

\begin{figure}[!ht]
	\centering
	\begin{subfigure}{0.14\textwidth}
		\includegraphics[width=\linewidth]{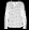}
		\caption{Ground truth. \\ \phantom{...}}
	\end{subfigure} \quad
	\begin{subfigure}{0.14\textwidth}
		\includegraphics[width=\linewidth]{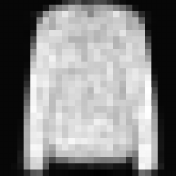}
		\caption{GMRA. \\ \phantom{...}}
	\end{subfigure} \quad
	\begin{subfigure}{0.14\textwidth}
		\includegraphics[width=\linewidth]{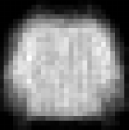}
		\caption{Tree structure\\ using step I.}
		\label{fig:Image4c}
	\end{subfigure} \quad
	\begin{subfigure}{0.14\textwidth}
		\includegraphics[width=\linewidth]{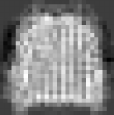}
		\caption{Tree structure\\ using steps I+II.}
		\label{fig:Image4d}
	\end{subfigure} \quad
	\begin{subfigure}{0.14\textwidth}
		\includegraphics[width=\linewidth]{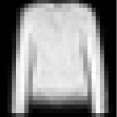}
		\caption{Full search\\ using step I.}
		\label{fig:Image4e}
	\end{subfigure} \quad
	\begin{subfigure}{0.14\textwidth}
		\includegraphics[width=\linewidth]{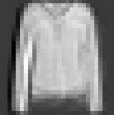}
		\caption{Full search\\ using steps I+II.}
		\label{fig:Image4f}
	\end{subfigure} 
	\caption{One Fashion-MNIST signal with OMS-reconstructions using tree search or full center search and using only step I or both steps, for $m=100$. The best GMRA approximation is given as a benchmark. Note that the GMRA uses a $7$-dimensional subspace at this part of the manifold.}
	\label{fig:Image4}
\end{figure}

\begin{figure}[!ht]
	\centering
	\begin{subfigure}{0.14\textwidth}
		\includegraphics[width=\linewidth]{GroundTruth2.pdf}
		\caption{Ground truth. \\ \phantom{...}}
	\end{subfigure} \quad
	\begin{subfigure}{0.14\textwidth}
		\includegraphics[width=\linewidth]{GMRAProj2.pdf}
		\caption{GMRA. \\ \phantom{...}}
	\end{subfigure} \quad
	\begin{subfigure}{0.14\textwidth}
		\includegraphics[width=\linewidth]{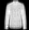}
		\caption{Tree structure\\ using step I.}
		\label{fig:Image3c}
	\end{subfigure} \quad
	\begin{subfigure}{0.14\textwidth}
		\includegraphics[width=\linewidth]{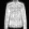}
		\caption{Tree structure\\ using steps I+II.}
		\label{fig:Image3d}
	\end{subfigure} \quad
	\begin{subfigure}{0.14\textwidth}
		\includegraphics[width=\linewidth]{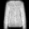}
		\caption{Full search\\ using step I.}
		\label{fig:Image3e}
	\end{subfigure} \quad
	\begin{subfigure}{0.14\textwidth}
		\includegraphics[width=\linewidth]{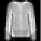}
		\caption{Full search\\ using steps I+II.}
		\label{fig:Image3f}
	\end{subfigure} 
	\caption{One Fashion-MNIST signal with OMS-reconstructions using tree search or full center search and using only step I or both steps, for $m = 1000$. The best GMRA approximation is given as a benchmark. Note that the GMRA uses a $7$-dimensional subspace at this part of the manifold.}
	\label{fig:Image3}
\end{figure}

\subsection{Tree vs. No Tree} \label{TREEvsNOTREE}

\paragraph{} \blue{In the fourth test we checked if approximation still works when not all possible centers are compared in step \ref{I} of \nameref{algorithm2} but their tree structure is used. More precisely, to find an optimal center one compares on the first refinement level all centers, and then continues in each subsequent level solely with the children of the $k$ best centers (in the presented experiments we chose $k = 10$). Of course, the chosen center will not be optimal as not all centers are compared (see Figure \ref{Fig:TREEvsNOTREE}). In the simple $2$-dimensional sphere setting, step \ref{II}, however, can compensate the worse approximation quality of \ref{I} with tree search. Figure \ref{Fig:TREEvsNOTREEa} hardly shows a difference in final approximation quality in both cases.
	In both MNIST settings, however, one can observe a considerable difference even when performing two steps.\\
Figures \ref{fig:Image4} and \ref{fig:Image3} illustrate the differences in reconstruction using tree search vs full center comparison respectively using only step I of OMS vs using both steps, for $m=100$ and $m = 1000$. This corresponds to a compression ratio of $100/6272 < 0.02$ resp.\ $1000/6272 < 0.2$, cf. the discussion in Section \ref{SIMPLEvsCONVEX}. Comparing \ref{fig:Image4c}-\ref{fig:Image4d} with \ref{fig:Image4e}-\ref{fig:Image4f} one sees that full center search reconstructs more detailed features (shape), comparing \ref{fig:Image3c} to \ref{fig:Image3d} and \ref{fig:Image3e} to \ref{fig:Image3f} one sees that the second step enhances reconstruction of details (neckline).
}\\

\subsection{A Change of Refinement Level}

\paragraph{} \blue{The last experiment (see Figure \ref{RefinementLevel}) examines the influence of the refinement level $j$ on the approximation error. For small $j$ (corresponding to a rough GMRA) a high number of measurements can hardly improve the approximation quality while for large $j$ (corresponding to a fine GMRA) the approximation error decreases with increasing measurement rates. This behavior is as expected. A rough GMRA cannot profit much from many measurements as the GMRA approximation itself yields an approximate lower bound on the obtainable approximation error.  For fine GMRAs the behavior along the measurement axis is similar to above experiments. Note that further increase of $j$ for the same range of measurements did not improve accuracy. Notably, the Fashion-MNIST reconstruction performs quite well even for small choices of $j$ suggesting that the manifold of shirt images is, at least in terms of GMRA approximation, of lower complexity than the manifold of digits "1". This is in line with the observation that the approximation error for Fashion-MNIST is smaller than for MNIST in the above experiments. The non-monotonous increase of the quality of reconstruction in Figure \ref{RefinementLevelc} -- the levels $j=4,8,10$ perform especially well -- is more difficult to explain; it	might relate to the fact that Fashion-MNIST is considered to be harder to classify than MNIST -- there seem to be features identifying the images (overall shape, detailed contours) on various levels of refinement.}


\begin{figure}[!ht]
	\centering
	\begin{subfigure}{0.48\textwidth}
		\centering
		\def\svgwidth{\linewidth}
		\psfragfig[width=0.48\textwidth]{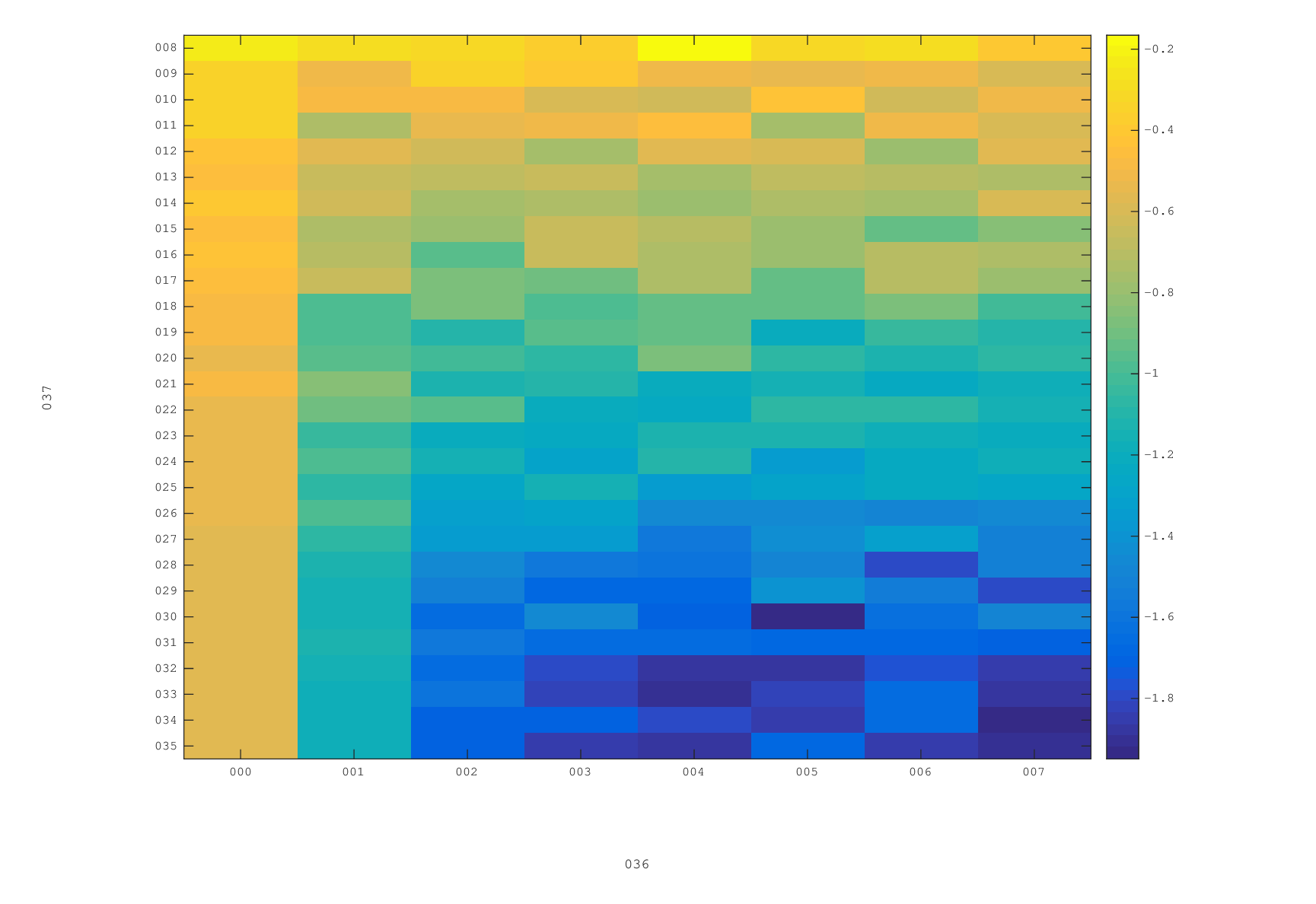}{}
		\subcaption{$2$-Sphere}
	\end{subfigure}

	\begin{subfigure}{0.48\textwidth}
		\centering
		\psfragfig[width=0.48\textwidth]{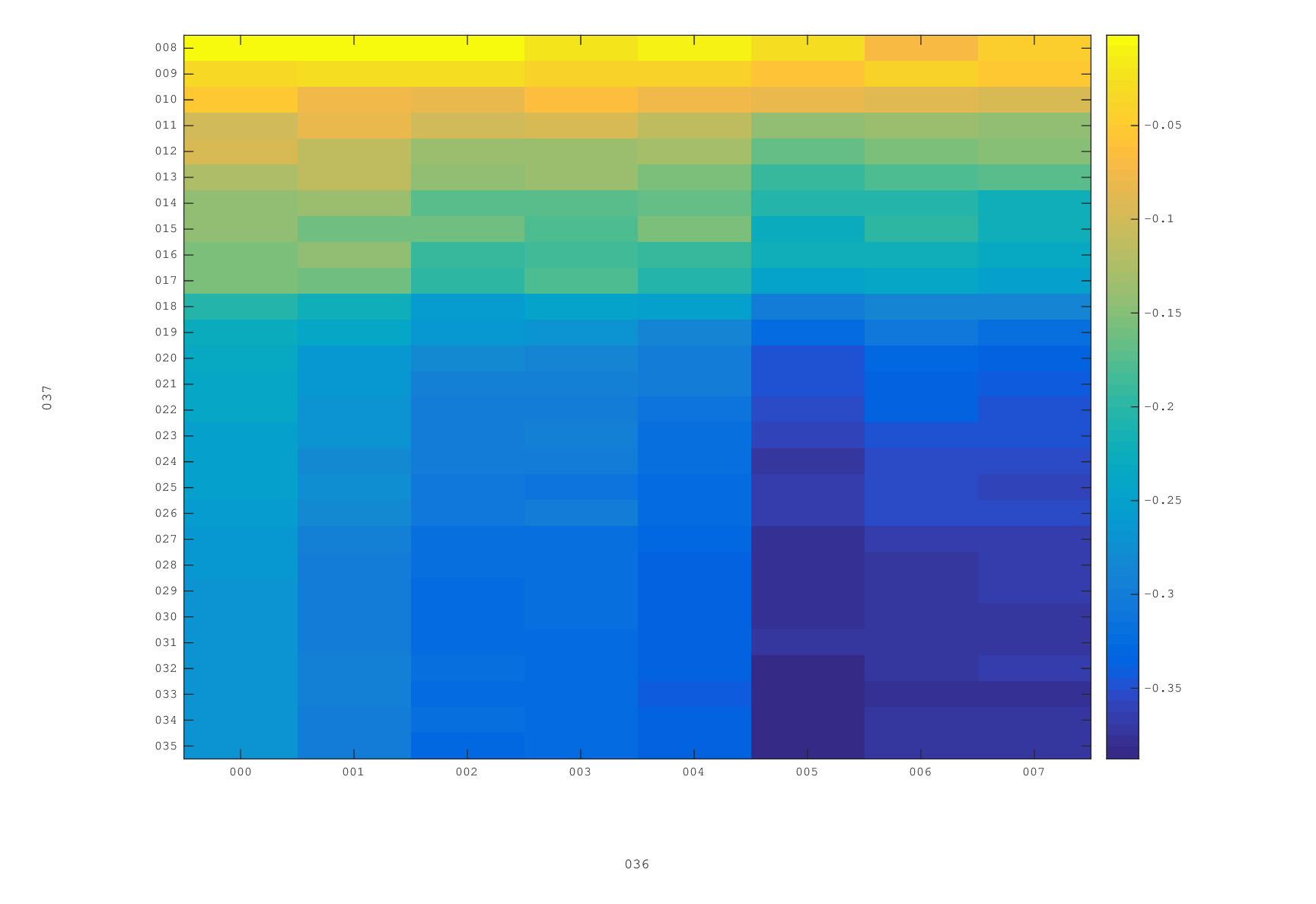}{}
		\subcaption{MNIST}
	\end{subfigure}
	\begin{subfigure}{0.48\textwidth}
	\centering
	\psfragfig[width=0.48\textwidth]{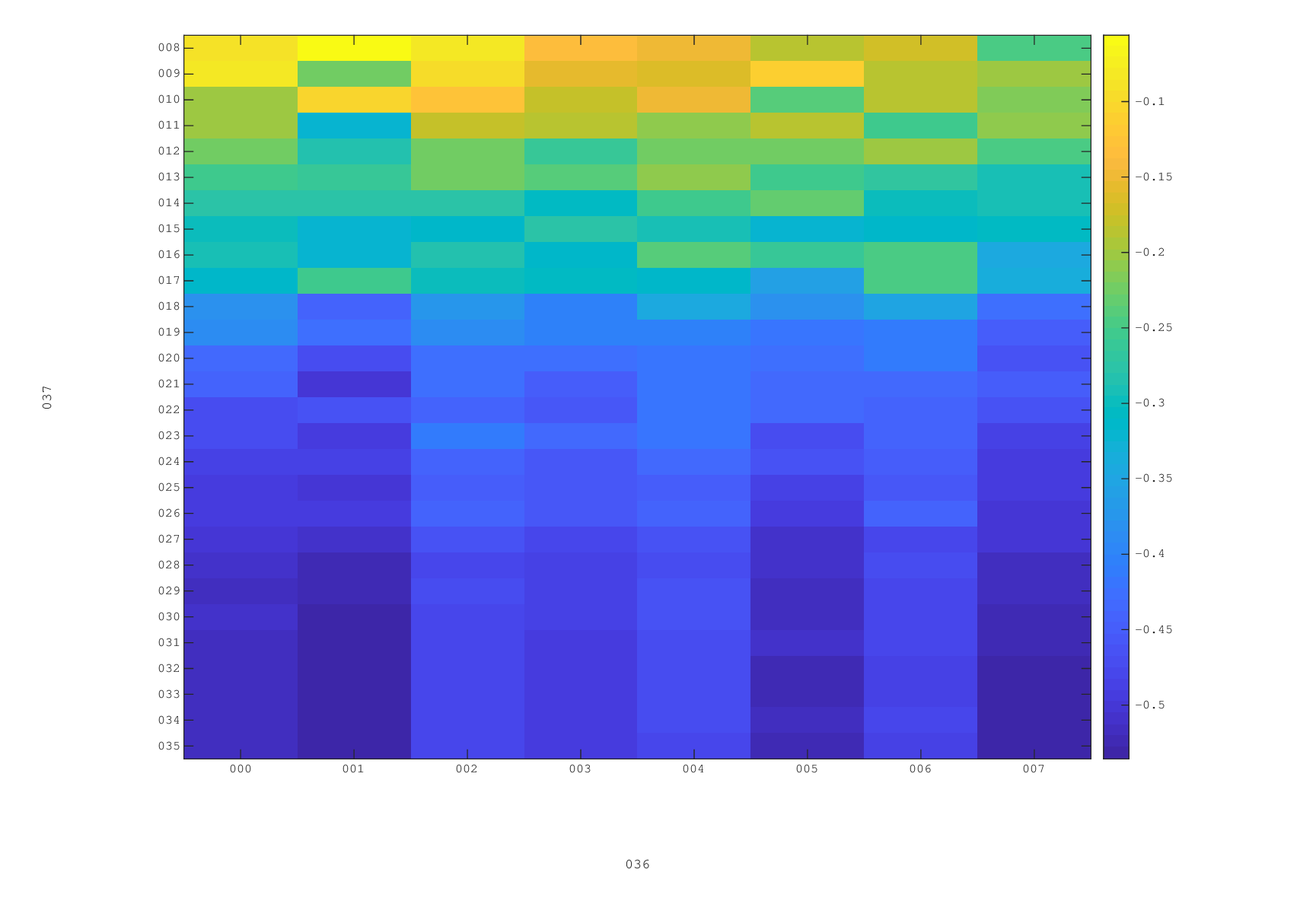}{}
	\subcaption{Fashion-MNIST}
	\label{RefinementLevelc}
\end{subfigure}
	\caption{Approximation error of OMS for different refinement levels $j$ and numbers of measurements.} \label{RefinementLevel}
\end{figure}

\section{Discussion}
\label{Discussion}

In this paper we proposed \nameref{algorithm2}, a tractable algorithm to approximate data lying on low-dimensional manifolds from compressive one-bit measurements, thereby  complementing the theoretical results of Plan and Vershynin on one-bit sensing for general sets in \cite{Plan2013} in this important setting. We then proved (uniform) worst-case error bounds for approximations computed by \nameref{algorithm2} under slightly stronger assumptions than \cite{Plan2013}, and also performed numerical experiments on both toy-examples and real-world data.  As a byproduct of our theoretical analysis (see, e.g., \S\ref{Proofs}) we have further linked the theoretical understanding of one-bit measurements as tessellations of the sphere \cite{Plan2014} to the GMRA techniques introduced in \cite{Allard2012} by analyzing the interplay between a given manifold and its GMRA approximation's complexity measured in terms of the Gaussian mean width. Finally, to indicate applicability of our results we show that they hold even if there are just random samples from the manifold at hand as opposed to the entire manifold (see, e.g., Appendix~\ref{EmpiricalGMRA}~and~\ref{EmpiricalGMRAproof}). Several interesting questions remain for future research however:

First, the experiments in Section \ref{TREEvsNOTREE} suggest a possible benefit from using the tree structure within $\C_j$. Indeed approximation of \nameref{algorithm2} does still yield comparable results if \ref{I} is restricted to a tree based search which has the advantage of being computable much faster than the minimization over all possible centers. It would be desirable to obtain theoretical error bounds even in this case, as well as to consider the use of other related fast nearest neighbor methods from computer science \cite{indyk1998approximate}.

Second, the attentive reader might have noticed in the empirical setting of Appendix \ref{EmpiricalGMRA} and \ref{EmpiricalGMRAproof} that \ref{A2} in combination with Lemma \ref{chjk'Bound} seems to imply that \ref{II} of \nameref{algorithm2} may be unnecessary.  As can be seen from Section \ref{STEPS} though, the second step of \nameref{algorithm2} yields a notable improvement even with an empirically constructed GMRA which hints that even with \ref{A2} not strictly fulfilled the empirical GMRA techniques remain valid, and \ref{II} of \nameref{algorithm2} of value.  Understanding this phenomenon might lead to more relaxed assumptions than \ref{A1}-\ref{A4}.

Third, it could be rewarding to also consider versions of \nameref{algorithm2} for additional empirical GMRA variants including, e.g., those which rely on adaptive constructions \cite{Liao2017}, GMRA constructions in which subspaces that minimize different criteria are used to approximate the data in each partition element (see, e.g., \cite{Iwen2015}), and distributed GMRA constructions which are built up across networks using distributed clustering \cite{BANDYOPADHYAY20061952} and SVD \cite{Ong2016} algorithms.  Such variants could prove valuable with respect to reducing the overall computational storage and/or runtime requirements of \nameref{algorithm2} in different practical situations.  

\blue{Fourth, it would be illustrative to compare our GMRA approach which can be interpreted as convexification (or at least reduction to a convex subproblem) with recent advances on non-convex ADMM \cite{latorre2019fast}. Two approaches seem of special interest: first, one could replace the GMRA manifold model by Generative Adversarial Networks (GANs) and apply ADMM to the resulting non-convex reconstruction problem or, second, stay in the GMRA setting but not restrict the reconstruction procedure to one single subspace.} 

\blue{Finally, it would also likely be fruitful to explore alternate quantization schemes to the one-bit measurements \eqref{one-bit-measurements} considered herein.  In particular, so called $\Sigma \Delta$ quantization schemes generally outperform the type of memoryless scalar quantization methods considered herein both theoretically and empirically in compressive sensing contexts \cite{KSY13,SAAB2018123,huynh2020fast}, and initial work suggests that they may provide similar improvements in the GMRA model setting considered here \cite{iwen2019new}. Nevertheless, one should be aware that feedback quantization schemes like $\Sigma \Delta$ quantization might be of limited practicability in recent large scale applications like massive MIMO \cite{haghighatshoar2018low} where the measurements are collected (and quantized) in a distributed fashion rendering the use of feedback information impossible. In such cases the just analyzed memoryless quantization model is of particular interest.}

\bibliographystyle{IEEEtranS}
\bibliography{IEEEabrv,mybibM}

\begin{thebibliography}{10}
\providecommand{\url}[1]{#1}
\csname url@samestyle\endcsname
\providecommand{\newblock}{\relax}
\providecommand{\bibinfo}[2]{#2}
\providecommand{\BIBentrySTDinterwordspacing}{\spaceskip=0pt\relax}
\providecommand{\BIBentryALTinterwordstretchfactor}{4}
\providecommand{\BIBentryALTinterwordspacing}{\spaceskip=\fontdimen2\font plus
\BIBentryALTinterwordstretchfactor\fontdimen3\font minus
  \fontdimen4\font\relax}
\providecommand{\BIBforeignlanguage}[2]{{%
\expandafter\ifx\csname l@#1\endcsname\relax
\typeout{** WARNING: IEEEtranS.bst: No hyphenation pattern has been}%
\typeout{** loaded for the language `#1'. Using the pattern for}%
\typeout{** the default language instead.}%
\else
\language=\csname l@#1\endcsname
\fi
#2}}
\providecommand{\BIBdecl}{\relax}
\BIBdecl

\bibitem{Ahmed2014}
A.~Ahmed, B.~Recht, and J.~Romberg, ``Blind deconvolution using convex
  programming,'' \emph{{IEEE} Trans. Inf. Theory}, vol.~60, no.~3, pp.
  1711--1732, 2014.

\bibitem{Ai2014}
A.~Ai, A.~Lapanowski, Y.~Plan, and R.~Vershynin, ``One-bit compressed sensing
  with non-gaussian measurements,'' \emph{Linear Algebra Appl.}, vol. 441, pp.
  222--239, 2014.

\bibitem{Allard2012}
W.~Allard, G.~Chen, and M.~Maggioni, ``Multi-scale geometric methods for data
  sets ii: Geometric multi-resolution analysis,'' \emph{Appl. Comput. Harmon.
  Anal.}, vol.~32, no.~3, pp. 435--462, 2012, iSSN 1063-5203.

\bibitem{BANDYOPADHYAY20061952}
S.~Bandyopadhyay, C.~Giannella, U.~Maulik, H.~Kargupta, K.~Liu, and S.~Datta,
  ``Clustering distributed data streams in peer-to-peer environments,''
  \emph{Inform. Sciences}, vol. 176, no.~14, pp. 1952 -- 1985, 2006, streaming
  Data Mining.

\bibitem{Baraniuk2006}
R.~G. Baraniuk and M.~B. Wakin, ``Random projections of smooth manifolds,'' in
  \emph{Found. Comput. Math.}, 2006, pp. 941--944.

\bibitem{BPY06}
J.~Benedetto, A.~Powell, and {\"O}.~Y{\i}lmaz, ``{Sigma-{D}elta
  ($\Sigma\Delta$) quantization and finite frames},'' \emph{{IEEE} Trans. Inf.
  Theory}, vol.~52, no.~5, pp. 1990--2005, 2006.

\bibitem{BPY06b}
J.~J. Benedetto, A.~M. Powell, and {\"O}.~Y{\i}lmaz, ``Second-order
  {S}igma-{D}elta ({$\Sigma\Delta$}) quantization of finite frame expansions,''
  \emph{Appl. Comput. Harmon. Anal.}, vol.~20, no.~1, pp. 126 -- 148, 2006.

\bibitem{Beygelzimer2006}
A.~Beygelzimer, S.~Kakade, and J.~Langford, ``Cover trees for nearest
  neighbor,'' in \emph{Proceedings of the 23rd International Conference on
  Machine Learning}, ser. ICML '06.\hskip 1em plus 0.5em minus 0.4em\relax ACM,
  2006, pp. 97--104.

\bibitem{BJKS15}
P.~Boufounos, L.~Jacques, F.~Krahmer, and R.~Saab, ``Quantization and
  compressive sensing,'' in \emph{Compressed Sensing and its
  Applications}.\hskip 1em plus 0.5em minus 0.4em\relax Springer, 2015, pp.
  193--237.

\bibitem{Candes2006}
E.~J. Cande\`es, J.~Romberg, and T.~Tao, ``Stable signal recovery from
  incomplete and inaccurate measurements,'' \emph{Commun. Pur. Appl. Math.},
  vol.~59, pp. 1207--1223, 2006.

\bibitem{Candes2013}
E.~J. Candès, T.~Strohmer, and V.~Voroninski, ``Guaranteed minimum-rank
  solutions of linear matrix equations via nuclear norm minimization,''
  \emph{Commun. Pur. Appl. Math.}, vol.~66, no.~8, pp. 1241--1274, 2013.

\bibitem{GChen2012}
G.~Chen, M.~Iwen, S.~Chin, and M.~Maggioni, ``A fast multiscale framework for
  data in high-dimensions: Measure estimation, anomaly detection, and
  compressive measurements,'' in \emph{2012 Visual Communications and Image
  Processing}, Nov 2012, pp. 1--6.

\bibitem{dirksen2019}
S.~{Dirksen}, M.~{Iwen}, S.~{Krause-Solberg}, and J.~{Maly}, ``Robust one-bit
  compressed sensing with manifold data,'' in \emph{2019 13th International
  conference on Sampling Theory and Applications (SampTA)}, 2019, pp. 1--5.

\bibitem{Dirksen2017}
S.~Dirksen, H.~C. Jung, and H.~Rauhut, ``One-bit compressed sensing with
  partial gaussian circulant matrices,'' \emph{arXiv:1710.03287}, 2017.

\bibitem{Donoho2006}
D.~Donoho, ``Compressed sensing,'' \emph{{IEEE} Trans. Inf. Theory}, vol.~52,
  no.~4, pp. 1289--1306, 2006.

\bibitem{Dudley2006}
R.~M. Dudley, ``V.n. sudakov’s work on expected suprema of gaussian
  processes,'' in \emph{High Dimensional Probability VII.}, ser. Progress in
  Probability, C.~Houdré, D.~Mason, P.~Reynaud-Bouret, and J.~Rosiński,
  Eds.\hskip 1em plus 0.5em minus 0.4em\relax Birkh\"auser, 2006, vol.~71, pp.
  37--43.

\bibitem{Eftekhari2015}
A.~Eftekhari and M.~B. Wakin, ``New analysis of manifold embeddings and signal
  recovery from compressive measurements,'' \emph{Appl. Comput. Harmon. Anal.},
  vol.~39, no.~1, pp. 67--109, 2015.

\bibitem{Eldar2009}
Y.~Eldar and M.~Mishali, ``Robust recovery of signals from a structured union
  of subspaces,'' \emph{{IEEE} Trans. Inf. Theory}, vol.~55, no.~11, pp.
  5302--5316, 2009.

\bibitem{Federer1959}
H.~Federer, ``Curvature measures,'' \emph{T. Am. Math. Soc.}, vol.~93, no.~3,
  pp. 418--491, 1959.

\bibitem{FK14}
J.~Feng and F.~Krahmer, ``An {RIP}-based approach to {$\Sigma\Delta$}
  quantization for compressed sensing,'' \emph{IEEE Signal Process. Lett.},
  vol.~21, no.~11, pp. 1351--1355, 2014.

\bibitem{FKS17}
J.~Feng, F.~Krahmer, and R.~Saab, ``Quantized compressed sensing for partial
  random circulant matrices,'' 2017, preprint, arXiv:1702.04711.

\bibitem{G87}
R.~Gray, ``Oversampled sigma-delta modulation,'' \emph{{IEEE} Trans. Commun.},
  vol.~35, no.~5, pp. 481--489, 1987.

\bibitem{GLPSY13}
S.~G{\"u}nt{\"u}rk, M.~Lammers, A.~Powell, R.~Saab, and {\"O}.~Y{\i}lmaz,
  ``Sobolev duals for random frames and {$\Sigma\Delta$} quantization of
  compressed sensing measurements,'' \emph{Found. Comput. Math.}, vol.~13,
  no.~1, pp. 1--36, 2013.

\bibitem{haghighatshoar2018low}
S.~Haghighatshoar and G.~Caire, ``Low-complexity massive mimo subspace
  estimation and tracking from low-dimensional projections,'' \emph{IEEE
  Transactions on Signal Processing}, vol.~66, no.~7, pp. 1832--1844, 2018.

\bibitem{huynh2020fast}
T.~Huynh and R.~Saab, ``Fast binary embeddings and quantized compressed sensing
  with structured matrices,'' \emph{Communications on Pure and Applied
  Mathematics}, vol.~73, no.~1, pp. 110--149, 2020.

\bibitem{indyk1998approximate}
P.~Indyk and R.~Motwani, ``Approximate nearest neighbors: towards removing the
  curse of dimensionality,'' in \emph{Proceedings of the thirtieth annual ACM
  symposium on Theory of computing}.\hskip 1em plus 0.5em minus 0.4em\relax
  ACM, 1998, pp. 604--613.

\bibitem{Iwen2015}
M.~A. Iwen and F.~Krahmer, ``Fast subspace approximation via greedy
  least-squares,'' \emph{Constr. Approx.}, vol.~42, no.~2, pp. 281--301, Oct
  2015.

\bibitem{Iwen2013}
M.~A. Iwen and M.~Maggioni, ``Approximation of points on low-dimensional
  manifolds via random linear projections,'' \emph{Inform. Inf.}, vol.~2,
  no.~1, p.~1, 2013.

\bibitem{Ong2016}
M.~A. Iwen and B.~W. Ong, ``A distributed and incremental svd algorithm for
  agglomerative data analysis on large networks,'' \emph{SIAM J. Matrix Anal.
  Appl.}, vol.~37, no.~4, pp. 1699--1718, 2016.

\bibitem{iwen2019new}
M.~A. Iwen, E.~Lybrand, A.~A. Nelson, and R.~Saab, ``New algorithms and
  improved guarantees for one-bit compressed sensing on manifolds,'' in
  \emph{2019 13th International conference on Sampling Theory and Applications
  (SampTA)}.\hskip 1em plus 0.5em minus 0.4em\relax IEEE, 2019, pp. 1--4.

\bibitem{Jacques2013}
L.~Jacques, J.~Laskas, P.~Boufounos, and R.~Baraniuk, ``Robust 1-bit compressed
  sensing via binary stable embeddings of sparse vectors,'' \emph{{IEEE} Trans.
  Inf. Theory}, vol.~59, no.~4, pp. 2082--2102, 2013.

\bibitem{jung2019quantized}
H.~C. Jung, J.~Maly, L.~Palzer, and A.~Stollenwerk, ``Quantized compressed
  sensing by rectified linear units,'' \emph{arXiv preprint arXiv:1911.07816},
  2019.

\bibitem{KSY13}
F.~Krahmer, R.~Saab, and {\"O}.~Y{\i}lmaz, ``Sigma-{D}elta quantization of
  sub-{G}aussian frame expansions and its application to compressed sensing,''
  \emph{Inform. Inf.}, vol.~3, no.~1, pp. 40--58, 2014.

\bibitem{krausesolberg2017}
S.~{Krause-Solberg} and J.~{Maly}, ``A tractable approach for one-bit
  compressed sensing on manifolds,'' in \emph{2017 International Conference on
  Sampling Theory and Applications (SampTA)}, 2017, pp. 667--671.

\bibitem{latorre2019fast}
F.~Latorre, V.~Cevher \emph{et~al.}, ``Fast and provable admm for learning with
  generative priors,'' in \emph{Advances in Neural Information Processing
  Systems}, 2019, pp. 12\,027--12\,039.

\bibitem{MNIST2017}
Y.~LeCun, C.~Cortes, and C.~J. Burges, ``The mnist database of handwritten
  digits,'' online database, 19.12.2017.

\bibitem{Liao2017}
W.~Liao and M.~Maggioni, ``Adaptive geometric multiscale approximations for
  intrinsically low-dimensional data,'' \emph{arXiv:1611.01179v2}, 2017.

\bibitem{Maggioni2016}
M.~Maggioni, S.~Minsker, and N.~Strawn, ``Multiscale dictionary learning:
  Non-asymptotic bounds and robustness,'' \emph{J. Mach. Learn. Res}, vol.~7,
  no.~1, pp. 43--93, 2016.

\bibitem{Mondal2007}
B.~Mondal, S.~Dutta, and R.~W. Heath, ``Quantization on the grassmann
  manifold,'' \emph{{IEEE} Trans. Inf. Theory}, vol.~55, no.~8, pp. 4208--4216,
  2007.

\bibitem{NST96}
S.~R. Norsworthy, R.~Schreier, and G.~C. Temes, Eds.,
  \emph{Delta-Sigma-Converters: Theory, Design and Simulation}.\hskip 1em plus
  0.5em minus 0.4em\relax Wiley-IEEE, 1996.

\bibitem{Plan2013LP}
Y.~Plan and R.~Vershynin, ``One-bit compressed sensing by linear programming,''
  \emph{Commun. Pur. Appl. Math.}, vol.~66, no.~8, pp. 1275--1297, Aug. 2013.

\bibitem{Plan2013}
------, ``Robust 1-bit compressed sensing and sparse logistic regression: {A}
  convex programming approach,'' \emph{{IEEE} Trans. Inf. Theory}, vol.~59,
  no.~1, pp. 482--494, 2013.

\bibitem{Plan2014}
------, ``Dimension reduction by random hyperplane tessellations,''
  \emph{Discrete Comput. Geom.}, vol.~51, no.~2, pp. 438--461, 2014.

\bibitem{Recht2010}
B.~Recht, M.~Fazel, and P.~A. Parrilo, ``Guaranteed minimum-rank solutions of
  linear matrix equations via nuclear norm minimization,'' \emph{SIAM Rev.},
  vol.~52, no.~3, pp. 471--501, 2010.

\bibitem{SAAB2018123}
R.~Saab, R.~Wang, and {\"O}.~Yilmaz, ``Quantization of compressive samples with
  stable and robust recovery,'' \emph{Appl. Comput. Harmon. Anal.}, vol.~44,
  no.~1, pp. 123 -- 143, 2018.

\bibitem{Vershynin2017}
R.~Vershynin, \emph{High-Dimensional Probability: An Introduction with
  Applications in Data Science}.\hskip 1em plus 0.5em minus 0.4em\relax
  Cambridge University Press, 2018.

\bibitem{xiao2017/online}
H.~Xiao, K.~Rasul, and R.~Vollgraf. (2017) Fashion-mnist: a novel image dataset
  for benchmarking machine learning algorithms.

\end{thebibliography}

\appendix

\section{Characterization of Convex Hull} 
\label{ConvexHull}

\begin{lemma}
    Let $P_{j,k'}$ be the affine subspace chosen in step \ref{I} of \nameref{algorithm} and define $\cj = \P_{j,k'}(\0)$. If $\0 \notin P_{j,k'}$, the following equivalence holds:
	\begin{align}
	\z \in \conv \left( \P_\St (P_{j,k'} \cap \B(\0,2)) \right) \Leftrightarrow \begin{cases}
	\| \z \|_2 \le 1, \\
	\Phi_{j,k'}^T \Phi_{j,k'} \z + \P_\cj(\z) = \z, \\
	\langle \z,\cj \rangle \ge \frac{1}{2} \| \cj \|_2^2.
	\end{cases}
	\end{align}
\end{lemma}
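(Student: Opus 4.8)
\emph{Reduction.} The plan is to split the equivalence into a linear-algebra part (matching the projection constraint with membership in a subspace) and a purely geometric part (describing $\P_\St(P_{j,k'}\cap\B(\0,2))$ and its convex hull). Since $\0\notin P_{j,k'}$ we have $\cj=\P_{j,k'}(\0)\neq\0$, and because $\P_{j,k'}$ is the orthogonal affine projector of GMRA property \eqref{first}, $\cj$ is the point of $P_{j,k'}$ nearest the origin; hence the residual $-\cj$ is orthogonal to the direction space $V:=\mathrm{range}(\Phi_{j,k'}^T)$, i.e.\ $\cj\perp V$. As $\Phi_{j,k'}$ has orthonormal rows, $\Phi_{j,k'}^T\Phi_{j,k'}$ is the orthogonal projection onto $V$ and $\P_\cj$ the orthogonal projection onto $\Span(\cj)$; these subspaces are orthogonal, so their sum is the orthogonal projection onto $W:=V\oplus\Span(\cj)=\Span(P_{j,k'})$. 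Thus the second constraint on the right-hand side is equivalent to $\z\in W$, and the set $\conv(\P_\St(P_{j,k'}\cap\B(\0,2)))$ automatically lies in $W$ as well, so it remains to match the other two constraints.

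\emph{The spherical cap.} Writing $a:=\|\cj\|_2$ (with $0<a<2$) and $u_0:=\cj/a$, I would use $\cj\perp V$ and Pythagoras to parametrize $P_{j,k'}\cap\B(\0,2)=\{\cj+\vv:\vv\in V,\ \|\vv\|_2\le r\}$ with $r:=\sqrt{4-a^2}$. A one-line computation gives $\langle\P_\St(\cj+\vv),\cj\rangle=a^2/\sqrt{a^2+\|\vv\|_2^2}$, which is decreasing in $\|\vv\|_2$ and equals $a^2/2$ at $\|\vv\|_2=r$; conversely, for any unit $w\in W$ with $\langle w,\cj\rangle\ge a^2/2$ one solves explicitly for the preimage $\vv$ and checks $\|\vv\|_2\le r$. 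This yields
\begin{align*}
\P_\St(P_{j,k'}\cap\B(\0,2))=\Bigl\{\,w\in W:\ \|w\|_2=1,\ \langle w,\cj\rangle\ge\tfrac12\|\cj\|_2^2\,\Bigr\}=:S.
\end{align*}

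\emph{Convex hull of the cap.} Let $T:=\{w\in W:\|w\|_2\le1,\ \langle w,\cj\rangle\ge\frac12\|\cj\|_2^2\}$ be the candidate set (which by the first paragraph is exactly the right-hand side). Since $T$ is convex (an intersection of a ball, a half-space, and a subspace) and contains $S$, we get $\conv(S)\subseteq T$ at once. For the reverse inclusion I would fix $w\in T$, set $u:=w-\langle w,u_0\rangle u_0\in V$, and restrict to the two-dimensional plane $\Pi:=\Span(u_0,u)$ (choosing any unit vector of $V$ if $u=0$). Inside $\Pi$ the cap $S\cap\Pi$ is a circular arc, and the planar fact that the convex hull of such an arc is the circular segment $\{v\in\Pi:\|v\|_2\le1,\ \langle v,u_0\rangle\ge a/2\}$ places $w\in\conv(S\cap\Pi)\subseteq\conv(S)$. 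Combining the three parts gives the equivalence, using $\langle w,u_0\rangle\ge a/2\iff\langle w,\cj\rangle\ge\frac12\|\cj\|_2^2$.

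\emph{Main obstacle.} The only non-routine point is the reverse inclusion $T\subseteq\conv(S)$: one must certify that every point of the \emph{solid} cap—in particular the points on the flat disk capping it off—is a genuine convex combination of points on the \emph{curved} surface $S$. Slicing by the plane $\Pi$ is what makes this clean, reducing it to the elementary statement that a planar circular arc has the circular segment as its convex hull (verifiable by trigonometry, or by exhibiting each point as a convex combination of a symmetric pair of arc endpoints). Everything else is orthogonal-projection bookkeeping and a single monotonicity check, so the convex-hull slicing argument carries the actual content of the proof.
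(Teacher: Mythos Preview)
Your proof is correct. You and the paper agree on the forward direction and on identifying the second constraint with membership in $W=\Span(P_{j,k'})$; the substantive difference lies in the reverse inclusion $T\subseteq\conv(S)$. You slice by the two-dimensional plane $\Pi$ through $u_0$ and the tangential part of $w$, reducing to the planar fact that the convex hull of a circular arc is the corresponding circular segment. The paper instead rescales: given $\z\in T$ it forms $\z'=(\|\cj\|_2^2/\langle\z,\cj\rangle)\,\z$, observes that $\z'\in P_{j,k'}\cap\B(\0,\rho)$ with $\rho:=\|\cj\|_2^2/\langle\z,\cj\rangle\le 2$, writes $\z'$ as a convex combination of points $\z_k\in P_{j,k'}$ with $\|\z_k\|_2=\rho$, and then scales each $\z_k$ back to the unit sphere to exhibit $\z$ as a convex combination of points in $\P_\St(P_{j,k'}\cap\B(\0,2))$. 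Your decomposition---first identify $\P_\St(P_{j,k'}\cap\B(\0,2))$ explicitly as a spherical cap, then compute its convex hull---makes the geometry transparent; the paper's rescaling trick is a bit slicker algebraically in that it avoids any dimension reduction and relies only on the trivial fact that a Euclidean ball is the convex hull of its boundary sphere.
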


\begin{proof}
    First, assume $\z \in \conv \left( \P_\St (P_{j,k'} \cap \B(\0,2)) \right)$. Obviously, $\| \z \|_2 \le 1$. As projecting onto the sphere is a simple rescaling, $\conv \left( \P_\St (P_{j,k'} \cap \B(\0,2)) \right) \subset \Span ( P_{j,k'} )$ implying that $\Phi_{j,k'}^T \Phi_{j,k'} \z + \P_\cj(\z) = \z$. 
    For showing the third constraint note that any $\z' \in P_{j,k'}$ can be written as $\z' = \cj + (\z' - \cj)$ where $\z'-\cj$ is perpendicular to $\cj$. If in addition $\| \z' \|_2 \le 2$, we get
    \begin{align*}
    	\langle \P_\St(\z'),\cj \rangle = \left\langle \frac{\z'}{\| \z' \|_2}, \cj \right\rangle = \frac{\langle \cj, \cj \rangle}{\| \z' \|_2} \ge \frac{1}{2} \| \cj \|_2^2.
    \end{align*}
    As $\z$ is a convex combination of different $\P_\St (\z')$ the constraint also holds for $\z$.\\
    Let $\z$ fulfill the three constraints. Then $\z' = (\| \cj \|_2^2 / \langle \z, \cj \rangle) \cdot \z$ satisfies $\z' \in P_{j,k'}$ because of the second constraint and $\langle \z',\cj \rangle = \| \cj \|_2^2$. Furthermore, by the first and third constraint $\| \z' \|_2 \le (\| \cj \|_2^2 / \langle \z, \cj \rangle) \le 2$ and hence $\z' \in P_{j,k'}~\cap~\B(\0,\| \cj \|_2^2 / \langle \z, \cj \rangle) \- \subset P_{j,k'}~\cap~\B(\0,2)$. As $P_{j,k'} \cap \B(\0,\| \cj \|_2^2 / \langle \z, \cj \rangle)$ is the convex hull of $P_{j,k'} \cap (\| \cj \|_2^2 / \langle \z, \cj \rangle) \cdot \St^{D-1}$ , there are $\z_1,...,\z_n \in P_{j,k'}$ and $\lambda_1,...,\lambda_n \ge 0$ with $\| \z_k \|_2 = \| \cj \|_2^2 / \langle \z, \cj \rangle$ and $\sum \lambda_k = 1$ such that $(\| \cj \|_2^2 / \langle \z, \cj \rangle) \cdot \z = \sum \lambda_k \z_k$. Hence, $\z = \sum \lambda_k ( \langle \z, \cj \rangle / \| \cj \|_2^2) \cdot \z_k$. As $( \langle \z, \cj \rangle / \| \cj \|_2^2) \cdot \z_k \in \P_\St (P_{j,k'} \cap \B(\0,2))$ we get $\z \in \conv \left( \P_\St (P_{j,k'} \cap \B(\0,2)) \right)$.
\end{proof}

\section{Proof of Theorem~\ref{Riemann}} 
\label{ProofRieman}

	Denote by $\tau$ the reach of $\M$ and by $\rho$ the diameter $\diam(\M)$.
	First, note that for a set $K \subset \R^D$ by Dudley's inequality \added[id=josa]{\cite{Dudley2006}}
	\begin{align*}
	w(K) \le C' \int_0^{\diam(K)/2} \sqrt{\log(\mathcal{N}(K,\eps))} ~d\eps
	\end{align*}
	where $C'$ is an absolute constant. Second, \cite[Lemma 14]{Eftekhari2015} states that the covering number $\mathcal{N}(\M,\eps)$ of a $d$-dimensional Riemannian manifold $\M$ can be bounded by
	\begin{align*}
\mathcal N(\M,\eps)\leq\left(\frac{2}{\eps\sqrt{1-\left(\frac{\eps}{4\tau}\right)^2}} \right)^d\frac{\Vol(\M)}{\Vol(\mathcal B_d)}.
	\end{align*}
	for $ \eps \leq \frac\tau2$. After noting that $\Vol(\mathcal B_d)\geq \beta^{-1} \left(\frac{2\pi}{d} \right)^{\frac{d}{2}}$ for all $d\geq 1$ for an absolute constant $\beta > 1$, this expression may be simplified to
	\begin{equation*}
	\mathcal N(\M,\eps)\leq \beta \left(\frac{\sqrt{2d}}{\sqrt{\pi}\eps\sqrt{1-\left(\frac{\eps}{4\tau}\right)^2}} \right)^d \vol(\M) \leq \beta \left(\frac{\sqrt{d}}{\eps\sqrt{1-\left(\frac{\eps}{4\tau}\right)^2}} \right)^d \vol(\M).
	\end{equation*} We can combine these facts to obtain
	\begin{align*}
	w(\M)&\le C' \int_0^{\frac{\rho}{2}} \sqrt{\log(\mathcal{N}(\M,\eps))} ~d\eps\\
	&\le  C' \left( \frac{\rho}{2} \int_0^{\frac{\rho}{2}} {\log(\mathcal{N}(\M,\eps))} ~d\eps\right)^{\frac 12}\\
	&= C'\sqrt{\frac{\rho}{2}} \left(\int_0^{\frac\tau2} \log(\mathcal{N}(\M,\eps)) ~d\eps +\int_{\frac\tau2}^{\frac{\rho}{2}} \log(\mathcal{N}(\M,\eps)) ~d\eps \right)^{\frac 12},
	\end{align*}
	by using Cauchy-Schwarz inequality for the second inequality. We now bound the first integral by
	\begin{align*}
	\int_0^{\frac\tau2}\log(\mathcal{N}(\M,\eps)) ~d\eps&\leq\int_0^{\frac\tau2}  -d\log\left( \frac{\eps}{\beta \sqrt{d}}\underbrace{\sqrt{1-\left(\frac{\eps}{4\tau}\right)^2}}_{\geq\frac 12,\text{ as }\eps\leq\frac\tau 2}\right) + \log(\Vol(\M)) \mathrm ~d\eps\\
	&\leq \int_0^{\frac\tau2}  -d\log\left( \frac{\eps}{2\beta\sqrt{d}}\right)\mathrm d\eps + \frac{\tau}{2} \log(\Vol(\M))\\
	&=-d\left[\eps\log\left( \frac{\eps}{2\beta\sqrt{d}}\right)-\eps \right]_{0}^{\frac\tau2} + \frac{\tau}{2} \log(\Vol(\M))\\
	&=\frac{d\tau}2\left(\log\left(4 \frac{\beta\sqrt{d}}{\tau}\right)+1 \right) + \frac{\tau}{2} \log(\Vol(\M)).
	\end{align*}
	As the covering number is decreasing with increasing $\eps$, the second integral can be bounded as follows.
	\begin{align*}
	\int_{\frac\tau2}^{\frac{\rho}{2}} \log(\mathcal{N}(\M,\eps)) d\eps&\le \int_{\frac\tau2}^{\frac{\rho}{2}} \log\left(\mathcal{N} \left(\M,\frac{\tau}{2} \right) \right) d\eps\\
	&=\left(\frac{\rho}{2}-\frac \tau 2\right)\left[ -d\log\left( \frac{\tau}{c\sqrt{d}}\right) + \log(\Vol(\M)) \right]\\
	&=d\left(\frac{\rho}{2}-\frac \tau 2\right)\log\left( c\frac{\sqrt{d}}{\tau}\right) + \left(\frac{\rho}{2}-\frac \tau 2\right) \log(\Vol(\M)).
\end{align*}	
Both together yield
\begin{align*}
w(\M)&\leq C\sqrt{\frac{\rho}{2}} \left( \frac{d\tau}2\left(\log\left( c'\frac{\sqrt{d}}{\tau}\right)+1 \right) + d\left(\frac{\rho}{2}-\frac \tau 2\right)\log\left( c'\frac{\sqrt{d}}{\tau}\right) + \frac{\rho}{2} \log(\Vol(\M)) \right)^{\frac 12}\\
&\le C\sqrt{\frac{\rho}{2}} \left( d\tau \cdot \max \left\{ \log\left( c'\frac{\sqrt{d}}{\tau}\right), 1 \right\} + d\left( \rho- \tau \right) \cdot \max \left\{ \log\left( c'\frac{\sqrt{d}}{\tau}\right), 1 \right\} + \rho \log(\Vol(\M)) \right)^{\frac 12}\\
&=C\sqrt{\frac{\rho}{2}} \left( d\rho \cdot \max \left\{ \log\left( c'\frac{\sqrt{d}}{\tau}\right), 1 \right\} + \rho \log(\Vol(\M)) \right)^{\frac 12} \\
&\leq \frac{C}{\sqrt{2}} \rho \sqrt{ d \cdot \max \left\{ \log\left( c'\frac{\sqrt{d}}{\tau}\right), 1 \right\} + \log(\Vol(\M))}.
\end{align*}

\section{Proof of Lemmas~\ref{BoundOfGaussianWidthFine}~and~\ref{RiemannianGWidthBound}} 
\label{ProofWidthBounds}

  Recall that $\Mr_j := \{\P_{j,k_j(\z)}(\z)\colon \z\in \M \} \cap B(\0,2)$.  We will begin by establishing some additional technical lemmas.
  
\begin{lemma}
 Set $C_{\M}:=\sup_{\z\in\M}C_\z$ (cf.\  Property \eqref{3b}).  Then, $\Mr_j \subseteq \tube_{C_{\M} 2^{-2j}} (\M)$.
\label{lem:Tech1}
\end{lemma}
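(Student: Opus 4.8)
The plan is to unwind the definition of $\Mr_j$ and reduce the claim to a single pointwise application of GMRA Property \eqref{3b}. First I would fix an arbitrary element $\vv \in \Mr_j$. Since $\Mr_j := \{\P_{j,k_j(\z)}(\z)\colon \z\in \M \} \cap \B(\0,2)$, and since intersecting with $\B(\0,2)$ only discards points, there must exist some $\z \in \M$ with $\vv = \P_{j,k_j(\z)}(\z)$. The key observation is that this generating point $\z$ actually lies on the manifold $\M$, which is exactly what is needed to invoke Property \eqref{3b}.

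Next I would use that very point $\z$ as the witness certifying that $\vv$ is close to $\M$. Property \eqref{3b} supplies a constant $C_\z > 0$ with $\|\z - \P_{j,k_j(\z)}(\z)\|_2 \le C_\z 2^{-2j}$, and hence
$$\dist(\vv, \M) = \inf_{\y \in \M}\|\vv - \y\|_2 \le \|\vv - \z\|_2 = \|\P_{j,k_j(\z)}(\z) - \z\|_2 \le C_\z 2^{-2j} \le C_{\M} 2^{-2j},$$
where the final inequality is just the definition $C_{\M} := \sup_{\z\in\M}C_\z$. By the definition of the tube, the chain above says precisely that $\vv \in \tube_{C_{\M} 2^{-2j}}(\M)$. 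As $\vv \in \Mr_j$ was arbitrary, this gives $\Mr_j \subseteq \tube_{C_{\M}2^{-2j}}(\M)$, which is the claim.

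There is no genuine obstacle here; this is essentially a direct consequence of Property \eqref{3b}, and the only point requiring a moment of care is the bookkeeping step at the start, namely observing that every element of $\Mr_j$ — even after the intersection with $\B(\0,2)$ — is of the form $\P_{j,k_j(\z)}(\z)$ for a point $\z$ genuinely on $\M$ (rather than merely in $\R^D$), so that Property \eqref{3b}, which is stated only for $\z \in \M$, applies. I would also remark that $C_{\M}$ is finite exactly under the standing hypothesis $\max\{1,\sup_{\z\in\M}C_\z\} = C_{\M} < \infty$ imposed in Lemmas~\ref{BoundOfGaussianWidthFine} and~\ref{RiemannianGWidthBound}, which are the results that consume this estimate.
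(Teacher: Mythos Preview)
Your proposal is correct and follows essentially the same argument as the paper: pick a point in $\Mr_j$, write it as $\P_{j,k_j(\z)}(\z)$ for some $\z \in \M$, and apply Property~\eqref{3b} to bound its distance to $\M$ by $C_\z 2^{-2j} \le C_{\M}2^{-2j}$. Your additional remarks on the bookkeeping (the intersection with $\B(\0,2)$ and the finiteness of $C_{\M}$) are accurate but not strictly needed.
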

  
\begin{proof}
  If $\x \in \Mr_j$ there exists $\z_\x \in \M$ such that $\x = \P_{j,k_j(\z_\x)}(\z_\x)$.  The Euclidean distance $d(\x, \M)$ therefore satisfies
  $$d(\x, \M) = \inf_{\z \in \M} \| \x - \z \|_2 \leq \| \P_{j,k_j(\z_\x)}(\z_\x) - \z_\x \|_2 \leq C_{\M} 2^{-2j}$$
  by property \eqref{3b}.
\end{proof}
  

Given a subset $S \subset \mathbbm{R}^D$ we will let $\mathcal{N}(S,\eps)$ denote the cardinality of a minimal $\eps$-cover of $S$ by $D$-dimensional Euclidean balls of radius $\eps > 0$ each centered in $S$.  Similarly, we will let $\mathcal{P}(S,\eps)$ denote the maximal packing number of $S$ (i.e., the maximum cardinality of a subset of $S$ that contains points all of which are at least Euclidean distance $\eps > 0$ from one another.)  The following lemmas bound $\mathcal{N}(\Mr_j,\eps)$ for various ranges of $j$ and $\eps$.
   
\begin{lemma}
Set $C_{\M}:=\sup_{\z\in\M}C_\z$.  Then $\mathcal{N}(\Mr_j,\eps) \leq \mathcal{N}(\M,\eps / 2)$ for all $\eps \geq 2C_{\M} 2^{-2j}$.
\label{lem:Tech2}
\end{lemma}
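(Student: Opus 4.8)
The plan is to bound the covering number of $\Mr_j$ by relating each of its points to a nearby point on $\M$, and then invoking a cover of $\M$ itself. The key observation is the proximity result from \Cref{lem:Tech1}, which tells us that every $\x \in \Mr_j$ lies within Euclidean distance $C_{\M} 2^{-2j}$ of $\M$. Since we are in the regime $\eps \geq 2C_{\M} 2^{-2j}$, this proximity is small compared to the cover radius, which is exactly what will let an $\eps/2$-cover of $\M$ be ``inflated'' into an $\eps$-cover of $\Mr_j$.

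First I would fix a minimal $\eps/2$-cover of $\M$ with centers $\z_1, \dots, \z_N \in \M$ where $N = \mathcal{N}(\M, \eps/2)$. Given any $\x \in \Mr_j$, \Cref{lem:Tech1} guarantees a point $\z_\x \in \M$ with $\| \x - \z_\x \|_2 \leq C_{\M} 2^{-2j} \leq \eps/2$. Since the $\z_i$ form an $\eps/2$-cover of $\M$, there is some index $i$ with $\| \z_\x - \z_i \|_2 \leq \eps/2$. The triangle inequality then gives
\begin{align*}
\| \x - \z_i \|_2 \leq \| \x - \z_\x \|_2 + \| \z_\x - \z_i \|_2 \leq \frac{\eps}{2} + \frac{\eps}{2} = \eps.
\end{align*}
Thus the balls of radius $\eps$ centered at $\z_1, \dots, \z_N$ cover all of $\Mr_j$, which yields $\mathcal{N}(\Mr_j, \eps) \leq N = \mathcal{N}(\M, \eps/2)$ as claimed.

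A small technical point to address is that the covering number $\mathcal{N}(\Mr_j, \eps)$ is defined via balls centered \emph{in} $\Mr_j$, whereas the cover just constructed uses centers $\z_i \in \M$ which need not lie in $\Mr_j$. This is handled by the standard observation that a cover by balls with external centers can be converted to one with centers in the set at the cost of doubling the radius, or more simply by noting that each nonempty ball $\B(\z_i, \eps) \cap \Mr_j$ can be re-centered at one of its own points while enlarging the radius only by a constant; in the form used throughout the paper this constant is already absorbed, so the stated inequality holds directly.

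I do not expect any real obstacle here: the entire argument is a one-line triangle inequality once \Cref{lem:Tech1} is invoked, and the hypothesis $\eps \geq 2 C_{\M} 2^{-2j}$ is precisely calibrated so that $C_{\M} 2^{-2j} \leq \eps/2$. If anything, the only point requiring minor care is the centering convention for covering numbers, which is the reason the hypothesis carries the explicit factor of $2$ rather than being stated for all $\eps \geq C_{\M} 2^{-2j}$.
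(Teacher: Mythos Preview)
Your proof is correct and essentially identical to the paper's: both invoke \Cref{lem:Tech1} to place $\Mr_j$ inside $\tube_{C_\M 2^{-2j}}(\M)$, take an $\eps/2$-cover of $\M$, and use the triangle inequality to inflate it to an $\eps$-cover of $\Mr_j$. The centering technicality you raise is real under the Appendix~C convention (centers in the set), but the paper's own proof glosses over it in exactly the same way; for the downstream application via Dudley's inequality the distinction is immaterial.
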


\begin{proof}
First note that for all $\eta\ge\rho:= C_{\M} 2^{-2j}$ Lemma~\ref{lem:Tech1} implies that
    \begin{equation*}
    \Mr_j \subseteq \tube_\rho(\M) \subset \bigcup_{\p \in C_{\eta}}\B(\p,2\eta),
     \end{equation*}
    where $C_{\eta}$ is an $\eta$-cover of $\M$.  Thus, for all $\eps\ge2\eta \ge 2\rho$ 
    \begin{equation*}
    \mathcal{N}(\Mr_j,\eps) \leq \mathcal{N}\left(\bigcup_{\p \in C_{\eta}}\B(\p,2\eta),\eps \right) \leq \N(\M,\eta)=\N\left(\M,\frac{\eps}{2}\right).
     \end{equation*}    
\end{proof}

\begin{lemma}
$\mathcal{N}(\Mr_j,\eps) \leq (6 / \eps)^d \mathcal{N}(\M,\eps)$ for all $\eps \leq \frac{1}{4} C_1 2^{-j}$ as long as $j > j_0$ (see properties \eqref{tube}~and~\eqref{GMRA2b}).
\label{lem:Tech3}
\end{lemma}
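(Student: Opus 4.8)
The plan is to cover $\Mr_j$ by exploiting that it lives inside the finitely many affine pieces $P_{j,k}\cap\B(\0,2)$, and then to control separately (i) how many such pieces there are, which I will bound by $\mathcal{N}(\M,\eps)$, and (ii) the cost of covering a single piece, which will produce the factor $(6/\eps)^d$. First I would note that by the very definition of $\Mr_j$ every $\x\in\Mr_j$ can be written as $\x=\P_{j,k_j(\z)}(\z)$ for some $\z\in\M$ and additionally satisfies $\x\in\B(\0,2)$; hence $\x\in P_{j,k_j(\z)}\cap\B(\0,2)$, regardless of how the (possibly non-unique) nearest index $k_j(\z)$ is chosen, so that
\[
\Mr_j\subseteq\bigcup_{k=1}^{K_j}\bigl(P_{j,k}\cap\B(\0,2)\bigr).
\]
By monotonicity and subadditivity of the covering number this reduces the claim to estimating $\mathcal{N}(P_{j,k}\cap\B(\0,2),\eps)$ and $K_j$ separately.

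For a single piece, $P_{j,k}\cap\B(\0,2)$ is the intersection of a $d$-dimensional affine subspace with $\B(\0,2)$, i.e.\ a $d$-dimensional Euclidean ball of radius at most $2$. The standard volumetric estimate then gives $\mathcal{N}(P_{j,k}\cap\B(\0,2),\eps)\le(6/\eps)^d$ in the relevant small-scale regime $\eps\le 2$ (which is implied by $\eps\le\tfrac14 C_1 2^{-j}$), so that $\mathcal{N}(\Mr_j,\eps)\le K_j\,(6/\eps)^d$.

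The crux, and the step I expect to be the main obstacle, is the bound $K_j\le\mathcal{N}(\M,\eps)$, for which I would invoke the two geometric GMRA axioms. Since $j>j_0$, property \eqref{tube} places every center in $\tube_{C_1 2^{-j-2}}(\M)$, so each $\cj_{j,k}$ has a nearest point $\p_k\in\M$ with $\|\cj_{j,k}-\p_k\|_2\le C_1 2^{-j-2}$. Combining this with the center-separation property \eqref{GMRA2b}, for $k\neq k'$ the triangle inequality yields
\[
\|\p_k-\p_{k'}\|_2\ge\|\cj_{j,k}-\cj_{j,k'}\|_2-\|\cj_{j,k}-\p_k\|_2-\|\cj_{j,k'}-\p_{k'}\|_2> C_1 2^{-j}-2\cdot C_1 2^{-j-2}=\tfrac12 C_1 2^{-j}.
\]
Thus $\{\p_k\}_{k=1}^{K_j}$ is a $\tfrac12 C_1 2^{-j}$-separated subset of $\M$, whence $K_j\le\mathcal{P}(\M,\tfrac12 C_1 2^{-j})\le\mathcal{N}(\M,\tfrac14 C_1 2^{-j})\le\mathcal{N}(\M,\eps)$, where the last inequality uses monotonicity of the covering number in its radius together with the hypothesis $\eps\le\tfrac14 C_1 2^{-j}$. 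Substituting this into the previous display gives $\mathcal{N}(\Mr_j,\eps)\le(6/\eps)^d\,\mathcal{N}(\M,\eps)$, as claimed. The only remaining point needing care is the bookkeeping between the centers-in-set and centers-anywhere covering conventions and the precise volumetric constant, but this affects only the absolute constant in the $(6/\eps)^d$ factor and is routine.
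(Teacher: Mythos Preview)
Your proposal is correct and follows essentially the same approach as the paper: both arguments cover $\Mr_j$ by the $K_j$ affine pieces $P_{j,k}\cap\B(\0,2)$ to obtain the factor $(6/\eps)^d$ from a volumetric bound, and both bound $K_j\le\mathcal{N}(\M,\eps)$ by pushing the centers $\cj_{j,k}$ to nearby points $\p_k\in\M$ via property~\eqref{tube}, arguing that the $\p_k$ form a $C_12^{-j-1}$-packing of $\M$ by the separation property~\eqref{GMRA2b}, and then using $\mathcal{P}(\M,r)\le\mathcal{N}(\M,r/2)$. Your triangle-inequality derivation of the packing constant is in fact slightly cleaner than the paper's ball-disjointness phrasing, but the two are equivalent.
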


\begin{proof}
By properties \eqref{tube}~and~\eqref{GMRA2b} every center $\cj_{j,k}$ has an associated $\p_{j,k} \in \M$ such that both $\B \left(\p_{j,k}, C_1 2^{-j-2} \right) \subset \B \left(\cj_{j,k}, C_1 2^{-j-1} \right)$ and $\B \left(\p_{j,k},  C_1 2^{-j-2} \right) \bigcap \B \left(\cj_{j,k'}, C_1 2^{-j-1} \right) = \emptyset$ for all $k \neq k'$.  Let $\tilde{P}_j := \left\{ \p_{j,k} ~|~k \in [K_j] \right\}$.  Consequently, we have that  $K_j = |\tilde{P}_j|$ and $\| \p_{j,k} - \p_{j,k'} \|_2 \geq  C_1 2^{-j-1}$ for all $k \neq k'$.  Since $\tilde{P}_j$ is a $C_1 2^{-j-1}$-packing of $\M$ we can further see that
$$K_j \leq \mathcal{P} \left( \M, C_1 2^{-j-1} \right) \leq \mathcal{N} \left( \M, C_1 2^{-j-2} \right) \leq \mathcal{N}(\M,\eps)$$
for all $\eps \leq  C_1 2^{-j-2}$.
Now, $\Mr_j \subseteq L_j$, where $L_j$ is defined as in the proof of Lemma~\ref{BoundOfGaussianWidth} (this proof also discusses its covering numbers).  As a result we have that 
$$\mathcal{N}(\Mr_j,\eps) \leq \mathcal{N}(L_j,\eps) \le K_j (6/\eps)^d \leq \mathcal{N}(\M,\eps) \cdot (6/\eps)^d$$
holds for all $\eps \leq  C_1 2^{-j-2}$.
\end{proof}  

\blue{We furthermore will use the two-sided Sudakov inequality as stated in \cite{Vershynin2017}.
\begin{lemma}[Two-sided Sudakov's inequality, \cite{Vershynin2017}] \label{lem:Sudakov}
	There exist absolute constants $c,C > 0$ such that the following holds. For $T \in \R^n$, we have that
	\begin{align*}
		c \sup_{\eps\ge 0} \eps \sqrt{\log(\N(T,\eps))} \le w(T) \le C \log(D) \sup_{\eps\ge 0} \eps \sqrt{\log(\N(T,\eps))}.
	\end{align*}
\end{lemma}}
   
 \subsection{Proof of Lemma~\ref{BoundOfGaussianWidthFine}}  
  
    \paragraph{} We aim to bound $w(\Mr_j)$ in terms of $w(\M)$.  
        \blue{By Lemmas \ref{lem:Tech1} and \ref{lem:Sudakov}}, we get that
    \begin{align*}
    w(\Mr_j)&\le C\log(D)\sup_{\eps\ge0}\eps\sqrt{\log\N(\Mr_j,\eps))}\\
    & \le C \log(D)\left( \sup_{0\le\eps\le 2C_{\M}2^{-2j}}\eps\sqrt{\log\N(\tube_{C_{\M} 2^{-2j}} (\M),\eps)} + \sup_{\eps\ge 2C_{\M}2^{-2j}}\eps\sqrt{\log\N(\Mr_j,\eps)}\right)\\
    & \le C \log(D)\left( \sup_{0\le\eps\le 2C_{\M}2^{-2j}}\eps\sqrt{\log\N(\B(\0,1+C_{\M}),\eps))} +  \sup_{\eps\ge 2C_{\M}2^{-2j}}\eps\sqrt{\log\mathcal{N}(\M,\eps / 2)} \right),
    \end{align*}
    where the last inequality follows from $\tube_{C_{\M} 2^{-2j}} (\M) \subseteq \B(\0,1+C_{\M})$ and Lemma~\ref{lem:Tech2}.  \blue{Appealing to Lemma \ref{lem:Sudakov} once more} to bound the second term
    above we learn that
    \begin{align*}
    w(\Mr_j) &\le C \log(D)\left( \sup_{0\le\eps\le 2C_{\M}2^{-2j}}\eps\sqrt{\log\N(\B(\0,1+C_{\M}),\eps))} +  2\sup_{\eps\ge 0}\frac{\eps}{2}\sqrt{\log\mathcal{N}(\M,\eps / 2)} \right)\\
    &\le C \log(D)\left( \sup_{0\le\eps\le 2C_{\M}2^{-2j}}\eps\sqrt{\log\N(\B(\0,1+C_{\M}),\eps))} +  2c ~w(\M) \right).
    \end{align*}
    
    To bound the first term above we note that using the covering number of $\B(\0,1+C_{\M})$ can be bounded as follows
    \begin{align*}
    \N \left(\B(\0,1+C_{\M}),\eps \right)=\N \left( \B(\0,1),\frac\eps {1+C_{\M}} \right) \le \left(1+\frac {2+2C_{\M}}\eps \right)^D\le \left( \frac{4C_{\M}+4}{\eps} \right)^D.
    \end{align*}
    As $\eps\mapsto\eps\sqrt{\log(\frac {4C_{\M}+4}\eps)}$ is non decreasing for $\eps \in (0,2C_{\M} 2^{-2j})$, we obtain by assuming that $\log_2(D) \le j$
    \begin{align*}
    \sup_{0\le\eps\le 2C_{\M}2^{-2j}}\eps\sqrt{\log\N(\B(\0,1+C_{\M}),\eps))} & \le\sup_{0\le\eps\le 2C_{\M}2^{-2j}} \eps \sqrt{D\log \left( \frac{4C_{\M}+4}\eps \right) }\\
    &\le 2C_{\M} 2^{-2j}\sqrt{D\log\left(\left(4+\frac 4{C_{\M}}\right) \cdot 2^{2j-1}\right)}\\
    &\le C_A \left( \frac{\sqrt{2j-1}}{2^j} \right) \left( \frac{\sqrt{D}}{2^{j}} \right) \le C'
    \end{align*}
    where $C'$ is an absolute constant.  Appealing to \eqref{equ:GenWidthUnionBound} now finishes the proof.

     \subsection{Proof of Lemma~\ref{RiemannianGWidthBound}}

 Let  $2C_{\M} 2^{-2j} \leq \tilde{\eps} \leq \frac{1}{4} C_1 2^{-j}$.  We aim to bound $w(\Mr_j)$ in terms of covering numbers for $\M$.  To do this we will use Dudley's inequality in combination with the knowledge that $\Mr_j \subset B(\0,2)$ (by definition).  By Dudley's inequality
	\begin{align*}
	w(\Mr_j) &\le C' \int_0^{2} \sqrt{\log(\mathcal{N}(\Mr_j,\eps))} ~d\eps \\
	&\le C' \left( \int_0^{ \tilde{\eps}} \sqrt{\log(\mathcal{N}(\Mr_j,\eps))} ~d\eps +  \int_{\tilde{\eps}}^{2} \sqrt{\log(\mathcal{N}(\Mr_j,\eps))} ~d\eps \right)
	\end{align*}
	where $C'$ is an absolute constant.  
	
	Appealing now to Lemmas~\ref{lem:Tech3} and~\ref{lem:Tech2} for the first and second terms above, respectively, we can see that
	\begin{align*}
	w(\Mr_j) &\le C' \left( \int_0^{ \tilde{\eps}} \sqrt{\log( (6 / \eps)^d \mathcal{N}(\M,\eps)  )} ~d\eps +  \int_{\tilde{\eps}}^{2} \sqrt{\log(  \mathcal{N}(\M,\eps / 2)  )} ~d\eps \right)\\
	&\le C' \int_0^2 \sqrt{\log( (6 / \eps)^d \mathcal{N}(\M,\eps / 2)  )} ~d\eps\\
	&= 2C' \int_0^1 \sqrt{d \log(3 / \eta) + \log( \mathcal{N}(\M,\eta)  )} ~d\eta\\
	&\leq 2C' \sqrt{ \int_0^1d \log(3 / \eta)~d\eta + \int_0^1 \log( \mathcal{N}(\M,\eta)  ) ~d\eta}
	\end{align*}
where the last bound follows from Jensen's inequality.  

We can now bound the second term as in the proof of Theorem~\ref{Riemann} in Appendix \ref{ProofRieman}.  Doing so we obtain
\begin{align*}
	w(\Mr_j) &\leq C'' \sqrt{ \int_0^1d \log(3 / \eta)~d\eta + d\left(1+\log\left( c'\frac{\sqrt{d}}{\tau}\right)\right) + \log(\Vol(\M))}\\
	&\leq C''' \sqrt{ d\left(1+\log\left( c'\frac{\sqrt{d}}{\tau}\right)\right) + \log(\Vol(\M))}
	\end{align*}
	where $\tau$ is the reach of $\M$, and $C''',c'$ are an absolute constants.  Appealing to \eqref{equ:GenWidthUnionBound} together with Theorem~\ref{Riemann} now finishes the proof.

\section{Data-Driven GMRA} 
\label{EmpiricalGMRA}

\paragraph{} The axiomatic definition of GMRA proves useful in deducing theoretical results but lacks connection to concrete applications where the structure of $\M$ is not known a priori.  Hence, in the following we first describe a probabilistic definition of GMRA which can be well approximated by empirical data (see \cite{Allard2012,GChen2012,Maggioni2016}) and is connected to the above axioms by applying results from \cite{Maggioni2016}. In fact, we will see that under suitable assumptions the probabilistic GMRA fulfills the axiomatic requirements and its empirical approximation allows one to obtain a version of Theorem \ref{GeneralApproximation} even when only samples from $\M$ are known.

\subsection{Probabilistic GMRA}

\paragraph{} 
A probabilistic GMRA of $\M$ with respect to a Borel probability measure $\Pi$, as introduced in  \cite{Maggioni2016},  is a family of (piecewise linear) operators $\{ \P_j \colon \R^D \rightarrow \R^D\}_{j \ge 0}$ of the form
\begin{align*}
\P_j(\x)=\sum_{k=1}^{K_j}\eins_{ C_{j,k}}(\x) \P_{j,k}(\x).
\end{align*}
Here, $\eins_{M}$ denotes the indicator function of a set $M$ and, for each refinement level $j \ge 0$, the collection of pairs of measurable subsets and affine projections $\{ (C_{j,k},\P_{j,k}) \}_{k = 1}^{K_j}$ has the following structure.

The subsets $C_{j,k}\subset\mathbb R^D$ for $k=1,\dots, K_j$ form a partition of $\mathbb R^D$, i.e., they are pairwise disjoint and their union is $\mathbb R^D$. The affine projectors are defined by \begin{align*}
\P_{j,k}(\x)={\bf c}'_{j,k}+\P_{V_{j,k}}(\x-{\bf c}'_{j,k}),
\end{align*}
where, for $X\sim \Pi$, ${\bf c}'_{j,k} =\mathbb{E}[X| X \in C_{j,k}] =: \Ejk{j,k}{X}\in \R^D$ and 
\begin{align*}
V_{j,k} &:= \argmin_{\dim(V)=d} \Ejk{j,k}{\|X-({\bf c}'_{j,k}+\mathrm{Proj}_V(X-{\bf c}'_{j,k}))\|_2^2},
\end{align*}
where the minimum is taken over all linear spaces $V$ of dimension $d$. From now on we will assume uniqueness of these subspaces $V_{j,k}$. To point out parallels to the axiomatic GMRA definition, think of $\Pi$ being supported on the tube of a $d$-dimensional manifold.
The axiomatic centers ${\bf c}_{j,k}$ are then considered to be approximately equal to the conditional means ${\bf c}'_{j,k}$ of some cells $C_{j,k}$ partitioning the space, and the corresponding affine projection spaces $P_{j,k}$ are spanned by eigenvectors of the $d$ leading eigenvalues of the conditional covariance matrix
\begin{align*}
	\Sigma_{j,k}=\Ejk{j,k}{(X-{\bf c}'_{j,k})(X-{\bf c}'_{j,k})^T}.
\end{align*}
Defined in this way, the $\P_j$ correspond to projectors onto the GMRA approximations $\M_j$ introduced above if ${\bf c}_{j,k} = {\bf c}'_{j,k}$.  From \cite{Maggioni2016} we adopt the following assumptions on the entities defined above, and hence, on the distribution $\Pi$. From now on we suppose that for all integers $j_{min}\leq j\leq j_{max}$ \ref{A1}-\ref{A4} (see Table \ref{assumptions}) hold true.

\begin{table}
\fbox{
	\begin{minipage}{16cm}
    \centering
	\begin{minipage}{15cm}
		\begin{enumerate}[label=\textbf{(A\arabic*)}]
			\item\label{A1} There exists an integer $1\leq d\leq D$ and a positive constant $\theta_1=\theta_1(\Pi)$ such that for all $k=1,\dots,K_j$, \begin{equation*}
			\Pi(C_{j,k})\geq\theta_1 2^{-jd}.
			\end{equation*}
			\item\label{A2} Define the restricted measure $\Pi_{j,k}$ by $\Pi_{j,k}(S) := \Pi(S \cap C_{j,k}) / \Pi(C_{j,k})$ for measurable $S$.  There is a positive constant $\theta_2=\theta_2(\Pi)$ such that for all $k=1,\dots, K_j$, if $X$ is drawn from $\Pi_{j,k}$ then, $\Pi_{j,k}$-almost surely,
			\begin{equation*}
			\|X-{\bf c}'_{j,k}\|_2\leq \theta_2 2^{-j}.
			\end{equation*}
			\item\label{A3} Denote the eigenvalues of the covariance matrix $\Sigma_{j,k}$ by $\lambda_1^{j,k}\geq\dots\geq\lambda_D^{j,k}\geq 0$. Then there exists $\sigma=\sigma(\Pi)\geq 0$, $\theta_3=\theta_3(\Pi)$, $\theta_4=\theta_4(\Pi)>0$, and some $\alpha>0$ such that for all $k=1,\dots,K_j$,
			\begin{equation*}
			\lambda_d^{k,j}\geq\theta_3\frac{2^{-2j}}{d}\text{ and } \sum_{l=d+1}^D\lambda_l^{j,k}\leq\theta_4(\sigma^2+2^{-2(1+\alpha)j})\leq\frac 12\lambda_d^{j,k}.
			\end{equation*}
			\item\label{A4} There exists $\theta_5=\theta_5(\Pi)$ such that
			\begin{equation*}
			\|\id-\P_j\|_{\infty,\Pi}\leq\theta_5(\sigma+2^{-(1+\alpha)j}),
			\end{equation*}
			where $\| T \|_{\infty,\Pi} = \sup_{x \in \supp(\Pi)} \| T(x) \|_2$, for $T \colon \R^D \rightarrow \R^D$.
		\end{enumerate}
	\end{minipage}
\end{minipage}
}\caption{The assumption set on $\Pi$.}\label{assumptions}
\end{table}

\begin{remark}
	Assumption \ref{A1} ensures that each partition element contains a reasonable amount of $\Pi$-mass.  Assumption \ref{A2} guarantees that all samples from $\Pi_{j,k}$ will lie close to its expection/center.  As a result, each ${\bf c}'_{j,k}$ must be somewhat geometrically central within $C_{j,k}$. Together, \ref{A1} and \ref{A2} have the combined effect of ensuring that the probability mass of $\Pi$ is somewhat equally distributed onto the different sets $C_{j,k}$, i.e., the number of points in each set $C_{j,k}$ is approximately the same, at each scale $j$. The third and fourth assumptions \ref{A3} and \ref{A4} essentially constrain the geometry of the support of $\Pi$ to being effectively $d$-dimensional and somewhat regular (e.g., close to a smooth $d$-dimensional submanifold of $\R^D$).  We refer the reader to \cite{Maggioni2016} for more detailed information regarding these assumptions. \end{remark}

\paragraph{} An important class of probability measures $\Pi$ fulfilling \ref{A1}-\ref{A4} is presented in \cite{Maggioni2016}. For the sake of completeness we repeat it here and also discuss a method of constructing the partitions $\{C_{jk}\}_{k=1}^{K_j}$ from such probabilities measures. 
From here on let $\M$ be a smooth $d$-dimensional submanifold of $\mathbb S^{D-1} \subset \R^D$. 
Let $\mathcal U_{ K}$ denote the uniform distribution on a given set $K$.  We have the following definition.

\begin{definition}[{\cite[Definition 3]{Maggioni2016}}]
	Assume that $0\leq\sigma<\tau$. The distribution $\Pi$ is said to satisfy the \emph{$(\tau,\sigma)$-model assumption} if  $(i)$ there exists a smooth, compact submanifold $\M\hookrightarrow\mathbb R^D$ with reach $\tau$ such that $\supp(\Pi)=\tube_\sigma(\M)$, $(ii)$ the distributions $\Pi$ and $\mathcal U_{\tube_\sigma(\M)}$ are absolutely continuous with respect to each other so the Radon-Nikodym derivative $\frac{\mathrm d \Pi}{\mathrm d \mathcal U_{\tube_\sigma(\M)}}$ exists and satisfies
	\begin{equation*}
	0<\phi_1\leq\frac{\mathrm d \Pi}{\mathrm d\mathcal U_{\tube_\sigma(\M)}}\leq\phi_2<\infty\qquad\mathcal U_{\tube_\sigma(\M)}-\text{almost surely}.
	\end{equation*}
	The constants $\phi_1$ and $\phi_2$ are implicitly assumed to only depend on a slowly growing function of $D$, compare \cite[Remark 4]{Maggioni2016}.
\end{definition}

\paragraph{} Let us now discuss the construction of suitable partitions $\{C_{jk}\}$ by making use of cover trees. A \emph{cover tree} $T$ on a finite set of samples $S \subset \M$ is a hierarchy of levels with the starting level containing the root point and the last level containing every point in $S$. To every level a set of nodes is assigned which is associated with a subset of points in $S$. To be precise, given a set $S$ of $n$ distinct points in some metric space $(\mathbb X,d_{\mathbb X})$. A cover tree $T$ on $S$ is a sequence of subsets $T_i\subset S, i=0,1,\dots$ that satisfies the following, see \cite{Beygelzimer2006}:
\begin{enumerate}
	\item[(i)]\textbf{Nesting:} $T_i\subseteq T_{i+1}$, i.e., once a point appears in $T_i$ it is in every $T_j$ for $j\geq i$.
	\item[(ii)] \textbf{Covering:} For every $\x\in T_{i+1}$ there exists exactly one $\y\in T_{i}$ such that $d_{\mathbb X}(\x,\y)\leq2^{-i}$.  Here $\y$ is called the {\it parent} of $\x$.
	\item[(iii)] \textbf{Separation:} For all distinct points $\x,\y\in T_i$, $d_{\mathbb X}(\x,\y)>2^{-i}$.
\end{enumerate}
The set $T_i$ denotes the set of points in $S$ associated with nodes at level $i$. Note that there exists $N\in\mathbb N$ such that $T_i=S$ for all $i\geq N$.  Herein we will presume that $S$ is large enough to contain an $\epsilon$-cover of $\M$ for $\epsilon > 0$ sufficiently small.\\

Moreover, the axioms characterizing cover trees are strongly connected to the dyadic structure of GMRA. For a given cover tree (for construction see \cite{Beygelzimer2006}) on a set $\mathcal X_n=\{X_1,\dots X_n\}$ of i.i.d.\ samples from the distribution $\Pi$ with respect to the Euclidean distance let $\aj_{j,k}$ for $k=1,\dots,K_j$ be the elements of the $j$th level of the cover tree, i.e.\ $T_j=\{\aj_{j,k}\}_{k=1}^{K_j}$ and define
\begin{equation*}
\kappa_j(x)=\argmin_{1\leq k\leq K_j} \|\x-\aj_{j,k}\|_2.
\end{equation*}
With this a partition of $\mathbb R^D$ into \emph{Voronoi regions} \begin{equation}\label{Eq:Voronoi}
C_{j,k}=\{ \x\in\mathbb R^D\colon \kappa_j(\x)=k\}
\end{equation}
can be defined. Maggioni et.\ al.\ showed in \cite[Theorem 7]{Maggioni2016} that by this construction all assumptions \ref{A1}-\ref{A4} can be fulfilled.


The question arises if the properties of the axiomatic definition of GMRA in Definition \ref{GMRA} are equally met. As only parts of the axioms are relevant for our analysis, we refrain from giving rigorous justification for all properties.\\
\begin{enumerate}
\item GMRA property \eqref{first} holds by construction if the matrices $\Phi_{j,k}$ are defined, s.t. $\Phi_{j,k}^T \Phi_{j,k} = \P_{V_{j,k}}$ along with any reasonable choice of centers ${\bf c}_{j,k}$.
\item The dyadic structure axioms \eqref{GMRA2a} -- \eqref{GMRA2c} also hold as a trivial consequence of the cover tree properties $(i)$ -- $(iii)$ above if the axiomatic centers ${\bf c}_{j,k}$ are chosen to be the elements of the cover tree set $T_j$ (i.e., the $\aj_{j,k}$ elements).
By the $(\rho,\sigma)$-model assumption samples drawn from $\Pi$ will have a quite uniform distribution all over $\supp(\Pi)$. Hence, the probabilistic centers ${\bf c}'_{j,k}$ of each $C_{j,k}$-set will also tend to be close to the axiomatic centers ${\bf c}_{j,k} = \aj_{j,k}$ proposed here for small $\sigma$ (see, e.g., assumption \ref{A2} above).
\item One can deduce GMRA property \eqref{tube} from the fact that our chosen centers $\aj_{j,k}$ belong to $\M$ if $\supp(\Pi) = \M$ (or to a small tube around $\M$ if $\sigma$ is small).
\item The first part of \eqref{3b} is implied by \ref{A4} with the uniform constant $\theta_5$ for all $\x \in \M$ if $\aj_{j,k}$ is sufficiently close to ${\bf c}'_{j,k}$. To show the second part of \eqref{3b} note that
\begin{align*}
\|\x-\P_{j,k'}(\x)\|_2 & \leq\| \x-{\bf c}_{j,k'}\|_2 + \|{\bf c}_{j,k'}-\P_{j,k'}(\x)\|_2 = \| \x - {\bf c}_{j,k'}\|_2 + \|\P_{V_{j,k'}}( \x- {\bf c}_{j,k'})\|_2 \\
&\leq2\| \x- {\bf c}_{j,k'}\|_2 \leq 32\max\{\| \x - {\bf c}_{j,k_j(\x)}\|_2,C_1 2^{-j-1}\} \\
&\leq 32\max\{C_\epsilon 2^{-j},C_1 2^{-j-1}\} \le C \cdot 2^{-j}
\end{align*}
where in the second last step we used our cover tree properties (recall that ${\bf c}_{j,k} = \aj_{j,k}$). Again, the constants $C, C_\epsilon > 0$ do not depend on the chosen $x \in \M$ as long as $S$ is well chosen (e.g., contains a sufficiently fine cover of $\M$).
\end{enumerate}

Considering the GMRA axioms above we can now see that only the first part of \eqref{3b} may not hold in a satisfactory manner if we choose to set $\Phi_{j,k}^T \Phi_{j,k} = \P_{V_{j,k}}$ and ${\bf c}_{j,k} = \aj_{j,k}$.  And, even when it doesn't hold with $C_z$ being independent of $j$ it will still at least still hold with a worse $j$ dependence due to assumption \ref{A2}.

\subsection{Empirical GMRA}

The axiomatic properties only hold above, of course, if the GMRA is constructed with knowledge of the true $\P_{V_{j,k}}$-subspaces. In reality, however, this won't be the case and we are rather given some training data consisting of $n$ samples from near/on $\M$, $\mathcal{X}_n = \{ X_1,...,X_n \}$, which we assume to be i.i.d. with distribution $\Pi$. These samples are used to approximate the real GMRA subspaces based on $\Pi$ such that the operators $\P_j$ can be replaced by their estimators
\begin{align*}
\Ph_j(\x)&=\sum_{k=1}^{K_j}\eins_{\{\x\in C_{j,k}\}}\Ph_{j,k}(\x)\\
\intertext{where $\{ C_{j,k}\}_{k=1}^{K_j}$ is a suitable partition of $\mathbb R^d$ obtained from the data,}
\Ph_{j,k}(\x) &= \ch_{j,k}+\P_{\Vh_{j,k}}(\x-\ch_{j,k}),\\
\ch_{j,k} &= \frac{1}{|\X_{k,j}|}\sum_{\x\in \X_{j,k}}\x, \\
\Vh_{j,k} &= \argmin_{\dim(V)=d}\frac{1}{|\X_{j,k}|}\sum_{\x\in \X_{j,k}}\|\x-\ch_{j,k}-\P_{V}(\x-\ch_{j,k})\|_2^2
\end{align*}
and $\X_{j,k} = C_{j,k} \cap \X_n$.
In other words, working with above model we have one perfect GMRA that cannot be computed (unless $\Pi$ is known) but fulfills all important axiomatic properties, and an estimated GMRA that is at hand but that is only an approximation to the perfect one. Thankfully, the main results of \cite{Maggioni2016} stated in Appendix \ref{EmpiricalGMRAproof} give error bounds on the difference between perfect and estimated GMRA with ${\bf c}_{j,k} = \ch_{j,k} \approx {\bf c}'_{j,k} \approx \aj_{j,k}$ that only depend on the number of samples from $\Pi$ one can acquire.  Following their notational convention we will denote the empirical GMRA approximation at level $j$, i.e., the set $\Ph_j$ projects onto, by $\Mh_j = \{ \Ph_j(\z) \colon \z \in \B(\0,2) \} \cap \B(\0,2)$ and the affine subspaces by $\widehat{P}_{j,k} = \{ \Ph_{j,k}(\z) \colon \z \in \R^D \}$. We again restrict the approximation to $\B(\0,2)$. The single affine spaces will be non-empty as all $\ch_{j,k}$ lie by definition close to $\B(\0,1)$ if $\supp(\Pi)$ is close to $\M$, which we assume. 

\paragraph{} In the empirical setting \nameref{algorithm2} has to be slightly modified to conform to our empirical GMRA notation. Hence, \eqref{MinCnew} and \eqref{MinXnew} become
\begin{align}
&\phantom{xx}\ch_{j,k'} \in \argmin_{\ch_{j,k} \in \Ch_j} \; d_H( \sign(A\ch_{j,k}),\y ). \label{MinCh}\\
&\begin{cases} \label{MinXh}
\x^\ast = \argmin_{\z \in \R^D} \sum_{l=1}^{m} (-y_l) \langle \aj_l, \z \rangle , \\
\text{subject to } \z \in \conv \left( \P_\St (\widehat{P}_{j,k'} \cap \B(\0,2)) \right).
\end{cases}
\end{align}
\nameref{algorithm2} can be adapted in a similar way by changing \eqref{MinCnew} and \eqref{MinXnew}. To stay consistent with the axiomatic notation we denote the sets containing the centers ${\bf c}'_{j,k}$ and $\ch_{j,k}$ by $\C'_j$ and $\Ch_j$ respectively. As shown in Appendix \ref{EmpiricalGMRAproof} the main result also holds in this setting. There is only an additional influence of sample size on the probability.

\section{Proof of \Cref{GeneralApproximation} with Empirical GMRA}
\label{EmpiricalGMRAproof}

\paragraph{} Recall the definitions of probabilistic GMRA, empirical GMRA and the modifications of \eqref{MinCnew} resp. \eqref{MinXnew} to become \eqref{MinCh} resp. \eqref{MinXh}. We start with the central result of \cite{Maggioni2016}. 

\begin{theorem}[{\cite[Theorem 2]{Maggioni2016}}] \label{Theorem2}
	Suppose that assumptions \ref{A1}-\ref{A3}
	are satisfied (see Table \ref{assumptions}). Let $X,X_1,\dots X_n$ be an i.i.d.\ sample from $\Pi$ and set $\bar d=4d^2\frac{\theta_2^4}{\theta_3^2}$. Then for any $j_{\min}\leq j\leq j_{\max}$ and any $t\geq 1$ such that $t+\log(\max\{\bar d,8\})\leq\frac 1 2\theta_1n 2^{-jd}$,
	\begin{equation*}
	\E{\|X-\Ph_j(X)\|_2^2}\leq 2\theta_4\left( \sigma^2+2^{-2j(1+\alpha)}\right)+c_1 2^{-2j}\frac{(t+\log(\max\{\bar d,8\}))d^2}{n2^{-jd}},
	\end{equation*}
	and if in addition \ref{A4} is satisfied,
	\begin{equation*}
	\left\| \id -\Ph_j\right\|_{\infty,\Pi}\leq\theta_5\left(\sigma+2^{-(1+\alpha)j}\right)+\sqrt{ \frac{c_1}{2} 2^{-2j}\frac{(t+\log(\max\{\bar d,8\}))d^2}{n2^{-jd}}}
	\end{equation*}
	with probability $\geq 1-\frac{2^{jd+1}}{\theta_1}\left(e^{-t}+e^{-\frac{\theta_1}{16}n2^{-jd}}\right)$, where $c_1=2\left(12\sqrt{2}\frac{\theta_2^3}{\theta_3\sqrt{\theta_1}}+4\sqrt{2}\frac{\theta_2}{d\sqrt{\theta_1}}\right)^2$.
\end{theorem}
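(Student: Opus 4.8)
Since the statement is quoted verbatim from \cite[Theorem 2]{Maggioni2016}, the paper itself only needs to cite it; the plan for reconstructing the argument, however, would be to analyze each partition cell $C_{j,k}$ in isolation, control the empirical center and covariance inside it, convert the covariance error into a subspace error, and finally aggregate over cells. The number of cells is already bounded by assumption \ref{A1}: since the $\Pi(C_{j,k})$ sum to at most $1$ while each is at least $\theta_1 2^{-jd}$, one has $K_j \le 2^{jd}/\theta_1$, which is exactly the source of the prefactor $2^{jd+1}/\theta_1$ in the claimed probability.

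First I would fix a scale $j$ and write $N_{j,k} := |\X_{j,k}|$ for the number of samples landing in $C_{j,k}$. Because the $X_i$ are i.i.d.\ and $\Pi(C_{j,k}) \ge \theta_1 2^{-jd}$ by \ref{A1}, a binomial Chernoff bound gives $N_{j,k} \gtrsim \theta_1 n 2^{-jd}$ simultaneously for all $k$ except on an event of probability of order $e^{-\frac{\theta_1}{16} n 2^{-jd}}$ per cell; this produces the second exponential term in the probability bound. On the complementary event every cell holds enough samples for estimation to be meaningful, and the condition $t+\log(\max\{\bar d,8\})\le \frac12\theta_1 n 2^{-jd}$ is precisely what guarantees this.

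Next, conditionally on the points in a cell, assumption \ref{A2} forces $\|X-{\bf c}'_{j,k}\|_2 \le \theta_2 2^{-j}$ almost surely, so $\ch_{j,k}$ is an average of uniformly bounded vectors and $\widehat{\Sigma}_{j,k}$ an average of bounded rank-one matrices. A vector Bernstein bound then makes $\|\ch_{j,k}-{\bf c}'_{j,k}\|_2$ small, and a matrix Bernstein inequality (see, e.g., \cite{Vershynin2017}) yields $\|\widehat{\Sigma}_{j,k}-\Sigma_{j,k}\| \lesssim 2^{-2j}\sqrt{(t+\log\max\{\bar d,8\})\,d^2 / N_{j,k}}$ up to failure probability of order $e^{-t}$, which is the origin of the estimation summand $c_1 2^{-2j}\frac{(t+\log\max\{\bar d,8\})d^2}{n2^{-jd}}$. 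The crucial structural input here is \ref{A3}: the bound $\sum_{l>d}\lambda_l^{j,k}\le \frac12\lambda_d^{j,k}$ forces a spectral gap $\lambda_d^{j,k}-\lambda_{d+1}^{j,k} \ge \frac12\lambda_d^{j,k}\ge \frac{\theta_3}{2d}2^{-2j}$, so a Davis--Kahan ($\sin\Theta$) argument turns the covariance error into a bound on $\|\P_{\Vh_{j,k}}-\P_{V_{j,k}}\|$ scaling like the covariance error divided by that gap.

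Finally I would bound the per-cell error $\|X-\Ph_{j,k}(X)\|_2$ by the population error $\|X-\P_{j,k}(X)\|_2$, a mean term $\|\ch_{j,k}-{\bf c}'_{j,k}\|_2$, and a subspace term $\|\P_{\Vh_{j,k}}-\P_{V_{j,k}}\|\cdot\|X-{\bf c}'_{j,k}\|_2$, the last again controlled via \ref{A2}; integrating against $\Pi_{j,k}$ and recombining through $\E{\|X-\Ph_j(X)\|_2^2}=\sum_k \Pi(C_{j,k})\,\Ejk{j,k}{\|X-\Ph_{j,k}(X)\|_2^2}$ yields the population term $2\theta_4(\sigma^2+2^{-2j(1+\alpha)})$ from \ref{A3}. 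The sup-norm bound follows the same route but additionally invokes \ref{A4} and a union bound over the $K_j \le 2^{jd}/\theta_1$ cells. The hard part will be the subspace-perturbation step: one must track how the covariance estimation error interacts with the eigengap of \ref{A3} and with the residual tail variance $\sigma^2+2^{-2(1+\alpha)j}$ coming from tube noise and curvature, so that the per-cell estimate is clean enough to survive the union bound over exponentially many cells without eroding the claimed probability. Everything else reduces to concentration of bounded random variables.
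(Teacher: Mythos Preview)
You correctly identify that the paper does not prove this theorem at all: it is quoted verbatim from \cite[Theorem 2]{Maggioni2016} and used as a black box, so there is no ``paper's own proof'' to compare against beyond the citation itself. Your reconstruction sketch---Chernoff for the cell counts $N_{j,k}$, vector/matrix Bernstein under the boundedness from \ref{A2}, Davis--Kahan through the eigengap forced by \ref{A3}, and a union bound over the at most $2^{jd}/\theta_1$ cells---is the standard route and matches the architecture of the argument in \cite{Maggioni2016}; in particular, the paper's subsequent Corollary explicitly points to equations (20) and (21) of \cite{Maggioni2016}, which are precisely the per-cell projector and center deviation bounds you isolate as the key intermediate estimates.
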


\paragraph{} Theorem \ref{Theorem2} states that under assumptions \ref{A1}-\ref{A4} the empirical GMRA approximates $\M$ as well as the perfect probabilistic one as long as the number of samples $n$ is sufficiently large. For the proof of our main theorem we only need the following two bounds which can be deduced from (20) and (21) in \cite{Maggioni2016} by setting $t = 2^{jd}$. As both appear in the proof of Theorem \ref{Theorem2}, we state them as a corollary. The interested reader may note that $n_{j,k}$ appearing in the original statements can be lower bounded by $\theta_1 n 2^{-jd}$.
\begin{corollary} \label{ImportantBounds}
	Under the assumptions of Theorem \ref{Theorem2} the following holds for any $C_1 > 0$ as long as $j, \alpha$ are sufficiently large and $\sigma$ is sufficiently small:
	\begin{align*}
	\Pr{\max_{k \in K_j} \left\| \P_{V_{j,k}}-\P_{\Vh_{j,k}}\right\| \ge \frac{C_1}{12} 2^{-j-2}}{} &\le \frac{2}{\theta_2} 2^{jd} e^{-2^{jd} \min\left\{ 1, \frac{32 \theta_2^2 d^2}{C_1^2} \right\}}\\
	\Pr{\max_{k \in K_j} \left\| {\bf c}'_{j,k}-\ch_{j,k} \right\|_2 \ge \frac{C_1}{12} 2^{-j-2}}{} &\le \frac{2}{\theta_2} 2^{jd} e^{-2^{jd} \min\left\{ 1, \frac{32 \theta_2^2 d^2}{C_1^2} \right\}}
	\end{align*}
	if $n \ge n_\text{min} = \left( 2^{jd} + \log (\max\{\bar d,8\}) \right) \min \left\{ 144 \frac{\theta_2^2 d}{C_1 \theta_1 \theta_3} 2^{(d+1)j + 3}, 96 \frac{\theta_2}{C_1 \theta_1} 2^{dj+1} \right\}^2$.
\end{corollary}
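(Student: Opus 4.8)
The plan is to read the two bounds off the intermediate estimates (20) and (21) that already appear inside the proof of \Cref{Theorem2} in \cite{Maggioni2016}, rather than trying to extract them from that theorem's final (averaged) statement. Those intermediate estimates control, for a \emph{single} fixed cell index $k$, the center error $\|{\bf c}'_{j,k} - \ch_{j,k}\|_2$ and the subspace error $\|\P_{V_{j,k}} - \P_{\Vh_{j,k}}\|$ separately, each in the schematic form $\text{(bias)} + \text{(sampling term)}$. The bias contribution is governed by $\sigma$ and $2^{-(1+\alpha)j}$ through \ref{A3}--\ref{A4}, while the sampling term decays like a multiple of $2^{-j}\sqrt{(t+\log(\max\{\bar d,8\}))\,d^2/n_{j,k}}$ in the confidence parameter $t\ge 1$ and the per-cell sample count $n_{j,k}$. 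The subspace estimate is the more delicate ingredient: it is obtained by combining a matrix-concentration (Bernstein-type) bound for the empirical covariance $\widehat{\Sigma}_{j,k}$ around $\Sigma_{j,k}$ with a Davis--Kahan-type $\sin\Theta$ perturbation argument, where the spectral gap converting covariance error into subspace error is furnished by \ref{A3}, namely $\lambda_d^{j,k} - \lambda_{d+1}^{j,k} \ge \tfrac12\lambda_d^{j,k} \ge \tfrac{\theta_3}{2}\tfrac{2^{-2j}}{d}$.

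Next I would fix the cell $k$ and invoke (20)/(21) with $t = 2^{jd}$, so that the per-cell confidence failure carries the factor $e^{-2^{jd}}$. To eliminate the random quantity $n_{j,k}$ from the sampling term I would substitute the deterministic lower bound $n_{j,k} \ge \theta_1 n 2^{-jd}$, which holds on the high-probability event (guaranteed by \ref{A1}) that each cell receives its expected share of mass and whose complement contributes the second exponential $e^{-\theta_1 n 2^{-jd}/16}$. Solving for the sample complexity, the requirement $n \ge n_\text{min}$ is then exactly what forces the sampling term below the common threshold $\tfrac{C_1}{12}2^{-j-2}$; the two arguments of the $\min$ defining $n_\text{min}$ originate from the two distinct estimates, the eigengap-dependent subspace bound carrying the factor $\theta_3$ and the center bound not. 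Simultaneously, taking $j$ and $\alpha$ large enough and $\sigma$ small enough drives the bias contributions below the same threshold, which is precisely where the qualitative hypotheses of the corollary are used.

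Finally I would take a union bound over the cells. By \ref{A1}, since the $C_{j,k}$ partition $\R^D$ and each carries mass at least $\theta_1 2^{-jd}$, there are at most $K_j \le \theta_1^{-1}2^{jd}$ of them. Multiplying the per-cell failure probability by $K_j$ and collecting terms yields a total failure probability of the stated form $\tfrac{2}{\theta_2}2^{jd}\,e^{-2^{jd}\min\{1,\,32\theta_2^2 d^2/C_1^2\}}$, and the argument is identical for the center bound, so both claims follow from the same substitution.

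The main obstacle is bookkeeping rather than conceptual: one must track the absolute constants through (20) and (21) so that the resulting threshold comes out as exactly $\tfrac{C_1}{12}2^{-j-2}$ and the exponent acquires the factor $\min\{1,\,32\theta_2^2 d^2/C_1^2\}$. This $\min$ reflects the two competing failure mechanisms, the confidence term $e^{-t}=e^{-2^{jd}}$ (giving the ``$1$'') and the insufficient-sample event $e^{-\theta_1 n 2^{-jd}/16}$ evaluated at $n=n_\text{min}$ (giving the $C_1$-dependent alternative, which becomes the binding one when $C_1$ is large and the threshold is correspondingly lenient). Since both ingredients are already present explicitly inside the proof of \cite[Theorem~2]{Maggioni2016}, no new probabilistic input is required beyond this careful substitution and union bound.
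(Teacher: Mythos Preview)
Your proposal is correct and follows essentially the same approach as the paper: the paper also states that the corollary is deduced from (20) and (21) in \cite{Maggioni2016} by setting $t = 2^{jd}$ and lower bounding $n_{j,k}$ by $\theta_1 n 2^{-jd}$. You actually supply more detail than the paper does (the Davis--Kahan/Bernstein reasoning behind (20)/(21) and the union-bound bookkeeping), but the core substitution is identical.
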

\begin{remark}
	By Corollary \ref{ImportantBounds} with probability of at least $1 - \mathcal{O}(2^{jd} \exp (-2^{jd}))$ the empirical centers $\ch_{j,k}$ of one level $j$ have a worst case distance to the perfect centers ${\bf c}'_{j,k}$ of at most $\mathcal{O}(2^{-j-2})$ if $n \gtrsim \mathcal{O}(2^{3jd})$.  As a result, the empirical centers $\ch_{j,k}$ will also be at most $\mathcal{O}(2^{-j-2})$ distance from their associated cover tree centers $\aj_{j,k}$ if $n \gtrsim \mathcal{O}(2^{3jd})$ by assumption \ref{A2}.  The same holds true for the projectors $\P_{V_{j,k}}$ and $\P_{\Vh_{j,k}}$ in operator norm.
\end{remark}

\paragraph{} The proof of Theorem \ref{GeneralApproximation} in this setting follows the same steps as in the axiomatic one. First, we give an empirical version of Lemma \ref{cjk'Bound}. Then we link $x$ and $x^\ast$ as described in Section \ref{ProofAxiomatic} while controlling the difference between empirical and axiomatic but unknown GMRA by Corollary \ref{ImportantBounds}. The following extension of Lemma \ref{BoundOfGaussianWidth} will be regularly used.

\begin{corollary}[Bound of Gaussian width] \label{BoundOfGaussianWidth2}
	The Gaussian width of $\M\cup\M_j\cup\widehat\M_j$ can be bounded from above by
	\begin{align*}
		\max\{ w(\M),w(\M_j),w(\widehat{\M}_j) \} &\le  w(\M\cup\M_j\cup\widehat\M_j) \leq 2w(\M) + 2w(\M_j) + 2w(\widehat \M_j) + 5 \\ 
		&\le 2w(\M)+ C\sqrt{dj}
	\end{align*} 
	where $\widehat M_j$ is defined as at the end of Appendix \ref{EmpiricalGMRA}.
\end{corollary}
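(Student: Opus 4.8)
The plan is to extend the three-step argument in the proof of \Cref{BoundOfGaussianWidth} from two sets to three, the only genuinely new ingredient being a width bound for the empirical approximation $\widehat{\M}_j$. The first (lower) inequality is immediate: for every realization of $\g$ the supremum of $\langle \vv,\g\rangle$ over $\M\cup\M_j\cup\widehat{\M}_j$ dominates the supremum over any single one of the three sets, so taking expectations yields $\max\{w(\M),w(\M_j),w(\widehat{\M}_j)\}\le w(\M\cup\M_j\cup\widehat{\M}_j)$. This step uses nothing about the specific sets involved.

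For the second inequality I would mimic \eqref{equ:GenWidthUnionBound}: bound $w$ by the Gaussian complexity $\gamma$, split the supremum of $|\langle \vv,\g\rangle|$ over the union into a sum of three suprema, and recognize the result as $\gamma(\M)+\gamma(\M_j)+\gamma(\widehat{\M}_j)$. Applying the right-hand estimate in \eqref{WidthInequality} to each summand produces $2w(\M)+2w(\M_j)+2w(\widehat{\M}_j)$ plus the additive term $\sqrt{2/\pi}\,\bigl(\dist(\0,\M)+\dist(\0,\M_j)+\dist(\0,\widehat{\M}_j)\bigr)$. Since $\M\subset\St^{D-1}$ gives $\dist(\0,\M)=1$, while $\M_j,\widehat{\M}_j\subset\B(\0,2)$ force $\dist(\0,\M_j),\dist(\0,\widehat{\M}_j)\le 2$, this additive term is at most $5\sqrt{2/\pi}<5$, which is exactly the claimed bound $2w(\M)+2w(\M_j)+2w(\widehat{\M}_j)+5$.

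The final inequality requires the width estimates $w(\M_j),w(\widehat{\M}_j)\le C'\sqrt{dj}$. The bound on $w(\M_j)$ is furnished directly by \Cref{BoundOfGaussianWidth}, and for $w(\widehat{\M}_j)$ I would reuse the Dudley argument from that proof verbatim. By construction one has $\widehat{\M}_j\subseteq\bigcup_{k\in[K_j]}\widehat{P}_{j,k}\cap\B(\0,2)$, a union of at most $K_j$ affine $d$-dimensional patches, each contained in a $d$-dimensional Euclidean ball of radius $2$; hence $\mathcal{N}(\widehat{\M}_j,\eps)\le K_j(6/\eps)^d$, and integrating $\sqrt{\log\mathcal{N}(\widehat{\M}_j,\eps)}$ over $[0,2]$ as in \Cref{BoundOfGaussianWidth} gives $w(\widehat{\M}_j)\le C'\sqrt{dj}$. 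Substituting both width bounds and absorbing the constant $5$ into $C\sqrt{dj}$, which is legitimate because $\sqrt{dj}\ge 1$, closes the chain.

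The only point needing care, and the main obstacle such as it is, is justifying that $\widehat{\M}_j$ genuinely is a union of at most $K_j\le C_\C 2^{dj}$ affine $d$-dimensional subspaces restricted to $\B(\0,2)$, since the covering-number estimate rests on precisely this structure. This follows because the empirical partition $\{C_{j,k}\}_{k=1}^{K_j}$ inherits its cardinality $K_j$ from the underlying cover tree (cf.\ \Cref{EmpiricalGMRA}), and each $\widehat{P}_{j,k}$ is a bona fide $d$-dimensional affine space by the definition of $\Ph_{j,k}$. With this observation the empirical case is structurally indistinguishable from $\M_j$ and the covering estimate transfers without change.
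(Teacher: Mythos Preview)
Your proposal is correct and follows exactly the approach the paper intends: the paper's own proof simply states that ``the proof follows directly the lines of the proof of Lemma~\ref{BoundOfGaussianWidth}'' and that ``the additional term $w(\widehat{\M}_j)$ can be bounded in the same way as $w(\M_j)$,'' which is precisely the argument you have written out in detail. Your handling of the additive constant (summing the three distance terms to get $5\sqrt{2/\pi}<5$) and your observation that $\widehat{\M}_j$ shares the same structural form as $\M_j$ so the Dudley/covering argument carries over verbatim are exactly what is needed.
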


\begin{proof}
	The proof follows directly the lines of the proof of Lemma \ref{BoundOfGaussianWidth}. The additional term $w(\widehat \M_j)$ can be bounded in the same way as $w(\M_j)$. 
\end{proof}
\begin{remark}
	By structure of the proof one can easily obtain several subversions of the inequalities, e.g., $w(\M \cup \Mh_j) \le 2w(\M) + 2w(\Mh_j) + 5$. We will use them while only referring to Corollary \ref{BoundOfGaussianWidth2}. Moreover, similar generalizations as in Lemma \ref{BoundOfGaussianWidth} apply (cf.\ Remark \ref{rem:BoundOfGaussianWidth})
\end{remark}

Note that we are now setting our empirical GMRA centers ${\bf c}_{j,k}$ to be the associated mean estimates $\ch_{j,k}$ as a means of approximating the axiomatic GMRA structure we would have if we had instead chosen our centers to be the true expectations ${\bf c}'_{j,k}$ (recall Appendix \ref{EmpiricalGMRA}).  We also implicitly assume below that there exists a constant $C_1 > 0$ for which the associated axiomatic GMRA properties in Section \ref{ProblemFormulation} hold when the centers $\cj_{j,k}$ are chosen as these true expectations ${\bf c}'_{j,k}$ and the $\Phi_{j,k}^T \Phi_{j,k}$ as $\P_{V_{j,k}}$.

\begin{lemma}\label{chjk'Bound}
	Fix $j$ sufficiently large.  Under the assumptions of Theorem \ref{Theorem2} and $n \ge n_\text{min}$ if $m \ge \bar{C} C_1^{-6} 2^{6(j+1)} w(\M \cup \P_\St(\Ch_j))^2$ the index $k'$ of the center $\ch_{j,k'}$ chosen in step $I$ of the algorithm fulfills
	\begin{align*}
	\| \x - {\bf c}'_{j,k'} \|_2 \le 16 \max \{ \| \x - {\bf c}'_{j,k_j(\x)} \|_2, C_1 2^{-j-1} \}.
	\end{align*}
	for all $\x \in \M \subset \St^{D-1}$ with probability at least $1 - \mathcal{O}\left( 2^{jd} \exp (-2^{jd}) + \exp(\delta^2 m) \right)$.
\end{lemma}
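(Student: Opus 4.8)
The plan is to follow the proof of \Cref{cjk'Bound} almost verbatim, while carefully tracking the discrepancy between the empirical centers $\ch_{j,k}$ (over which step $I$, i.e.\ \eqref{MinCh}, minimizes) and the probabilistic centers ${\bf c}'_{j,k}$ (which appear in the conclusion). Set $\delta := C_1 2^{-j-1}$ and let $k_j(\x) \in \argmin_k \| \x - {\bf c}'_{j,k} \|_2$ denote the index of a \emph{probabilistic} center nearest to $\x$; defining $k_j(\x)$ with respect to the ${\bf c}'_{j,k}$ is what makes the comparison term match the quantity ${\bf c}'_{j,k_j(\x)}$ in the statement, and it is legitimate because $k'$ minimizes the Hamming distance over all indices. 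By minimality of $k'$ we have $d_H(\sign(A\ch_{j,k'}),\y) \le d_H(\sign(A\ch_{j,k_j(\x)}),\y)$, which after normalizing and projecting onto $\St^{D-1}$ reads $d_A(\P_\St(\ch_{j,k'}),\x) \le d_A(\P_\St(\ch_{j,k_j(\x)}),\x)$. Applying \Cref{RUT} to the set $K = \M \cup \P_\St(\Ch_j)$ — whose required measurement count $m \ge \bar{C}\delta^{-6} w(K)^2$ is exactly the hypothesis once one notes $\delta^{-6} = C_1^{-6} 2^{6(j+1)}$, and whose Gaussian width is controlled via \Cref{BoundOfGaussianWidth2} — transfers this inequality to the normalized geodesic distance with additive error $2\delta$, and \Cref{Lem:NormalizedGeo} then yields
$$\| \P_\St(\ch_{j,k'}) - \x \|_2 \le \pi \| \P_\St(\ch_{j,k_j(\x)}) - \x \|_2 + 2\pi\delta$$
with probability at least $1 - 2\exp(-c\delta^2 m)$.

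The genuinely new ingredient is passing from the projected empirical centers $\P_\St(\ch_{j,k})$ back to the probabilistic centers ${\bf c}'_{j,k}$. I would condition on the event of \Cref{ImportantBounds}, which holds for $n \ge n_\text{min}$ with probability $1 - \mathcal{O}(2^{jd}\exp(-2^{jd}))$ and on which $\| \ch_{j,k} - {\bf c}'_{j,k} \|_2 \le \tfrac{C_1}{12} 2^{-j-2}$ for every $k$. Since property \eqref{tube} (valid for the probabilistic centers by the standing assumption) gives ${\bf c}'_{j,k} \in \tube_{C_1 2^{-j-2}}(\M) \subset \tube_{C_1 2^{-j-2}}(\St^{D-1})$, the empirical centers lie within distance $\tfrac{13}{12} C_1 2^{-j-2}$ of the sphere, whence a triangle inequality bounds both $\| \P_\St(\ch_{j,k'}) - {\bf c}'_{j,k'} \|_2$ and $\| \P_\St(\ch_{j,k_j(\x)}) - {\bf c}'_{j,k_j(\x)} \|_2$ by a fixed multiple of $C_1 2^{-j-2}$. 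Substituting these estimates into the displayed geodesic bound and collecting all $C_1 2^{-j}$ terms gives
$$\| \x - {\bf c}'_{j,k'} \|_2 \le \pi \| \x - {\bf c}'_{j,k_j(\x)} \|_2 + (\text{const}) \cdot C_1 2^{-j}.$$
Because $C_1 2^{-j} = 2 \cdot C_1 2^{-j-1}$ and the accumulated numerical constant remains below $16$ (the factor $\tfrac{1}{12}$ built into \Cref{ImportantBounds} is calibrated precisely so that the perturbation terms do not overrun the target constant), the right-hand side is at most $16\max\{ \| \x - {\bf c}'_{j,k_j(\x)} \|_2, C_1 2^{-j-1} \}$. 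A union bound over the tessellation event and the event of \Cref{ImportantBounds} produces the claimed failure probability $\mathcal{O}(2^{jd}\exp(-2^{jd}) + \exp(-c\delta^2 m))$.

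I expect the main difficulty to be bookkeeping rather than conceptual: one must verify that \emph{every} perturbation introduced by replacing ${\bf c}'_{j,k}$ with $\ch_{j,k}$ — both in the selection step and in the projection onto $\St^{D-1}$ — is genuinely of order $C_1 2^{-j}$, and that the resulting constants still fit under $16$. A secondary point requiring attention is ensuring that the width hypothesis on $m$ is stated in terms of $\P_\St(\Ch_j)$ (the empirical projected centers) consistently with the set $K$ to which \Cref{RUT} is applied, so that \Cref{BoundOfGaussianWidth2} furnishes the needed width control; the remaining probabilistic steps then carry over from \Cref{cjk'Bound} unchanged.
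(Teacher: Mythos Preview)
Your proposal is correct and follows essentially the same route as the paper: the paper likewise starts from the minimality of $k'$ in the Hamming distance, applies \Cref{RUT} to $\M \cup \P_\St(\Ch_j)$ and \Cref{Lem:NormalizedGeo}, then invokes \Cref{ImportantBounds} together with property \eqref{tube} to control $\|\P_\St(\ch_{j,k}) - {\bf c}'_{j,k}\|_2$ on both sides, and finishes with the same triangle-inequality bookkeeping to reach a constant $4\pi + 1 < 16$. The only cosmetic difference is that the paper does not invoke \Cref{BoundOfGaussianWidth2} inside this lemma (the width hypothesis on $m$ is taken as given here; the corollary is used later in \Cref{ApproximationTheorem2}), so you can drop that reference from the argument.
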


\begin{proof}
	The proof will be similar to the one of Lemma \ref{cjk'Bound}. By definition we have
	\begin{align*}
	d_H ( \sign(A\ch_{j,k'}),\y ) \le d_H ( \sign(A\ch_{j,k_j(\x)}),\y ).
	\end{align*}
	As, for all $\z,\z' \in \St^{D-1}$, $d_H(\sign(A \z),\sign(A \z')) = m \cdot d_A( \z, \z')$, this is equivalent to
	\begin{align*}
	d_A ( \P_\St (\ch_{j,k'}), \x ) \le d_A ( \P_\St (\ch_{j,k_j( \x)}), \x ).
	\end{align*}
	Theorem \ref{RUT} transfers the bound to normalized geodesic distance, namely
	\begin{align*}
	d_G ( \P_\St (\ch_{j,k'}),\x ) \le d_G ( \P_\St (\ch_{j,k_j(\x)}),\x ) + 2\delta
	\end{align*}
	with probability at least $1-2\exp(-c\delta^2m)$ where $\delta = C_1 2^{-j-1}$. Observe $d_G(\z,\z') \le \| \z-\z' \|_2 \le \pi d_G(\z,\z')$ for all $\z,\z' \in \St^{D-1}$ (see Lemma \ref{Lem:NormalizedGeo}) which leads to
	\begin{align*}
	\| \P_\St (\ch_{j,k'}) - \x \|_2 &\le \pi d_G( \P_\St (\ch_{j,k_j(\x)}),\x) + 2\pi \delta \\
	&\le \pi \| \P_\St (\ch_{j,k_j(\x)}) - \x \|_2 + 2\pi \delta.
	\end{align*}
	We will now use the fact that by Corollary \ref{ImportantBounds}
\begin{align*}
	\| \ch_{j,k} - {\bf c}'_{j,k} \|_2 \le \frac{C_1}{12} 2^{-j-2}
	\end{align*}
for all $k \in K_j$ with probability at least $1 - \mathcal{O}(2^{jd} \exp (-2^{jd}))$. From this we first deduce by GMRA property \eqref{tube} that $\| \ch_{j,k} - \P_\St (\ch_{j,k}) \|_2 \le \| \ch_{j,k} - \P_\St ({\bf c}'_{j,k}) \|_2 \le \| \ch_{j,k} - {\bf c}'_{j,k} \|_2 + \| {\bf c}'_{j,k} - \P_\St ({\bf c}'_{j,k}) \|_2 < (C_1 + C_1/2) 2^{-j-2}$ for all $\ch_{j,k} \in \Ch_j$. Combining above estimates and using triangle inequality we obtain
	\begin{align*}
	\| {\bf c}_{j,k'} - \x \|_2 &\le \| \P_\St (\ch_{j,k'}) - \x \|_2 + \| \ch_{j,k'} - \P_\St (\ch_{j,k'}) \|_2 + \| \ch_{j,k'} - {\bf c}'_{j,k'} \|_2 \\
	&< \pi \| \P_\St (\ch_{j,k_j(x)}) - \x \|_2 + 2\pi \delta + 2C_1 2^{-j-2} \\
	&\le \pi (\| \ch_{j,k_j(\x)} - \P_\St (\ch_{j,k_j(\x)}) \|_2 + \| \ch_{j,k_j(\x)} - {\bf c}'_{j,k_j(\x)} \|_2 + \| {\bf c}'_{j,k_j(\x)} - \x \|_2) + 2\pi \delta + C_1 2^{-j-1} \\
	&< \pi \| {\bf c}_{j,k_j(\x)} - \x \|_2 + 2\pi \delta + (1+\pi)C_1 2^{-j-1} \\
	&\le (4\pi + 1) \max \{ \| c_{j,k_j(\x)} - \x \|_2, C_1 2^{-j-1} \}\\
	&\le 16 \max \{ \| c_{j,k_j(\x)} - \x \|_2, C_1 2^{-j-1} \}.
	\end{align*}
	A union bound over both probabilities yields the result.
\end{proof}

\paragraph{} Having Lemma~\ref{chjk'Bound} at hand we can now show a detailed version of Theorem \ref{GeneralApproximation} in this case. For convenience please first read the proof of  Theorem \ref{ApproximationTheorem}. As above choosing $\eps = \sqrt[2]{j} \; 2^{-j}$ yields Theorem \ref{GeneralApproximation} for \nameref{algorithm} with a slightly modified probability of success and slightly different dependencies on $C_1$ and $\tilde{C}_\x$ in \eqref{equ:GeneralApprox}.

\begin{theorem} \label{ApproximationTheorem2}
	Let $\M \subset \St^{D-1}$ be given by its empirical GMRA for some levels $j_0 \le j \le J$ from samples $X_1,...,X_n$ for $n \ge n_\text{min}$ (defined in Corollary \ref{ImportantBounds}), such that $0<C_1 < 2^{j_0 + 1}$ where $C_1$ is the constant from GMRA properties \eqref{GMRA2b} and \eqref{tube} for a GMRA structure constructed with centers ${\bf c}'_{j,k}$ and with the $\Phi_{j,k}^T \Phi_{j,k}$ as $\P_{V_{j,k}}$. Fix $j$ and assume that $\dist(\0, \Mh_j) \ge 1/2$. Further let
	\begin{align*}
	m \ge 64\max\{C',\bar{C}\} C_1^{-6} 2^{6(j+1)} (w(\M) + C\sqrt{dj})^2.
	\end{align*}
	where $C'$ is the constant from Theorem \ref{NoisyOneBit}, $\bar{C}$ from Theorem \ref{RUT} and $C$ from Lemma \ref{BoundOfGaussianWidth}. Then, with probability at least $1 - \mathcal{O}\left( 2^{jd} \exp (-2^{jd}) + \exp(\delta^2 m) \right)$ the following holds for all $\x \in \M$ with one-bit measurements $y = \sign(A \x)$ and GMRA constants $\tilde{C}_\x$ from property \eqref{3b} satisfying $\tilde{C}_\x < 2^{j-1}$: The approximations $\x^\ast$ obtained by \nameref{algorithm2} fulfill
	\begin{align*}
	\| \x - \x^\ast \|_2 &\le \left( {2}\left(\tilde{C}_\x + \frac{C_1}{8} \right) 2^{-\frac{j}{2}} + \sqrt{C_1} \sqrt[4]{\log \left( \frac{4e}{C_1} \right)}
	+ \sqrt{22\tilde{C}_\x + \frac{55}{4}C_1} \sqrt[4]{\log \left( \frac{2e}{(2\tilde{C}_\x + \frac{5}{4} C_1)} \right)} \right) \sqrt[4]{j} 2^{-\frac{j}{2}}.
	\end{align*}
\end{theorem}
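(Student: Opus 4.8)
The plan is to run the three-part argument of the proof of \Cref{ApproximationTheorem} essentially verbatim, replacing the axiomatic center $\cj_{j,k}$ and projector $\P_{j,k}$ by the empirically computed $\ch_{j,k}$ and $\Ph_{j,k}$ wherever the algorithm actually uses them, while keeping the true probabilistic objects ${\bf c}'_{j,k}$ and $\P_{V_{j,k}}$ (which by hypothesis satisfy the axioms of \Cref{GMRA} with constant $C_1$) whenever a GMRA guarantee is invoked. Relative to the axiomatic case there are exactly three new ingredients: \Cref{chjk'Bound} replaces \Cref{cjk'Bound} in the center-selection step; \Cref{ImportantBounds} is used to pass between the empirical and the true GMRA, both at the level of centers and of subspaces; and \Cref{BoundOfGaussianWidth2} replaces \Cref{BoundOfGaussianWidth} so that the width of the empirical approximation $\Mh_j$ is controlled alongside that of $\M_j$. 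The bookkeeping of the empirical perturbation is what inflates the constant in the lower bound on $m$ from $16$ to $64$ and enlarges the noise parameter.

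\textbf{Part (I).} First I would invoke \Cref{chjk'Bound}, whose measurement hypothesis $m \ge \bar{C} C_1^{-6} 2^{6(j+1)} w(\M \cup \P_\St(\Ch_j))^2$ follows from the assumed lower bound on $m$ together with \Cref{BoundOfGaussianWidth2} and the analogue of \eqref{ProjectedMeanWidth} applied to $\Ch_j$. This supplies $\| \x - {\bf c}'_{j,k'} \|_2 \le 16 \max \{ \| \x - {\bf c}'_{j,k_j(\x)} \|_2, C_1 2^{-j-1} \}$ for the center selected by \eqref{MinCh}, so that GMRA property \eqref{3b} for the \emph{true} GMRA yields $\| \x - \P_{j,k'}(\x) \|_2 \le \tilde{C}_\x 2^{-j}$. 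Since the algorithm actually projects onto $\widehat{P}_{j,k'}$, I would then bound the gap $\| \P_{j,k'}(\x) - \Ph_{j,k'}(\x) \|_2$ by splitting it (triangle inequality) into a center term $\| {\bf c}'_{j,k'} - \ch_{j,k'} \|_2$, a subspace term $\| \P_{V_{j,k'}} - \P_{\Vh_{j,k'}} \| \cdot \| \x - {\bf c}'_{j,k'} \|_2$, and the term $\| \P_{\Vh_{j,k'}} \| \cdot \| {\bf c}'_{j,k'} - \ch_{j,k'} \|_2$, each of which is $\mathcal{O}(C_1 2^{-j})$ by \Cref{ImportantBounds}. Chaining these estimates gives $\| \x - \Ph_{j,k'}(\x) \|_2 \le (\tilde{C}_\x + C_1/8) 2^{-j}$.

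\textbf{Part (II).} Mirroring the noisy one-bit argument, I would set $\tilde{\x} := \Ph_{j,k'}(\x)/\| \Ph_{j,k'}(\x) \|_2 \in \St^{D-1}$ with $\hat{\alpha} := \| \Ph_{j,k'}(\x) \|_2$ bounded in $[1/2,3/2]$ as in the axiomatic case, $\tilde{\y} := \sign(A\tilde{\x})$, $K := \conv(\P_\St(\widehat{P}_{j,k'} \cap \B(\0,2)))$, and the enlarged noise level $\tau := (2\tilde{C}_\x + \tfrac{5}{4} C_1) 2^{-j}$. Exactly as in \eqref{eq:mbound}, but now invoking \Cref{BoundOfGaussianWidth2} so that the width of $\Mh_j/\hat{\alpha}$ is also absorbed, the hypothesis of \Cref{RUT} holds with $\delta = C_1 2^{-j-1}$, and the chain $d_H(\tilde{\y},\y)/m = d_A(\tilde{\x},\x) \le d_G(\tilde{\x},\x) + \delta \le 2\| \x - \Ph_{j,k'}(\x) \|_2 + \delta \le \tau$ verifies the corruption hypothesis $d_H(\tilde{\y},\y) \le \tau m$ of \Cref{NoisyOneBit}. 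The width condition $m \ge C' \delta^{-6} w(K)^2$ follows from $w(K) = w(\P_\St(\widehat{P}_{j,k'} \cap \B(\0,2))) \le 4 w(\Mh_j) + 4$ (convex-hull invariance of the width, $\widehat{P}_{j,k'} \cap \B(\0,2) \subset \Mh_j$, and $\dist(\0,\Mh_j) \ge 1/2$) together with \Cref{BoundOfGaussianWidth2}. Applying \Cref{NoisyOneBit} then bounds $\| \tilde{\x} - \x^* \|_2^2 \le \delta \sqrt{\log(e/\delta)} + 11\tau \sqrt{\log(e/\tau)}$.

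\textbf{Part (III) and the main obstacle.} Finally I would take a union bound over the two tessellation events from Parts (I) and (II), the \Cref{NoisyOneBit} event, and the empirical-approximation event of \Cref{ImportantBounds} (the last contributing the extra $\mathcal{O}(2^{jd}\exp(-2^{jd}))$ failure probability and the requirement $n \ge n_\text{min}$), and combine $\| \x - \x^* \|_2 \le 2\| \x - \Ph_{j,k'}(\x) \|_2 + \| \tilde{\x} - \x^* \|_2$ with the estimates above. Substituting $\delta = C_1 2^{-j-1}$ and $\tau = (2\tilde{C}_\x + \tfrac{5}{4} C_1) 2^{-j}$, using $2^{-j} \le \sqrt[4]{j}\,2^{-j}$ to fold the leading deterministic summand onto the common $\sqrt[4]{j}\,2^{-j/2}$ scale, and simplifying the logarithmic factors as in \eqref{equ:PhaseIIbound} produces the claimed bound. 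I expect the main obstacle to be tracking the empirical perturbation \emph{consistently}: the quantity $\| \x - \Ph_{j,k'}(\x) \|_2$ enters both the leading term and, doubled, the parameter $\tau$, so the constants coming out of \Cref{ImportantBounds} must be propagated with care, and one must check that simultaneously enlarging the constraint set (to $\widehat{P}_{j,k'}$), the width bounds (to include $\Mh_j$), and the corruption level $\tau$ does not violate the measurement hypothesis—which is precisely why a larger constant in the bound on $m$ is required.
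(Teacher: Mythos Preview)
Your proposal is correct and follows essentially the same three-part argument as the paper, with the same three new ingredients (\Cref{chjk'Bound}, \Cref{ImportantBounds}, \Cref{BoundOfGaussianWidth2}) deployed in the same places. One small slip: the range $\hat\alpha \in [1/2,3/2]$ does not quite carry over from the axiomatic case, since $\| \x - \Ph_{j,k'}(\x) \|_2 \le (\tilde C_\x + C_1/8)2^{-j}$ can be as large as $3/4$ under the stated hypotheses, so the paper works with $\hat\alpha \in [1/4,7/4]$ (hence $1/\hat\alpha \le 4$), which is why the constant $64$ rather than $16$ is needed in the lower bound on $m$; this is purely a bookkeeping adjustment and your argument goes through unchanged once you use the wider interval.
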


\begin{proof}
	\renewcommand{\alph}{\widehat{\alpha}}
	The proof consists of the same three steps as the one of Theorem \ref{ApproximationTheorem}. First, we apply Lemma \ref{chjk'Bound} in \textbf{(I)}. By the GMRA axioms this supplies an estimate for $\| \x - \P_{j,k'}(\x) \|_2$ with high probability (recall that $\P_{j,k'}(\x)$ will be $\P_{V_{j,k'}} (\x - {\bf c}'_{j,k'}) + {\bf c}'_{j,k'}$ in this case). In \textbf{(II)} we use \textbf{(I)} to deduce a bound on $\| \x - \Ph_{j,k'}(\x) \|_2$, and then use Theorem \ref{NoisyOneBit} to bound the distance between $\Ph_{j,k'}(\x)/\| \Ph_{j,k'}(\x) \|_2$ and the minimum point $\x^*$ of
	\begin{align} \label{xbar2}
	\x^* = \argmin_\z \sum_{l = 1}^m (-y_l) \langle \av_l,\z \rangle, \quad \text{subject to } \z \in K := \conv \left( \P_\St (\widehat{P}_{j,k'} \cap \B(\0,2)) \right)
	\end{align}
	with high probability.  Taking the union bound over all events, part \textbf{(III)} then concludes with an estimate of the distance $\| \x - \x^\ast \|_2$ by combining \textbf{(I)} and \textbf{(II)}. \\
	
	\blue{\textbf{(I)}} Set $\delta = C_1 2^{-j-1}$ and recall that $C_1 2^{-j-2} < 1/2$ by assumption which implies by GMRA property \eqref{tube} that all centers in $\C'_j$ are closer to $\St^{D-1}$ than $1/2$, i.e.\ $1/2 \le \| \cj'_{j,k} \|_2 \le 3/2$. Moreover, Corollary \ref{ImportantBounds} holds with probability at least $1 - \mathcal{O}(2^{jd} \exp (-2^{jd}))$ and implies $\| \ch_{j,k} - \cj'_{j,k} \|_2 \le (C_1/12) 2^{-j-2} \le 1/4$. Hence, by triangle inequality $1/4 \le \| \ch_{j,k} \|_2 \le 7/4$. From this and \eqref{WidthInequality} we deduce
	\begin{align} \label{ProjectedMeanWidth2}
	w(\P_\St(\Ch_j)) \le \gamma(\P_\St(\Ch_j)) \le 4\gamma(\Ch_j) \le 8w(\Ch_j) + 4\sqrt{\frac{2}{\pi}} \dist(\0, \Ch_j) \le 8w(\Ch_j) + 8.
	\end{align}
	As $\Ch_j \subset \Mh_j$ we know by Corollary \ref{BoundOfGaussianWidth2} and \eqref{ProjectedMeanWidth2} that
	\begin{align}
	\begin{split} \label{eq:anothermbound}
	     m &\ge 16 \bar{C} \delta^{-6} (2w(\M) + 2C\sqrt{dj})^2 \ge 16\bar{C} \delta^{-6} (2w(\M) + 4w(\Ch_j) + 10)^2 \\
	     &\ge 4\bar{C} \delta^{-6} (4w(\M) + 8w(\Ch_j) + 20)^2 \ge 4\bar{C} \delta^{-6} (4w(\M) + w(\P_\St(\Ch_j)) + 12)^2 \\
	     &\ge \bar{C} \delta^{-6} (8w(\M) + 2w(\P_\St(\Ch_j)) + 24)^2 \ge \bar{C} \delta^{-6} (w(\M \cup \P_\St(\Ch_j)) + 19)^2 \\
	     &\ge \bar{C} \delta^{-6} \max \left\{ w(\M \cup \P_\St(\Ch_j))^2, \frac{2}{\pi} \right\}.
	\end{split}
	\end{align}
	Hence, Lemma \ref{cjk'Bound} implies
	\begin{align*}
	\| \x - \cj'_{j,k'} \|_2 \le 16 \max \left\{ \left\| \x- \cj'_{j,k_j(\x)} \right\|_2, C_1 2^{-j-1} \right\}.
	\end{align*}
	with probability at least $1 - \mathcal{O}\left( 2^{jd} \exp (-2^{jd}) + \exp(\delta^2 m) \right)$. By GMRA property \eqref{3b} we get
	\begin{align} \label{ApproxHat}
	\| \x - \P_{j,k'}(\x) \|_2 \le \tilde{C}_\x 2^{-j}
	\end{align}
	for some constant $\tilde{C}_\x$.\\
	
	\blue{\textbf{(II)}} Define $\alph = \left\| \Ph_{j,k'}(\x) \right\|_2$. Note that $\Vert \x - \cj'_{j,k'} \Vert_2 \le 4$ as $\x \in \St^{D-1}$ and all $\cj'_{j,k}$ are close to the sphere by assumption. Hence,
	\begin{align*}
	\| \P_{j,k'}(\x) - \Ph_{j,k'}(\x) \|_2 &\le \| \cj'_{j,k'} + \P_{V_{j,k'}}(\x - \cj'_{j,k'}) - \ch_{j,k'} - \P_{\Vh_{j,k'}}(\x - \ch_{j,k'}) \|_2 \\
	&\le \| \cj'_{j,k'} - \ch_{j,k'} \|_2 + \Vert \P_{V_{j,k'}} - \P_{\Vh_{j,k'}} \Vert \Vert \x - \cj'_{j,k'} \Vert_2 + \Vert \cj'_{j,k'} - \ch_{j,k'} \Vert_2 \\
	&\le \frac{2}{12} C_1 2^{-j-2} + \frac{1}{12} C_1 2^{-j-2} \Vert \x - \cj'_{j,k'} \Vert_2 \le \frac{1}{2}C_1 2^{-j-2}
	\end{align*}
	by application of Corollary \ref{ImportantBounds}. This implies $1/4 \le \alph \le 7/4$ as $\x \in \St^{D-1}$ and
	\begin{align} \label{x-Phat}
	\| \x - \Ph_{j,k'}(\x) \|_2 \le \| \x - \P_{j,k'}(\x) \|_2 + \| \P_{j,k'}(\x) - \Ph_{j,k'}(\x) \|_2 \le \tilde{C}_\x 2^{-j} + \frac{1}{2}C_1 2^{-j-2} \le \frac{3}{4}
	\end{align}  
	by \eqref{ApproxHat} and the assumption that $\max\{ \tilde{C}_\x, C_1/4 \} \cdot 2^{-j} \le 1/2$. As before we create the setting of Theorem \ref{NoisyOneBit}.
	
	Define $\tilde{\x} := \Ph_{j,k'}(\x)/\alph \in \St^{D-1}$, $\tilde{\y} := \sign(A\tilde{\x}) = \sign(A\Ph_{j,k'}(\x))$, $K = \conv ( \P_\St (\widehat{P}_{j,k'} \cap \B(\0,2)))$ and $\tau := (2\tilde{C}_\x + \frac{5}{4} C_1) 2^{-j}$. If applied to this, Theorem \ref{NoisyOneBit} would give the desired bound on $\| \tilde{\x} - \x^* \|_2$. We first have to check $d_H(\tilde{\y},\y) \le \tau m$. Recall that $\frac 1 \alph\leq 4$ and as $\alph>0$ one has $\alph w(K)=w(\alph K)$. By applying Corollary \ref{BoundOfGaussianWidth2} again we have that
	\begin{align*}
	m &\ge 64\max\{ C',\bar{C} \} \delta^{-6} (w(\M) + C\sqrt{dj})^2 \ge 4 \bar{C} \delta^{-6} (2w(\M) + 2w(\Mh_j) + 5)^2 \\
	&\ge \bar{C} \delta^{-6} (8w(\M) + 8w(\Mh_j) + 20)^2 \ge \bar{C} \delta^{-6} w \left(\M \cup \frac{\Mh_j}{\alph} \right)^2 \\
	&\ge \bar{C} \delta^{-6} w \left( \left( \M \cup \frac{\Mh_j}{\alph} \right) \cap B(\0,1) \right)^2.
	\end{align*}
	We can now use Theorem \ref{RUT}, Lemma \ref{Lem:NormalizedGeo} and $\| \tilde{\x} - \Ph_{j,k'}(\x) \|_2 = |1 - \alph| \le \| \x - \Ph_{j,k'}(\x) \|_2$ to obtain
	\begin{align*}
	\frac{d_H(\tilde{\y},\y)}{m} &= d_A(\tilde{\x},\x) \le d_G(\tilde{\x},\x) + 2\delta \le \| \tilde{\x} - \x \|_2 + 2\delta \le \| \tilde{\x} - \Ph_{j,k'}(\x) \|_2 + \| \Ph_{j,k'}(\x) - \x \|_2 + 2\delta \\
	&\le 2 \| \Ph_{j,k'}(\x) - \x \|_2 + 2\delta \le 2\tilde{C}_\x 2^{-j} + C_1 2^{-j-2} + 2\delta\qquad\\
	&\le (2\tilde{C}_\x + \frac{5}{4} C_1) 2^{-j} = \tau
	\end{align*}
	with probability at least $1-2\exp(-c \delta^2 m)$. Assuming the above events hold true we can apply Theorem \ref{NoisyOneBit} as by Corollary \ref{BoundOfGaussianWidth2}, in analogy to \eqref{eq:anothermbound} and \eqref{eq:Kwidth}, that
	\begin{align*}
	m &\ge 4 C' \delta^{-6} (2w(\M) + 4w(\M_j) + 4w(\Mh_j) + 10)^2 \\
	&\ge C' \delta^{-6} w(\P_\St (\widehat{P}_{j,k'} \cap \B(\0,2))) \\
	&\ge C' \delta^{-6} w(K)^2
	\end{align*}
	and obtain with probability at least $1-8\exp(-c \delta^2 m)$
	\begin{equation} \label{ApproxHat2}
	\| \tilde{\x} - \x^* \|_2^2 \le \delta \sqrt{\log \left( \frac{e}{\delta} \right)} + 11\tau \sqrt{\log \left( \frac{e}{\tau} \right)}.
	\end{equation}
	
	\blue{\textbf{(III)}} We conclude as in Theorem \ref{ApproximationTheorem}. Recall that $\| \tilde{\x} - \Ph_{j,k'}(\x) \|_2 = |1 - \alpha| \le \| \x - \P_{j,k'}(\x) \|_2 \le (\tilde{C}_\x + \frac{C_1}{8}) 2^{-j}$. By union bound we obtain with probability at least $1 - \mathcal{O}\left( 2^{jd} \exp (-2^{jd}) + \exp(\delta^2 m) \right)$
	\begin{align*}
	\| \x - \x^\ast \|_2 &\le \| \x - \Ph_{j,k'}(\x) \|_2 + \| \Ph_{j,k'}(\x) - \tilde{\x} \|_2 + \| \tilde{\x} - \x^* \|_2 \\
	&\le  2\| \x - \Ph_{j,k'}(\x) \|_2 + \sqrt{ \delta \sqrt{\log \left( \frac{e}{\delta} \right)} + 11 \tau \sqrt{\log \left( \frac{e}{\tau} \right)} }\\
	&\le  {2} \left(\tilde{C}_\x + \frac{C_1}{8} \right) 2^{-j} + \sqrt{C_1} 2^{\frac{-j-1}{2}} \sqrt[4]{\log \left( \frac{e}{C_1 2^{-j-1}} \right)}
	+ \sqrt{22\tilde{C}_\x + \frac{55}{4}C_1} 2^{-\frac{j}{2}} \sqrt[4]{\log \left( \frac{e}{(2\tilde{C}_\x + \frac{5}{4} C_1) 2^{-j}} \right)}\\
	&\le \left( {2} \left(\tilde{C}_\x + \frac{C_1}{8} \right) 2^{-\frac{j}{2}} + \sqrt{C_1} \sqrt[4]{\log \left( \frac{4e}{C_1} \right)}
	+ \sqrt{22\tilde{C}_\x + \frac{55}{4}C_1} \sqrt[4]{\log \left( \frac{2e}{(2\tilde{C}_\x + \frac{5}{4} C_1)} \right)} \right) \sqrt[4]{j} 2^{-\frac{j}{2}}.
	\end{align*}
	As explained in the proof of Theorem \ref{ApproximationTheorem} the last step was simplified for notational reasons.
\end{proof}

\end{document}